\knowledge{\Set}{color=black,autoref} %magenta
\knowledge{\Setfin}{color=black,autoref} %magenta
\knowledge{\Bool}{color=black,autoref} %magenta
\knowledge{\Stone}{color=black,autoref} %magenta
\knowledge\AlgCT{color=black,autoref} %magenta
\knowledge\AlgfinT{color=black,autoref} %magenta
\knowledge\swimcat{color=black,autoref} %magenta
\knowledge{\Kl}{mathsymb} % Kleisli
\knowledge{\P}{mathsymb} % Powerset
\knowledge{\Pfin}{mathsymb} % Finite powerset
\knowledge{\I}{mathsymb} % Inclusion \Setfin ---> \Stone
\knowledge{\J}{mathsymb} % Forgetful \Sfin ---> \Setfin
\knowledge{\wS}{mathsymb} % Profinite monad of the semiring monad
\knowledge{\wT}{mathsymb} % Profinite monad of the monad T
\knowledge{\fgt}{mathsymb} % Forgetful functor Stone ---> Set
\knowledge{\wfgt}{mathsymb} % EM lifting of the forgetful functor
\knowledge{\beta}{mathsymb} % Stone-Cech compactification functor
\knowledge{\wbeta}{mathsymb} % Lifting \w{\beta} of the functor \beta                            
\knowledge{\t}{mathsymb} % Natural transformation T \circ |-| ---> |-|
\knowledge{\thash}{mathsymb} % Natural transformation \beta\circ T
\knowledge{\wh}{mathsymb}   
\knowledge{\wB}{mathsymb}
\knowledge{\Qk}[\Qkp]{mathsymb} % Quantifier wrt k operator on
\knowledge{\Cl}{color=black,autoref} %magenta
\knowledge{\B}{mathsymb} %  Boolean algebra used in Sec VI
\knowledge{\Breg}{mathsymb} % Boolean algebra of regular languages used in Intro
\knowledge{\Bzero}{mathsymb} % Boolean algebra \B_0 from sec VI
\knowledge{\Bprime}{mathsymb} % Boolean algebra B' used in sec VI
\knowledge{\QB}{mathsymb} % Quantified languages
\knowledge{\SB}{mathsymb} % [L,k], finitely supported functions
\knowledge{\SBp}{mathsymb} % [L,k], measures
\knowledge{\DsM}{mathsymb} % \Diamond M
\knowledge{\DsX}{mathsymb} % \Diamond X
\knowledge{\Dphi}{mathsymb} % \Diamond \phi
\knowledge{\DssM}{mathsymb} % \Diamond_S M
\knowledge{\DssX}{mathsymb} % \Diamond_S X
\knowledge{\Dssphi}{mathsymb} % \Diamond_S \phi
\knowledge{\mw}{mathsymb} % marked word
\knowledge{\wz}[\uz]{mathsymb} % word w^0
\knowledge{\len}{mathsymb} % length of a word
\knowledge{\embz}{mathsymb} % embedding w\mapsto w^0
\knowledge{\Dwphi}{mathsymb} % \Diamond \w{\phi}
\knowledge{\inte}{mathsymb} % Integral
\knowledge{\Lex}{mathsymb} % Existentially quantified language L_\exists
\knowledge{\cf}{mathsymb} % Continuous functions [X,Y]
\knowledge{\phiQ}{mathsymb} % map \phi_Q of Section VI
\knowledge{\wphiQ}{mathsymb} % map \w{\phi}_Q of Section VI
\knowledge{\phiz}[\wphiz]{mathsymb} % map \phi_0 of Section VI
\knowledge{\mat}{mathsymb} % Matrices
\knowledge{\Rmon}{mathsymb}
\knowledge{\wRmon}{mathsymb}
\knowledge{\X}{mathsymb} % X, section I-B
\knowledge{\synMon}{mathsymb} % Syntactic monoid
\knowledge{\Ac}{mathsymb} % Action [X,X]--->[TX,TX]
\knowledge{\Dv}{mathsymb} % Diamond of the Vietoris topology
\knowledge{\Lz}{mathsymb} % Languages L_0
\knowledge{\fw}[\fu]{mathsymb} % maps f_w, f_u of Section 6
\knowledge{\As}[\Ats|\Bs]{mathsymb} % A^*, synonim: (A\times 2)^*
\knowledge\Zq{color=black,autoref}  %magenta
\knowledge\Lmodex{color=black,autoref} %magenta
\knowledge\semmon{color=black,autoref} %magenta
\knowledge{\Sm}{color=black,autoref} %magenta
\knowledge\V{color=black,autoref} %magenta
\theoremstyle{plain}
\newtheorem{theorem}{Theorem}[section]
\newtheorem{lemma}[theorem]{Lemma}
\newtheorem{corollary}[theorem]{Corollary}
\newtheorem{proposition}[theorem]{Proposition}
\theoremstyle{definition}
\newtheorem{definition}[theorem]{Definition}
\newtheorem{remark}[theorem]{Remark}
\newtheorem{example}[theorem]{Example}
\newtheorem{notation}[theorem]{Notation}
\newrobustcmd{\C}{\mathcal{C}} % some generic category
\newrobustcmd{\Set}{\kl[\Set]{\mathsf{Set}}} % Sets
\newrobustcmd{\Setfin}{\kl[\Setfin]{{\mathsf{Set}_{f}}}} % Finite sets
\newrobustcmd{\Bool}{\kl[\Bool]{\mathsf{Boole}}} % Boolean algebras
\newrobustcmd{\Stone}{\kl[\Stone]{\mathsf{BStone}}} % Boolean Stone spaces
\newrobustcmd{\Alg}[2]{{#1}^{#2}} % Eilenberg-Moore algebras for the monad (#2) over the base category (#1)
\newrobustcmd{\AlgCT}{\kl[\AlgCT]{{\C}^{T}}}% Eilenberg-Moore algebras for the Set-monad T
\newrobustcmd{\AlgT}{\kl[\AlgCT]{{\mathsf{Set}^{T}}}}% Eilenberg-Moore algebras for the Set-monad T
\newrobustcmd{\Algfin}[2]{{#1}^{#2}_{f}} % Finitely carried Eilenberg-Moore algebras for the monad (#2) over the base category (#1)
\newrobustcmd{\AlgfinT}{\kl[\AlgfinT]{{\Algfin{\mathsf{Set}}{T}}}} % Finitely carried Eilenberg-Moore algebras for the monad (#2) over the base category (#1)
\newrobustcmd{\swimcat}{\kl[\swimcat]{{\sf BiM}}}
\renewrobustcmd{\P}{\kl[\P]{\mathcal{P}}} % Powerset
\newrobustcmd{\Pfin}{\kl[\Pfin]{\mathcal{P}_{f}}} % Finite powerset
\newrobustcmd{\V}{\kl[\V]{\mathcal{V}}} % Vietoris functor
\newrobustcmd{\Sm}{\kl[\Sm]{\mathcal{S}}} % Semiring monad
\newrobustcmd{\I}{\kl[\I]I} % Inclusion \Setfin ---> \Stone
\newrobustcmd{\J}{\kl[\J]J} % Forgetful \Sfin ---> \Setfin
\DeclareMathOperator{\Ran}{\mathrm Ran} % Right Kan extension
\newrobustcmd{\w}[1]{\widehat{#1}} % Profinite monad of the monad (#1)
\newrobustcmd{\wh}{\kl[\wh]{\w{h}}}
\newrobustcmd{\wB}{\kl[\wB]{\w{B}}} %\wB, dual boolean algebra of \wS{X}.
\newrobustcmd\wS {\kl[\wS]{\w{\mathcal S}}} % Profinite monad of the semiring monad
\newrobustcmd{\wT}{\kl[\wT]{\w{T}}} % Profinite monad of the monad T
\newrobustcmd{\fgt}{\kl[\fgt]{|-|}} % Forgetful functor Stone ---> Set
\newrobustcmd{\wfgt}{\kl[\wfgt]{\w{|-|}}} % EM lifting of the forgetful functor
\newrobustcmd{\wbeta}{\kl[\wbeta]{\w{\beta}}} % Functor \w{\beta}                           
\newrobustcmd{\mono}{\rightarrowtail} % Monomorphism
\newrobustcmd{\epi}{\twoheadrightarrow} % Epimorphism
\renewrobustcmd{\t}{\kl[\t]\tau}  % Natural transformation T \circ |-|
\newrobustcmd{\thash}{\kl[\thash]{\tau^\#}}  % Natural transformation
\newrobustcmd{\embz}{\kl[\embz]{(\ )^0}} % embedding w\mapsto w^0                                             
\newrobustcmd{\Dv}{\kl[\Dv]{\scalebox{1.2}{$\diamond$}}} % Diamond in the Vietoris topology
\newrobustcmd{\Ds}{\Diamond}
\newrobustcmd{\bDs}{\scalebox{1.2}{$\Diamond$}}
\newrobustcmd{\boxa}{\scalebox{0.7}{\raisebox{1pt}{$\Box$}}}
\newrobustcmd{\X}{\kl[\X]{X}} % X, quotient of \beta(\As) in section I-B
\newrobustcmd{\As}{\kl[\As]{{A^*}}} % A^*
\newrobustcmd{\Bs}{\kl[\As]{{B^*}}} % B^*
\newrobustcmd{\Ats}{\kl[\As]{{(A\times 2)^*}}} % (A\times 2)^*
\newrobustcmd{\Qk}{\kl[\Qk]{{\mathcal Q}_k}} % Quantifier wrt k operator on
\newrobustcmd{\Qkp}{\kl[\Qkp]{{\mathcal Q}_{k_1}}} % Quantifier wrt k_1 operator on
\newrobustcmd{\Cl}{\kl[\Cl]{\it Clop}} % clopens
\newrobustcmd{\B}{\kl[\B]{\mathcal{B}}} % Boolean algebra used in Sec VI
\newrobustcmd{\Bzero}{\kl[\Bzero]{\mathcal{B}_0}} % Boolean algebra used in Sec VI
\newrobustcmd{\Breg}{\kl[\Breg]{\mathcal{B}}} % Boolean algebra of regular languages used in Intro
\newrobustcmd{\Bprime}{\kl[\Bprime]{\mathcal{B}'}} % Boolean algebra B' used in sec VI
\newrobustcmd{\QB}{\kl[\QB]{\mathcal{QB}}} % Quantified languages
\newrobustcmd{\DssM}{\kl[\DssM]{\Ds_S M}} % \Diamond_S M
\newrobustcmd{\DssX}{\kl[\DssM]{\Ds_S X}} % \Diamond_S X
\newrobustcmd{\Dssphi}{\kl[\Dssphi]{\Ds_S \phi}} % \Diamond_S \phi
\newrobustcmd{\DsM}{\kl[\DsM]{\Ds M}} % \Diamond M
\newrobustcmd{\DsX}{\kl[\DsM]{\Ds X}} % \Diamond X
\newrobustcmd{\Dphi}{\kl[\Dphi]{\Ds \phi}} % \Diamond \phi
\newrobustcmd{\Dwphi}{\kl[\Dwphi]{\Ds \tilde{\phi}}} % \Diamond \w{\phi}
\newrobustcmd{\phiQ}{\kl[\phiQ]{\phi_Q}} % map \phi_Q of Section VI
\newrobustcmd{\wphiQ}{\kl[\wphiQ]{\tilde{\phi}_Q}} % map \w{\phi}_Q of Section VI 
\newrobustcmd{\phiz}{\kl[\phiz]{\phi_0}} % map \phi_0 of Section VI
\newrobustcmd{\wphiz}{\kl[\wphiz]{\tilde{\phi}_0}} % map \w{\phi}_0 of Section VI 
\newrobustcmd{\mw}[1]{\kl[\mw]{w^{({#1})}}} % marked word. The marking is in position #1
\newrobustcmd{\wz}{\kl[\wz]{w^{0}}} % word w^0
\newrobustcmd{\uz}{\kl[\wz]{u^{0}}} % word u^0
\newrobustcmd{\len}[1]{\kl[\len]{|{#1}|}} % length of the word #1
\newrobustcmd \inte {\kl[\inte]{\int}} % Integral
\newrobustcmd{\Lz}{\kl[\Lz]{L_0}} % Languages L_0
\newrobustcmd{\fw}{\kl[\fw]{f_w}} % finitely supported function f_w of Section 6
\newrobustcmd{\fu}{\kl[\fw]{f_u}} % finitely supported function f_u of Section 6
\newcommand{\wphi}{\tilde{\phi}} % \Diamond \phi
\newrobustcmd{\N}{\mathbb{N}} % Natural numbers
\newrobustcmd{\Z}{\mathbb{Z}} % Integers
\newrobustcmd{\id}{\mathsf{id}} % identity map
\newrobustcmd{\op}{\mathsf{op}} % opposite category
\DeclareMathOperator{\colim}{\mathrm{colim}} % colimit
\newrobustcmd{\BA}{\sc{BA}} % Boolean algebra
\newrobustcmd{\BAs}{{\sc BA}s} % Boolean algebras
\newrobustcmd{\BAq}{{\sc BA}q} % Boolean algebras closed u quotients
\newrobustcmd{\swim}{\kl[Boolean space with an internal monoid]{{\sc
      B}{\rm i}{\sc M}}}
\newrobustcmd{\swims}{\kl[Boolean space with an internal monoid]{{\sc
    B}{\rm i}{\sc M}s}}
\renewrobustcmd{\bar}{\overline}
\newrobustcmd{\st}{\mathsf{st}} % strength of the monad
\newrobustcmd{\ev}{\mathsf{ev}} % evaluation
\newrobustcmd{\Rmon}{\kl[\Rmon]{\mathsf{R}}}
\newrobustcmd{\wRmon}{\kl[\wRmon]{\w{\mathsf{R}}}}
\newrobustcmd{\mat}[2]{\kl[\mat]{\mathcal{M}_{#1}{(#2)}}} % Matrices #1x#1 with coefficients in #2
\newrobustcmd{\SB}[2]{\kl[\SB]{[{#1},{#2}]}} % [L,k], finitely supported functions
\newrobustcmd{\SBp}[2]{\kl[\SBp]{\overline{[{#1},{#2}]}}} % [L,k], measures
\newrobustcmd{\cf}[2]{\kl[\cf]{[{#1},{#2}]}} % [X,Y], continuous functions
\newrobustcmd{\Ac}[2]{\kl[\Ac]{{#1}_{{#2},{#2}}}} % T_{X,X}, action [X,X]--->[TX,TX]
\newrobustcmd{\TR}{\mathcal{T}_R} \newrobustcmd{\modexists}{{\exists_{p
      \textrm{ mod }q}}} \newrobustcmd{\Kl}{\kl[\Kl]{\mathsf{Kl}}}
\newrobustcmd{\synMon}{\kl[\synMon]{\As /{\sim_L}}} % Syntactic monoid of the language L
\renewrobustcmd{\Im}{\mathsf{Im}}
\newrobustcmd\Lex{\kl[\Lex]{L_\exists}}
\newrobustcmd\Lmodex{\kl[\Lmodex]{L_\modexists}}
\newrobustcmd\Zq{\kl[\Zq]{\mathbb Z_q}}
\renewcommand{\epsilon}{\varepsilon}
\begin{document}
\title{Quantifiers on languages and codensity monads\thanks{This project has received funding from the European
    Research Council (ERC) under the European Union's Horizon 2020
    research and innovation programme (grant agreement No.670624).
    The third author also acknowledges financial support from Sorbonne Paris Cit{\'e} (PhD agreement USPC IDEX -- REGGI15RDXMTSPC1GEHRKE).}}
\author[1]{Mai Gehrke}
\author[2]{Daniela Petri{\c s}an}
\author[3]{Luca Reggio}
\affil[1]{\small Laboratoire J.\ A.\ Dieudonn{\'e}, CNRS and Universit{\'e} C{\^o}te d'Azur, France, \textnormal{email: \texttt{mai.gehrke@unice.fr}}}
\affil[2]{\small IRIF, CNRS and Universit{\'e} Paris Diderot, France, \textnormal{email: \texttt{petrisan@irif.fr}}}
\affil[3]{\small Laboratoire J.\ A.\ Dieudonn{\'e}, Universit{\'e} C{\^o}te d'Azur, France and \newline Institute of Computer Science, Czech Academy of Sciences, Czech Republic, \textnormal{email: \texttt{reggio@cs.cas.cz}}}
\date{}

\maketitle

\begin{abstract}
  This paper contributes to the techniques of topo-algebraic
  recognition for languages beyond the regular setting as they relate
  to logic on words.  In particular, we provide a general construction
  on recognisers corresponding to adding one layer of various kinds of
  quantifiers and prove a corresponding Reutenauer-type theorem.  Our main
  tools are \kl{codensity monads} and duality theory.  Our
  construction hinges on a measure-theoretic characterisation of the
  profinite monad of the \kl{free $S$-semimodule monad} for finite and
  commutative \kl{semirings} $S$, which generalises our earlier
  insight that the \kl{Vietoris monad} on \kl{Boolean spaces} is the
  \kl{codensity monad} of the finite powerset functor.
\end{abstract}

\section{Introduction}
It is well known that the combinatorial property of a language of
being given by a star-free regular expression can be described both by
algebraic and by logical means. Indeed, on the algebraic side, the
star-free languages are exactly those languages whose syntactic
monoids do not contain any non-trivial groups as subsemigroups. On the logical side,
properties of words can be expressed in predicate logic by considering
variables as positions in the word, relation symbols asserting that a
position in a word has a certain letter of the alphabet, and possibly
additional predicates on positions (known as numerical predicates).
As shown by McNaughton and Papert in \cite{MP1971}, the class of
languages definable by first-order sentences over the numerical
predicate $<$ consists precisely of the star-free ones.

The theory of formal languages abounds with such results showing the
strong interplay between logic and algebra. For instance, Straubing,
Th\'erien and Thomas introduced in \cite{STRAUBINGTT} a class of additional
quantifiers, the so-called modular quantifiers $\modexists$. (Recall
that a word satisfies a formula $\modexists x.\varphi(x)$ provided the
number of positions $x$ for which $\varphi(x)$ holds is congruent to
$p$ modulo $q$). There it is shown for example that the languages
definable using modular quantifiers of modulus $q$ are exactly the
languages whose syntactic monoids are solvable groups of cardinality
dividing a power of $q$.

Studying modular quantifiers is relevant for tackling open problems in
Boolean circuit complexity, see for example~\cite{StraubingT08} for a
discussion.  Since Boolean circuit classes contain non-regular
languages, expanding the automata theoretic techniques beyond the
regular setting is also relevant for addressing these problems.

A fundamental tool in studying the connection between algebra and
logic in this setting is the availability of constructions on monoids
which mirror the action of quantifiers. That is, given the syntactic
monoid for a language with a free variable one wants to construct a
monoid which recognises the quantified language. Constructions of this
type abound, and are all versions of semidirect products, with the
block product playing a central r\^ole as it allows one to construct
recognisers for many different quantifiers \cite{TeTh07}.

The present article is an expanded and improved version of the publication \cite{GPR2017}, where the main results were first announced. 
Its purpose is to expand the techniques available for monoids and provide
the topo-algebraic characterisation of adding one layer of various kinds of
quantifiers, beyond the regular setting.  A first step was made in~\cite{GehrkePR16}, where a) we introduced a topological
notion of recogniser, that will be motivated in the next subsection,
and b) we gave a notion of unary Sch\"utzenberger product that
corresponds, on the recogniser side, to adding one layer of the existential quantifier for
arbitrary languages of words.

In Section~\ref{subsec:intro-dual} we provide a gentle introduction
and motivate the duality-theoretic approach to language
recognition. In Section~\ref{subsec:intro-cod} we present
\kl{codensity monads}, our tool of choice for systematically obtaining
the relevant topological constructions, and we briefly discuss related
work. Finally, in Section~\ref{subsec:intro-cont} we present the main
contributions of this paper and provide an overview of the remainder of the paper.

\subsection{Duality for language recognition}\label{subsec:intro-dual}
Stone duality plays an important r\^ole and has a long tradition in
many areas of semantics, e.g.\ in domain theory and in modal
logic. In~\cite{Pippenger1997} Pippenger made explicit the link between
Stone duality and regular languages, by proving that the Boolean
algebra of regular languages over a finite alphabet $A$ is the dual of
the free profinite monoid on $A$. Yet, only recently, starting with
the papers~\cite{GehrkeGP10, GehrkeGP08}, the deep connection between
this field and formal language theory started to emerge. In these
papers a new notion of language recognition, based on topological
methods, was proposed for the setting of non-regular
languages. Moreover, the scene was set for a new duality-theoretic
understanding of the celebrated Eilenberg-Reiterman theorems,
establishing a connection between varieties of languages,
pseudo-varieties of finite algebras and profinite equations. This led
to an active research area where categorical and duality-theoretic
methods are used to encompass notions of language recognition for
various automata models. See for example the monadic approach to
language recognition put forward by Boja{\'n}czyk~\cite{Bojanczyk15},
or the series of papers on a category-theoretic approach to
Eilenberg-Reiterman theory (\hspace{1sp}\cite{AdamekMUM15} and
references herein).

\AP Let us illustrate the interplay between duality theory and the
theory of regular languages by explaining the duality between the
syntactic monoid of a regular language $L$ on a finite alphabet $A$,
and the Boolean subalgebra $\intro*\Breg\hookrightarrow \P(\As)$
generated by the quotients of $L$, i.e., by the sets
\[
w^{-1}Lv^{-1}=\{u\in \As\mid wuv \in L\}
\]
for $w,v\in A^*$. In this setting one makes use only of the finite
duality between the category of finite Boolean algebras and the
category of finite sets which, at the level of objects, asserts that
each finite Boolean algebra is isomorphic to the powerset of its
atoms.

Since the language $L$ is regular it has only finitely many
quotients, say
\[
\{w_1^{-1}Lv_1^{-1},\ldots, w_n^{-1}Lv_n^{-1}\}.
\]
The finite Boolean algebra generated by this set has as atoms the
non-empty subsets of $\As$ of the form
\[
\bigcap_{i\in I}w_i^{-1}Lv_i^{-1} \cap \bigcap_{j\in
  J}(w_j^{-1}Lv_j^{-1})^c
\]
for some partition $I\cup J$ of $\{1,\ldots,n\}$. \AP We clearly see
that such atoms are in one-to-one correspondence with the equivalence
classes of the Myhill syntactic congruence $\sim_L$, and thus with the
elements of the syntactic monoid $\intro{\synMon}$ of $L$.

\AP However, the more interesting aspect of this approach is that one
can also explain the monoid structure of $\synMon$ and the syntactic
morphism in duality-theoretic terms. For this, we have to recall first
the duality between the category of sets and the category of complete
atomic Boolean algebras. At the level of objects, every complete
atomic Boolean algebra is isomorphic to the \AP powerset of its
atoms. So the dual of $\As$ is $\intro*\P(\As)$, but the duality also
tells us that quotients on one side are turned into embeddings on the
other. Thus, we have the duality between the following morphisms
\[
\begin{tikzcd}
  \Breg\ar[r,hookrightarrow] &\P(\As) & {|} & \As \ar[r,two heads] &
  \synMon.
\end{tikzcd}
\]

Also, the left action of $\As$ on itself given by appending a word $w$
on the left corresponds, on the dual side, to a left quotient
operation (which is a right action):
\[
\begin{tikzcd}[row sep=0pt]
  \P(\As)\ar[r,"\Lambda_w"] &\P(\As) &{}\ar[d,dash]  & \As \ar[r,"l_w"] & \As \\
  U\ar[r,mapsto] & w^{-1}U &{} & v\ar[r,mapsto] & wv
\end{tikzcd}
\]
Since the Boolean algebra $\Breg$ is closed under quotients, and thus
we have commuting squares as the left one in diagram~\eqref{eq:14}, by
duality we obtain a left action of $\As$ on $\synMon$. By an analogous
argument, one also obtains a right action of $\As$ on $\synMon$ and
the two actions commute. It is a simple lemma, see~\cite{GehrkePR16},
that since $\synMon$ is a quotient of $\As$ and it is equipped with
commuting left and right $\As$-actions (called in loc.\ cit.\ an
\emph{$\As$-\kl{biaction}}), then one can uniquely define a monoid
multiplication on $\synMon$ so that the quotient
$\As\twoheadrightarrow \synMon$ is a monoid morphism.
\begin{equation}
  \label{eq:14}
  \begin{tikzcd}
    \Breg\ar[r,hookrightarrow]\ar[d,swap, dashed,"\Lambda_w"] &\P(\As)\ar[d,"\Lambda_w"] &{}\ar[d,dash] &  \As\ar[d,"l_w", swap] \ar[r,two heads] & \synMon\ar[d, dashed,"l_w"] \\
    \Breg\ar[r,hookrightarrow] &\P(\As) &{}& \As \ar[r,two heads] &
    \synMon
  \end{tikzcd}
\end{equation}

This approach paves the way to a notion of recogniser and syntactic object 
pertinent for non-regular languages. In the case of a non-regular language  $L$, 
the Boolean algebra $\Breg$ spanned by the quotients of $L$
is no longer finite, so the finite or discrete duality theorems we have employed
previously are no longer applicable. Instead, we use the full power of
Stone duality, which establishes the dual equivalence between the
category of Boolean algebras and the category $\Stone$ of \kl{Boolean
  (Stone) spaces}, that is, zero-dimensional compact Hausdorff 
  spaces. In this setting, the dual
of $\P(\As)$ is the \kl{Stone-\v{C}ech compactification}
$\kl{\beta}(\As)$ of the discrete space $\As$.  \AP The embedding of
$\Breg$ into $\P(\As)$ is turned by the duality theorem into a
quotient of topological spaces as displayed below, where we denote the dual of $\Breg$ by
$\intro{\X}$.
\[
\begin{tikzcd}
  \Breg\ar[r,hookrightarrow] &\P(\As) & {|} & \kl{\beta}(\As)
  \ar[r,two heads] & \X
\end{tikzcd}
\]
The syntactic monoid of the language $L$, now infinite, can be seen as
a dense subset of $\X$, and is indeed the image of the composite map
$\As\hookrightarrow \kl{\beta}(\As) \twoheadrightarrow \X$ where the
first arrow is the embedding of $\As$ in its \kl{Stone-\v{C}ech
  compactification}. We thus obtain a commuting diagram as follows.
\[
\begin{tikzcd}
  \kl{\beta}(\As)\ar[r,two heads] & \X \\
  \As\ar[u,hook]\ar[r,two heads] & \synMon\ar[u, hook]
\end{tikzcd}
\]
Furthermore, one can show that the syntactic monoid acts
(continuously) on $\X$ both on the left and on the right, and these actions
commute. This led us, in~\cite{GehrkePR16}, to the definition of a 
\emph{\kl{Boolean space with
    an internal monoid}} ({\swim}) as a suitable notion for language
recognition beyond the regular setting. We recall this (in fact a
small variation of it) in Definition~\ref{def:swim}.

\subsection{Profinite monads}\label{subsec:intro-cod}

\AP Profinite methods have a long tradition in language theory, see
for example~\cite{ALMEIDA19981}. To accommodate these tools in his
monadic approach to language recognition,
Boja\'nczyk~\cite{Bojanczyk15} has recently introduced a construction
transforming a monad $T$ on $\intro*\Set$ (the category of sets and
functions) into a so-called \emph{profinite monad}, again on the
category of sets. The latter monad allowed him to study in this generic framework
the profinite version of the objects modelled by $T$, such as
profinite words, profinite countable chains and profinite trees.

A very much related construction of a \emph{profinite monad of $T$}
was introduced in~\cite{ChenAMU16}, this time as a monad on the
category of \kl{Boolean spaces}, obtained as a so-called
\emph{\kl{codensity monad}} for a functor from the category of
finitely carried \kl{$T$-algebras} to \kl{Boolean spaces}, that we
describe in the next section.

The \kl{codensity monad} is a standard construction in category
theory, going back to the work of Kock in the 60s.  It is well known
that any right adjoint functor $G$ induces a monad obtained by
composition with its left adjoint, and this is exactly the
\kl{codensity monad} of $G$. In general, the \kl{codensity monad} of a
functor which is not necessarily right adjoint, provided it exists, is
the best approximation to this phenomenon.  \AP For example the
\kl{codensity monad} of the forgetful functor
$\intro*\fgt\colon\Stone\to\Set$ on \kl{Boolean spaces} is the
ultrafilter monad on $\Set$ obtained by composition with its left
adjoint $\kl{\beta}\colon\Set\to\Stone$. \AP The same monad has yet
another description as a \kl{codensity monad}, this time for the
inclusion of the category $\intro*\Setfin$ of finite sets into $\Set$,
a fact proved in~\cite{KENNISON1971317} and recently revisited in the
elegant paper~\cite{Leinster2010}.

\AP The starting point of the present paper is the observation that
the unary Sch\"utzenberger product $(\Ds X, \Ds M)$ of a {\swim} $(X,M)$
from our paper~\cite{GehrkePR16} hinges, at a deeper level, on the
fact that the \emph{\kl{Vietoris monad}} $\V$ on the category of
\kl{Boolean spaces} (which is heavily featured in that construction)
is the profinite monad of the finite powerset monad $\intro\Pfin$ on
$\Set$. Recall that any \kl{Boolean space} $X$ is the cofiltered (or
inverse) limit of its finite quotients $X_i$. Then one can check that
the Vietoris space $\V X$ can be obtained as the cofiltered limit of
the finite sets $\Pfin X_i$.

In order to find suitable recognisers for languages quantified by,
e.g., modular existential quantifiers, we need a slightly different
construction than $(\Ds X,\Ds M)$ of~\cite{GehrkePR16}. Specifically, we
observe that the semantics of these quantifiers can be modelled, at
least at the level of finite monoids, by the \kl{free $S$-semimodule
  monad} $\Sm$, for a suitable choice of the \kl{semiring} $S$.  It should
be noted that $\Pfin$ is also an instance of the \kl{free
  $S$-semimodule monad}, for the Boolean semiring $\2$.  To obtain
corresponding constructions at the level of \kl{Boolean spaces with
  internal monoids}, one needs to understand the analogue of the
\kl{Vietoris} construction for the monad $\Sm$. And the obvious
candidate, from a category-theoretic perspective, is the \kl{codensity
  monad} of $\Sm$.

\subsection{Contributions}\label{subsec:intro-cont}

This paper contributes to the connection between the topological
approach to language recognition and logical formalisms beyond the
setting of regular languages, and furthers, along the way, the study
of profinite monads in formal language theory. 

The main result of
Section~\ref{s:extending-set-bims} allows one to extend finitary
commutative $\Set$-monads to the category of \kl{Boolean spaces with
  internal monoids}. A particular instance of this result is presented
in Section~\ref{sec:measures-s}, where duality-theoretic insights are
used to provide a concrete and useful description of the
constructions involved in terms of \kl{measures}. In
Section~\ref{s:recog-transducers} we develop a generic approach for
mirroring operations on languages, such as modular quantifiers,
associating to a {\swim} $(X,M)$ a new {\swim} $(\DssX,\DssM)$. Finally,
   Section~\ref{sec:duality-expl} explains how
these constructions are indeed canonical and provides a
Reutenauer-type result characterising the Boolean algebra generated by 
the languages recognised by $(\DssX,\DssM)$.

\section{Preliminaries}
\label{sec:prelim}
\subsection{Logic on words}
\label{sec:logic-words}
\AP Fix an arbitrary finite set $A$, and write $\intro{\As}$ for the
free monoid over $A$.  A word over the alphabet $A$ (an \emph{$A$-word}, for short) is
an element $w\in \As$. In the logical approach to language theory, the word $w$ is regarded as a (relational) structure on the
set $\{1,\ldots, \len{w}\}$, where $\intro{\len{w}}$ denotes the
length of the word, equipped with a unary relation $P_a$ for each
$a\in A$ which singles out the positions in the word where the letter
$a$ appears.
If $\varphi$ is a sentence (i.e., a formula in which every variable is
in the scope of a quantifier) in a language interpretable over words,
we denote by $L_{\varphi}$ the set of words satisfying $\varphi$.

\AP Assume now $\varphi(x)$ is a formula with a free first-order
variable $x$ (intuitively this means that $\varphi(x)$ can talk about
positions in the word).  In order to be able to interpret the free
variable, we consider an extended alphabet $A\times 2$ which we think
of as consisting of two copies of $A$, that is, we identify $A\times
2$ with the set $A\cup\{a'\mid a\in A\}$, and we call the elements of
the second copy of $A$ \intro{marked letters}.  Assuming $w=a_1\ldots
a_n$ and $1\le i\leq\len{w}$, we write $\intro{\mw{i}}$ for the word
$a_1\ldots a_{i-1}a'_ia_{i+1}\ldots a_n$, i.e.\ for the word in $\Ats$
having the same shape as $w$ but with the letter in position $i$
marked, and $\intro{\wz}$ for the word $a_1\ldots a_n$ seen as a word
in $\Ats$.  Then we define $L_{\varphi(x)}$ as the set of all words in
the alphabet $A\times 2$ \emph{with only one \kl{marked letter}} such
that the underlying word in the alphabet $A$ satisfies $\varphi$ when
the variable $x$ points at the marked position.

\AP Now, given $L\subseteq \Ats$, denote by $\intro{\Lex}$ the
language consisting of those words $w=a_1\ldots a_n$ over $A$ such
that there exists $1\le i\le \len{w}$ with $a_1\ldots
a_{i-1}a'_ia_{i+1}\ldots a_n\in L$. Observe that $L=L_{\varphi(x)}$
entails $\Lex=L_{\exists x.\varphi(x)}$, thus recovering the usual
existential quantification.

\AP Among the generalisations of the existential quantifier are the
\intro{modular quantifiers}. Consider the ring $\intro*\Zq$ of
integers modulo $q$, and pick $p\in \Zq$. We say that an $A$-word $w$
satisfies the sentence $\modexists x.\varphi(x)$ if there exist $p$
modulo $q$ positions in $w$ for which the formula $\varphi(x)$
holds. Moreover, for an arbitrary language $L\subseteq \Ats$, we
define $\intro*\Lmodex$ as the set of $A$-words $w=a_1\ldots a_n$ such
that the cardinality of the set
\begin{align}\label{eq:set-of-witnesses}
  \{1\le i\le \len{w}\mid a_1\ldots a_{i-1}a'_ia_{i+1}\ldots a_n\in
  L\}
\end{align}
is congruent to $p$ modulo $q$. Clearly, if the language $L$ is
defined by the formula $\varphi(x)$, then $\Lmodex$ is defined by the
formula $\modexists x.\varphi(x)$.

\AP Finally, generalising the preceding situations, we can consider an
arbitrary \kl{semiring} $(S,+,\cdot,0_S,1_S)$ and an element $k\in S$. For
$L\subseteq \Ats$, an $A$-word $w=a_1\ldots a_n$ belongs to the
quantified language, denoted by $\intro*\Qk(L)$, provided that
\begin{align*}
  \underbrace{1_S+\cdots+ 1_S}_{m {\text{ times}}}=k,
\end{align*}
where $m$ is the cardinality of the set in
\eqref{eq:set-of-witnesses}.

\subsection{Stone duality and the \kl{Vietoris hyperspace}}
\AP Stone duality for Boolean algebras \cite{Stone1936} establishes a
categorical equivalence between the category of Boolean algebras 
and their homomorphisms, and the opposite of the
category $\intro*\Stone$ of \kl{Boolean (Stone) spaces} and continuous
maps between them.

\AP A \intro{Boolean space} is a compact Hausdorff space that admits a
basis of clopen (i.e., simultaneously closed and open) subsets.
There is an obvious forgetful functor $\fgt\colon\Stone\to\Set$. When
clear from the context, we will omit writing $\fgt$.

\AP The dual of the \kl{Boolean space} $X$ is the Boolean algebra
$\intro\Cl(X)$ of its clopen subsets, equipped with set-theoretic
operations. Conversely, given a Boolean algebra $B$, the dual space
$X$ may be taken either as the set of ultrafilters on $B$ (i.e., those
proper filters $F$ satisfying $a\in F$ or $\neg a\in F$ for every
$a\in B$) or as the Boolean algebra homomorphisms $h\colon B\to 2$
equipped with the topology generated by the sets
\begin{align*}
  \w{a}:=\{F\mid a\in F\}\cong\{h\mid h(a)=1\}, \ \text{for} \ a\in B.
\end{align*}

\AP An example of a \kl{Boolean space}, central to our treatment, is the
\intro{Stone-\v{C}ech compactification} of an arbitrary set $K$. This
is the dual space of the Boolean algebra $\P K$, and is denoted by
$\intro{\beta} K$. It is well known that the assignment $K\mapsto
\kl{\beta} K$ induces a functor $\kl{\beta}\colon \Set\to \Stone$ which
is left adjoint to the forgetful functor $\fgt\colon\Stone\to\Set$.
\AP Another functor, which played a key r\^ole in \cite{GehrkePR16}
and will serve here as a leading example, is the \intro{Vietoris
  functor} $\intro*\V\colon \Stone\to\Stone$. \AP Given a \kl{Boolean
  space} $X$, consider the collection $\kl{\V} X$ of all closed
subsets of $X$ equipped with the topology generated by the clopen
subbasis
\begin{align*}
  \{\Dv V\mid V\in \Cl(X)\}\cup\{(\Dv V)^c\mid V\in\Cl(X)\},
\end{align*}
where $\intro{\Dv} V:=\{K\in\V X \mid K\cap V\neq \emptyset\}.$
The resulting space is called the \emph{\kl{Vietoris
    \textup{(}hyper\textup{)}space}} of $X$, and is again a
\kl{Boolean space}. Further, if $f\colon X\to Y$ is a morphism in
$\Stone$, then so is the direct image function $\V X\to \V Y, \
K\mapsto f[K]$. In fact, it is well known that this is the functor
part of a monad $\V$ on $\Stone$.
The \kl{Vietoris hyperspace} of an arbitrary topological space was
first introduced by Vietoris~\cite{Vietoris1923}; for a complete account,
including results stated here without proof, see~\cite{Michael1951}.

\subsection{Boolean spaces with internal monoids}
In this section we give the definition of a \emph{\kl{Boolean space
    with an internal monoid}}, or {\swim} for short (see
Definition~\ref{def:swim} below), a topological recogniser well-suited
for dealing with non-regular languages.
In~\cite{GehrkePR16} a \emph{\kl{Boolean space with an internal
    monoid}} was defined as a pair $(X,M)$ consisting of a \kl{Boolean
  space} $X$, a dense subspace $M$ equipped with a monoid structure,
and a \AP\intro{biaction} (i.e., a pair of compatible left and right actions)
of $M$ on $X$ with continuous components extending the obvious
\kl{biaction} of $M$ on itself.
Here we use a small variation and simplification of this
notion. Instead of imposing that the monoid is a dense subset of the
space, we require a map from the monoid to the space with dense image.

\AP In what follows, for a \kl{Boolean space} $X$ we will denote by
$\intro{\cf{X}{X}}$ the set of continuous endofunctions on $X$, which
comes with the obvious monoid multiplication $\circ$ given by
composition. Given a monoid $(M,\cdot)$, we will denote by $r\colon
M\to M^M$ and $l\colon M\to M^M$ the two maps induced from the monoid
multiplication via currying, which correspond to the obvious right,
respectively left action of $M$ on itself.

\begin{definition}\label{def:swim}
  \AP A \intro{Boolean space with an internal monoid}, or a \swim, is
  a tuple $(X,M,h,\rho,\lambda)$, where $X$ is a \kl{Boolean space},
  $M$ is a monoid, $h\colon M\to X$, $\lambda\colon M\to\cf{X}{X}$ and
  $\rho\colon M\to\cf{X}{X}$ are functions such that $h$ has a dense
  image and for all $m\in M$ the following diagrams commute in $\Set$.
  \begin{equation}
    \label{eq:1}
    \begin{tikzcd}
      M\ar[r, "h"]\ar[d,"l(m)", swap] & X\ar[d,"\lambda(m)"] & & M\ar[r, "h"]\ar[d,"r(m)",swap] & X\ar[d,"\rho(m)"] \\
      M\ar[r, "h"] & X & & M\ar[r, "h"] & X
    \end{tikzcd}
  \end{equation}
  If no confusion arises we write $(X,M)$, or even just $X$, for the
  {\swim} $(X,M,h,\rho,\lambda)$.  \AP A \intro{morphism} between two
  {\swims} $X$ and $X'$ is a pair $(\tilde{\psi},\psi)$ where
  $\tilde{\psi}\colon X\to X'$ is a continuous map and $\psi\colon
  M\to M'$ is a monoid morphism such that $\tilde{\psi}\circ h=h'\circ\psi$. Note
  that since the image of $h$ is dense in $X$, given $\psi$,
  $\tilde{\psi}$ is uniquely determined if it exists. Accordingly, we
  will sometimes just write $\psi$ to designate the pair as well as
  each of its components.  We denote the ensuing category of {\swims}
  by $\intro{\swimcat}$.
\end{definition}

\begin{remark} 
Notice that if $(X,M)$ is a {\swim} of the form $(\kl{\beta}(\As),\As)$, and 
$(X',M')$ is any {\swim}, then every monoid morphism $\psi\colon\As\to M'$
yields a (unique) continuous extension $\tilde{\psi}\colon\kl{\beta}(\As)\to X'$
making the pair $(\tilde{\psi},\psi)$ into a {\swim} \kl{morphism}. Thus {\swim} 
\kl{morphisms} $(\kl{\beta}(\As),\As)\to(X',M')$ are in one-to-one correspondence 
with monoid morphisms $\As\to M'$. For this reason we will often treat these two things as one and the same.
\end{remark}

\begin{remark} From Definition \ref{def:swim} it follows that $\rho$ and
  $\lambda$ induce in fact commuting right and left $M$-actions on
  $X$, so that $h$ is an $M$-\kl{biaction} morphism. Indeed, since $h$
  has a dense image in $X$ it follows that $\rho(m)$ and $\lambda(m)$
  are the unique extensions on $X$ of $r(m)$, respectively $l(m)$. But
  the left and right actions of $M$ on itself commute, hence $\rho$
  and $\lambda$ must enjoy the same properties.
  We also obtain that $(X, \Im(h))$ is a Boolean space with an
  internal monoid exactly as defined in~\cite{GehrkePR16}.
\end{remark}
\begin{remark}\label{rem:reformulation-bim-def}
  An equivalent way of saying that the diagrams in~\eqref{eq:1} commute for
  all $m\in M$ is to say that the following diagrams commute in
  $\Set$.
  \begin{equation*}
    \begin{tikzcd}
      {\cf{X}{X}}\ar[r,"-\circ h"] & X^M  & {\cf{X}{X}}\ar[r,"-\circ h"] & X^M \\
      M\ar[u,"\lambda"]\ar[r,"l"] & M^M\ar[u,"h\circ -", swap] &
      M\ar[u,"\rho"]\ar[r,"r"] & M^M\ar[u,"h\circ -", swap]
    \end{tikzcd}
  \end{equation*}
  This will come handy in the proof of Theorem~\ref{thm:lift}.
\end{remark}
To conclude, we recall the associated notion of recognition. Under the
bijection between subsets of a given set $K$ and clopens of its
\kl{Stone-\v{C}ech compactification} $\kl{\beta} K$, we write $\w{L}$
for the clopen associated with the subset $L\in \P K$.
\begin{definition}\label{def:recognition-via-bims}
  \AP Let $A$ be a finite alphabet and $L\in \P(\As)$. A
  \kl{morphism} of {\swims} $\psi\colon (\kl{\beta}(\As),\As)\to(X,M)$
  \intro{recognises} the language $L$ if there is a clopen $C\subseteq
  X$ such that $\psi^{-1}(C)=\widehat{L}$.
  Moreover, we say that the {\swim} $(X,M)$ recognises the language
  $L$ if there exists a {\swim} \kl{morphism}
  $(\kl{\beta}(\As),\As)\to(X,M)$ \kl{recognising} $L$.
  Finally, if $\B\hookrightarrow \P(\As)$ is a Boolean subalgebra, the
  {\swim} $(X,M)$ is said to recognise $\B$ provided that it
  \kl{recognises} each $L\in \B$.
\end{definition}

Equivalently, a language $L\in \P(\As)$ is recognised by the morphism
of {\swims} $\psi\colon (\kl{\beta}(\As),\As)\to(X,M)$
when there exists a clopen $C\subseteq X$ such that
$L=\psi^{-1}(h^{-1}(C))$.  The topology on $X$ specifies which subsets
of $M$ can be used for recognition, namely the preimages under $h$ of
the clopens in $X$. However, when $M$ is finite so is $X$. In fact, in
this case $X$ has the same carrier set as $M$ and is equipped with the
discrete topology, therefore in the regular setting we recover the usual
notion of recognition.

\subsection{Monads and algebras}
We assume the reader is familiar with the basic notions of category
theory, and especially with monads as a categorical approach to
general algebra. Concerning the latter, we refer the reader to, e.g.,
\cite[Chapter~VI]{MacLane} or~\cite[Chapters~3--4]{Borceux2}.

\AP Consider a monad $(T,\eta,\mu)$ on a category $\C$. Recall that an
\intro{Eilenberg-Moore algebra} for $T$ (or a
\emph{\kl{$T$-algebra}}, for short) is a pair $(X,h)$ where $X$ is an
object of $\C$ and $h\colon TX\to X$ is a morphism in $\C$ which
behaves well with respect to the unit $\eta$ and the multiplication
$\mu$ of the monad, that is, $h\circ \eta_X=\id_X$ and $h\circ Th =
h\circ \mu_X$.
A \emph{morphism} of $T$-algebras $(X_1,h_1)\to(X_2,h_2)$ is a
morphism $f\colon X_1\to X_2$ in $\C$ satisfying $f\circ h_1=h_2\circ
Tf$.
Let $\intro*\AlgCT$ denote the category of \kl{$T$-algebras}.
When $T$ is a monad on the category $\Set$ of sets and functions,
categories of the form $\intro*\AlgT$ are, up to equivalence,
precisely the varieties of (possibly infinite arity) algebras. This
correspondence restricts to categories of \kl{Eilenberg-Moore
  algebras} for \intro{finitary} monads (i.e., monads preserving filtered
colimits) and varieties of algebras in types consisting of finite
arity operations.
\AP A $T$-algebra $(X,h)$ is said to be \emph{finitely carried} (or
sometimes just \emph{finite}) provided $X$ is finite. We write
$\intro*\AlgfinT$ for the full subcategory of $\AlgT$ on the finitely
carried objects. \AP The forgetful functor $\AlgT\to\Set$ that sends
$(X,h)$ to $X$ restricts to the finitely carried algebras, and gives
rise to a functor $\AlgfinT\to\Setfin$.

\AP In Section~\ref{sec:S-transduction} we shall see how several
logical quantifiers can be modelled by considering modules over a
semiring and the appropriate profinite monad. Recall that a
\intro{semiring} is a tuple $(S,+,\cdot,0,1)$ such that $(S,+,0)$ is a
commutative monoid, $(S,\cdot, 1)$ is a monoid, the operation $\cdot$
distributes over $+$, and $0\cdot s=0=s\cdot 0$ for all $s\in S$. If
no confusion arises, we will denote the \kl{semiring} by $S$ only.
\begin{example}
  \label{ex:semiring-monad}
  \AP A \kl{semiring} $S$ induces a functor
  $\intro\Sm\colon\Set\to\Set$ which associates with a set $X$ the set
  of all functions $X\to S$ with \emph{finite support}, that is
  \begin{align*}
    \Sm X := \{f\colon X \to S \mid f(x) = 0 \ \text{for all but
      finitely many} \ x\in X\}.
  \end{align*}
  If $\psi\colon X\to Y$ is any function, define $ \Sm \psi\colon \Sm
  X\to\Sm Y$ as $f\mapsto (y\mapsto\sum_{\psi(x)=y}{f(x)})$.
  Any element $f\in \Sm X$ can be represented as a formal sum
  $\sum_{i=1}^n{s_ix_i}$, where $\{x_1,\ldots,x_n\}$ is the support of
  $f$ and $s_i=f(x_i)$ for each $i$.
  The functor $\Sm$ is part of a monad $(\Sm,\eta,\mu)$ on $\Set$,
  called the \intro{free $S$-semimodule monad}, whose unit is
  \begin{align*}
    \eta_X\colon X\to \Sm X, \ \eta_X(x)(x')=1 \ \text{if} \ x'=x \
    \text{and} \ 0 \ \text{otherwise},
  \end{align*}
  and whose multiplication is
  \begin{align*}
    \mu_X\colon \Sm^2 X\to \Sm X, \ \ \sum_{i=1}^n{s_i f_i}\mapsto
    \bigg(x\mapsto \sum_{i=1}^n{s_i f_i(x)}\bigg).
  \end{align*}
  The category $\Alg{\Set}{\Sm}$ is the category of modules over the
  \kl{semiring} $S$.
For example, if $S$ is the Boolean semiring $\2$ then $\Sm=\Pfin$ (the
  finite powerset monad), whose \kl{Eilenberg-Moore algebras} are join
  semilattices. If $S$ is $(\N,+,\cdot,0,1)$ or $(\Z,+,\cdot,0,1)$,
  then the algebras for the monad $\Sm$ are, respectively, Abelian
  monoids and Abelian groups.
\end{example}

\subsection{Profinite monads}
\label{sec:profinite-monads}

Throughout this subsection we fix a monad $T$ on $\Set$. We begin by
recalling the definition of the associated \emph{profinite monad}
$\intro{\wT}$ on the category of \kl{Boolean spaces},
following~\cite{ChenAMU16}.  First we provide an intuitive idea of the
construction, and then we give the formal definition. Given a
\kl{Boolean space} $X$, one considers all continuous maps $h_i\colon
X\to Y_i$ where the $Y_i$'s are finite sets equipped with
Eilenberg-Moore algebra structures $\alpha_i\colon TY_i\to Y_i$, as
well as the algebra morphisms $u_{ij}\colon Y_i\to Y_j$ satisfying
$u_{ij}\circ h_i=h_j$. Equipping the finite sets $Y_i$ with the
discrete topology, one obtains a cofiltered diagram (or inverse
limit system) $\mathcal{D}_X$ in $\Stone$, and ${\wT}X$ is
the limit of this system. It turns out that ${\wT}$ is the underlying functor of a
monad $({\wT},\w{\eta},\w{\mu})$ on $\Stone$, called the
\emph{profinite monad associated with $T$}.  For example, it is not difficult to see how to obtain its unit $\w{\eta}_X$ from the universal
property of the limit, as in the following diagram.
\[
\begin{tikzcd}[row sep =small]
  X\ar[dr, swap, "h_i"]\arrow[bend left]{drr}[swap]{h_j}\ar[rrr,bend left,dashed,"\w{\eta}_X"] & &  & {\wT}X\ar[dll, bend right, "p_i"]\ar[dl,"p_j"] \\
  {} & Y_i\ar[r,"u_{ij}"] & Y_j &
\end{tikzcd}
\]
To give the formal definition of ${\wT}$, we introduce the functor
$G\colon\AlgfinT\to\Stone$ obtained as the composition of the
forgetful functor to $\Setfin$ with the embedding of $\Setfin$ into
$\Stone$:
\begin{equation*}
  \begin{tikzcd}
    G\colon \AlgfinT\ar[r]&\Setfin\ar[r]&\Stone.
  \end{tikzcd}
\end{equation*}
The shape of the diagram we constructed above for a \kl{Boolean space}
$X$ is the comma category $X\downarrow G$ whose objects are
essentially the maps $h_i\colon X\to G(Y_i,\alpha_i)$, and whose
arrows are the maps $u_{ij}$ as above. The diagram $\mathcal{D}_X$
is then formally given by the composition
\[
\begin{tikzcd}
  X\downarrow G\ar[r,"cod"] &\AlgfinT\ar[r,"G"]& \Stone
\end{tikzcd}
\]
of the codomain functor $ X\downarrow G \to \AlgfinT$, which maps
$h_i\colon X\to G(Y_i,\alpha_i)$ to the algebra $(Y_i,\alpha_i)$, and
$G\colon\AlgfinT\to\Stone$.  Formally, for an arbitrary \kl{Boolean
  space} $X$, we have ${\wT}X:=\lim\mathcal{D}_X$.

\AP Notice that this is the pointwise limit computation of the
right Kan extension of $G$ along itself (cf.\ \cite[X.3]{MacLane}).
That is, using standard category-theoretic notation, ${\wT}=\Ran_GG$.
It is well known (see for example~\cite{Leinster2010}) that the right
Kan extension of a functor $G$ along itself, when it exists, is the
functor part of a monad, called the \intro{codensity monad} for $G$.

\AP The universal property of the right Kan extension, along with the
fact that the underlying-set functor $\fgt\colon\Stone\to\Set$ is
right adjoint and thus preserves right Kan extensions, allows one to
define a natural transformation
\begin{align}
  \label{eq:4}
  \t_X\colon T|X|\to |{\wT}X|
\end{align}
which was also used in~\cite{ChenAMU16}.  Here we give a presentation
based on the limit computation of ${\wT}X$.  Notice that the maps
$|h_i|\colon |X|\to |Y_i|$ are functions into the carrier sets of the
\kl{Eilenberg-Moore algebras} $\alpha_i\colon TY_i\to Y_i$ and thus,
by the universal property of the free algebra $T|X|$, we can extend
the maps $|h_i|$ to algebra morphisms $h_i^\#$ from $T|X|$ to
$(Y_i,\alpha_i)$. The functions $h_i^\#$ form a cone for the diagram
$\fgt\circ\mathcal{D}_X$ in $\Set$ whose limit is $|{\wT}X|$, by
virtue of the fact that the forgetful functor
$\fgt\colon\Stone\to\Set$ preserves limits. By the universal property
of the limit, this yields a unique map $\t_X$ as in~\eqref{eq:4}.

The natural transformation $\t$ behaves well with respect to the units
and multiplications of the monads $T$ and $\wT$, in the sense that the
next two diagrams commute,
see~\cite[Proposition~B.7]{DBLP:journals/corr/ChenAMU15}. Thus the
pair $(\fgt,\t)$ is a \emph{monad morphism}, or \emph{monad functor}
in the terminology of~\cite{Street}.
\begin{equation}
  \label{eq:tau-monad-functor}
  \begin{tikzcd}[column sep=small]
    T|X| \arrow{rr}{\t_X} &  & {|{\wT}X|} & & T^2|X| \arrow{d}[swap]{T\t_X} \arrow{rr}{\mu_{|X|}} & & T|X| \arrow{d}{\t_X} \\
    {} & {|X|}\arrow{ur}[swap]{|\w{\eta}_X|} \arrow{ul}{\eta_{|X|}} &
    & & T|{\wT}X| \arrow{r}[swap]{\t_{{\wT}X}} & {|{\wT}^2X|}
    \arrow{r}[swap]{|\w{\mu}_X|} & {|{\wT}X|}
  \end{tikzcd}
\end{equation}

\AP The fact that $(\fgt,\t)$ is a monad functor entails that the
functor $\fgt$ lifts to a functor $\intro*\wfgt$ between the
categories of \kl{Eilenberg-Moore algebras} for the monads ${\wT}$ and
$T$, as in the next diagram.
\begin{equation}
  \label{eq:8}
  \begin{tikzcd}
    \Stone^{{\wT}}\ar[r,"\wfgt"]\ar[d] & \AlgT\ar[d]\\
    \Stone\ar[r,"\fgt"] & \Set
  \end{tikzcd}
\end{equation}
As a consequence we immediately obtain that the set $|{\wT}X|$ admits
a \kl{$T$-algebra} structure, a result also used
in~\cite{DBLP:journals/corr/ChenAMU15} for finite algebras. This
structure is essentially the one obtained by applying the functor
$\wfgt$ to the free ${\wT}$-algebra $({\wT}X,\w{\mu}_X)$. In more
detail,
\begin{lemma}
  \label{lem:codensity-T-alg}
  Given a \kl{Boolean space} $X$, the composite map
  \[
  \begin{tikzcd}
    T{|{\wT}X|}\ar[r,"\t_{{\wT}X}"]& {|{\wT}^2X|}\ar[r,"|\w{\mu}_X|"]
    & {|{\wT}X|}
  \end{tikzcd}
  \]
  is a \kl{$T$-algebra} structure on $|{\wT}X|$. Moreover, $\t_X$ is a
  morphism of $T$-algebras from the free $T$-algebra on $|X|$ to $|\wT
  X|$ with the above structure.
\end{lemma}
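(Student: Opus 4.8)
The plan is to extract both assertions from the single fact, recorded in the diagrams~\eqref{eq:tau-monad-functor}, that $(\fgt,\t)$ is a monad functor. In components, this fact says that for every \kl{Boolean space} $X$ one has the unit and multiplication identities
\begin{align}
  \t_X\circ\eta_{|X|}&=|\w{\eta}_X|,\label{plan-unit}\\
  |\w{\mu}_X|\circ\t_{\wT X}\circ T\t_X&=\t_X\circ\mu_{|X|}.\label{plan-mult}
\end{align}
Besides these, the only other ingredients I would use are the naturality of $\t$, i.e.\ $\t_Z\circ T|f|=|\wT f|\circ\t_Y$ for each morphism $f\colon Y\to Z$ of $\Stone$, together with the monad laws of $\wT$ and the functoriality of $\fgt$ (which lets one pass $|{-}|$ through composites). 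Writing $\xi:=|\w{\mu}_X|\circ\t_{\wT X}$ for the candidate structure map, I would check the two \kl{Eilenberg-Moore algebra} axioms for $(|\wT X|,\xi)$; the second claim of the lemma will then turn out to be one of the identities above.

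For the unit axiom I would instantiate~\eqref{plan-unit} at $\wT X$ in place of $X$, giving $\t_{\wT X}\circ\eta_{|\wT X|}=|\w{\eta}_{\wT X}|$. Precomposing $\xi$ with $\eta_{|\wT X|}$ and using that $\fgt$ preserves composition then rewrites $\xi\circ\eta_{|\wT X|}$ as $|\w{\mu}_X\circ\w{\eta}_{\wT X}|$, which equals $|\id_{\wT X}|=\id_{|\wT X|}$ by the unit law of $\wT$. For the associativity axiom $\xi\circ T\xi=\xi\circ\mu_{|\wT X|}$, I would expand $\xi\circ T\xi=|\w{\mu}_X|\circ\t_{\wT X}\circ T|\w{\mu}_X|\circ T\t_{\wT X}$, rewrite the middle factor $\t_{\wT X}\circ T|\w{\mu}_X|$ as $|\wT\w{\mu}_X|\circ\t_{\wT^2 X}$ by naturality of $\t$ at the morphism $\w{\mu}_X\colon\wT^2X\to\wT X$, and then use the associativity law $\w{\mu}_X\circ\wT\w{\mu}_X=\w{\mu}_X\circ\w{\mu}_{\wT X}$ of $\wT$ (transported through $\fgt$) to replace $|\w{\mu}_X|\circ|\wT\w{\mu}_X|$ by $|\w{\mu}_X|\circ|\w{\mu}_{\wT X}|$. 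At this point~\eqref{plan-mult}, instantiated at $\wT X$, collapses $|\w{\mu}_{\wT X}|\circ\t_{\wT^2X}\circ T\t_{\wT X}$ to $\t_{\wT X}\circ\mu_{|\wT X|}$, leaving exactly $|\w{\mu}_X|\circ\t_{\wT X}\circ\mu_{|\wT X|}=\xi\circ\mu_{|\wT X|}$, as required.

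The second assertion needs no separate argument: saying that $\t_X$ is a morphism from the free $T$-algebra $(T|X|,\mu_{|X|})$ to $(|\wT X|,\xi)$ means precisely that $\t_X\circ\mu_{|X|}=\xi\circ T\t_X=|\w{\mu}_X|\circ\t_{\wT X}\circ T\t_X$, and this is literally identity~\eqref{plan-mult}.

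I do not anticipate a genuine obstacle here, since the whole argument is a diagram chase; conceptually it is the instance, for the free $\wT$-algebra $(\wT X,\w{\mu}_X)$, of the standard fact that a monad functor transports \kl{Eilenberg-Moore algebras} to \kl{Eilenberg-Moore algebras}, which is exactly what yields the lifting $\wfgt$ of~\eqref{eq:8}. The only point that calls for care is the bookkeeping: one must instantiate the monad-functor identities~\eqref{plan-unit}--\eqref{plan-mult} at $\wT X$ rather than at $X$, and keep track that the unit and associativity laws invoked in the first part are those of $\wT$, the laws of $T$ entering only through the freeness of the domain algebra in the second part.
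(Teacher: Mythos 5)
Your proof is correct and takes exactly the route the paper intends: the paper's own proof is just the one-line remark that the lemma is ``a straightforward verification using the commutativity of the diagrams in~\eqref{eq:tau-monad-functor}'', and your diagram chase --- instantiating those monad-functor identities at $\wT X$, together with naturality of $\t$ and the unit/associativity laws of $\wT$ --- is precisely that verification carried out in full, with the second assertion correctly identified as the multiplication identity itself.
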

\begin{proof}
  This is a straightforward verification using the commutativity of
  the diagrams in \eqref{eq:tau-monad-functor}.
\end{proof}
While in some proofs it is essential to keep track of the forgetful
functor, we will sometimes omit it in what follows and simply write
$\t_X\colon TX\to{\wT}X$. We recall a property of the natural transformation 
$\t$ which will be crucial in the following.
\begin{lemma}\label{l:t-dense-injective}
  For every \kl{Boolean space} $X$, the map $\t_X\colon TX\to {\wT}X$ has dense image. 
  More generally, the composite
  \[TM\xrightarrow{Th} TX\xrightarrow{\t_X} {\wT}X\] 
  has dense image whenever $h\colon M\to X$ is a function with dense image.
\end{lemma}
\begin{proof}
For a proof of the fact that $\t_X$ has dense image, see \cite[Lemma~2.9]{R2018}. An easy adaptation of the latter proof yields the second part of the statement.
\end{proof}

\begin{remark}\label{rm:injective-components-of-t}
  Notice that, for an arbitrary monad $T$ on $\Set$, the components of the natural transformation $\t$ from \eqref{eq:4} 
  do not have to be injective. A counterexample is provided by the powerset monad $\P$ on $\Set$. Indeed, both 
  $\P$ and $\Pfin$ generate the same profinite monad, namely the \kl{Vietoris monad} on $\Stone$. In the case of the monad $\P$, $\t_X\colon \P X\to \V X$ sends a subset of the \kl{Boolean space} $X$ to its closure, and this function is not injective in general.
  However, the components of $\t$ are injective if $T$ is \kl{finitary} and restricts to finite sets, e.g., if $T$ is the finite powerset monad on $\Set$.  For more details we 
  refer the reader to \cite[Section~2.2]{R2018}.
\end{remark}

\section{Extending $\Set$-monads to \swims}\label{s:extending-set-bims}
In this section we study liftings of monads from the category of sets to the category of {\swims}. Let us fix, throughout the section, a monad $T$ on $\Set$. In Section~\ref{sec:profinite-monads} we have seen
that the profinite monad $\wT$ provides a canonical
way of extending $T$ to \kl{Boolean spaces}.  On the other hand, in Section \ref{s:lifting-to-Mon} we consider
ways of lifting $T$ to the category of monoids. The combination of these two liftings, the topological and the monoid one, is considered in Section \ref{s:combining-liftings}. In particular, in Theorem~\ref{thm:lift} we give sufficient conditions for $T$ to be extended in a canonical way to the category of {\swims}, by combining the aforementioned liftings. 

\subsection{Lifting $\Set$-monads to the category of monoids}\label{s:lifting-to-Mon}
It is well known that there are two ``canonical'' natural
transformations of bifunctors 
\[
\otimes,\otimes'\colon TX\times TY \to T(X\times Y),
\] 
defined intuitively as follows. If we think of elements in
$TX$ as terms $t(x_1,\ldots, x_n)$, then $t(x_1,\ldots, x_n) \otimes
s(y_1,\ldots, y_m)$ is defined as
\[
t(s((x_1,y_1),\ldots, (x_1,y_m)),\ldots,s((x_n,y_1),\ldots,
(x_n,y_m))),
\]
whereas $t(x_1,\ldots, x_n) \otimes' s(y_1,\ldots, y_m)$ is defined as
\[
s(t((x_1,y_1),\ldots, (x_n,y_1)),\ldots,t((x_1,y_m),\ldots,
(x_n,y_m))).
\]
\AP In general $\otimes$ and $\otimes'$ do not coincide, and when they
do the monad is called \kl{commutative}, a notion due to
Kock~\cite{Kock70}. We give a formal definition in the case of the monad $T$. Every $\Set$-monad has a unique \emph{strength}, that is a natural transformation $\sigma_{X,Y}\colon X\times TY\to T(X\times Y)$ such that the following diagrams commute.
\begin{equation}\label{eq:strength}
\begin{tikzcd}[column sep=3em]
X\times Y \arrow{r}{\id_X\times \eta_Y} \arrow{dr}[swap]{\eta_{X\times Y}} & X\times TY \arrow{d}{\sigma_{X,Y}} & & X\times T^2Y \arrow{r}{\sigma_{X,TY}} \arrow{d}[swap]{\id_X\times\mu_Y} & T(X\times TY) \arrow{r}{T\sigma_{X,Y}} & T^2(X\times Y) \arrow{d}{\mu_{X\times Y}} \\ 
{} & T(X\times Y) & & X\times TY \arrow{rr}{\sigma_{X,Y}} & & T(X\times Y)
\end{tikzcd}
\end{equation}
This natural transformation can be explicitly described as follows. For any $x\in X$, write $f_x\colon Y\to X\times Y$ for the function sending $y$ to $(x,y)$. Then $\sigma_{X,Y}\colon X\times TY\to T(X\times Y)$ sends a pair $(x,s)$ to the image of $s$ under $Tf_x\colon TY\to T(X\times Y)$. Associated with the strength $\sigma$, there is a \emph{costrength} $\sigma'_{X,Y}\colon TX\times Y\to T(X\times Y)$ defined as the composition
\[\begin{tikzcd}
TX\times Y \arrow{r}{\gamma_{TX,Y}} & Y\times TX \arrow{r}{\sigma_{Y,X}} & T(Y\times X) \arrow{r}{T\gamma_{Y,X}} & T(X\times Y),
\end{tikzcd}\]
where $\gamma_{X,Y}\colon X\times Y \to Y\times X$ is the function sending $(x,y)$ to $(y,x)$.
The costrength $\sigma'$ enjoys properties symmetric to those of the strength $\sigma$, expressed by the following commutative diagrams.
\begin{equation}\label{eq:costrength}
\begin{tikzcd}[column sep=3em]
X\times Y \arrow{r}{\eta_X\times \id_Y} \arrow{dr}[swap]{\eta_{X\times Y}} & TX\times Y \arrow{d}{\sigma'_{X,Y}} & & T^2X\times Y \arrow{r}{\sigma'_{TX,Y}} \arrow{d}[swap]{\mu_X\times\id_Y} & T(TX\times Y) \arrow{r}{T\sigma'_{X,Y}} & T^2(X\times Y) \arrow{d}{\mu_{X\times Y}} \\ 
{} & T(X\times Y) & & TX\times Y \arrow{rr}{\sigma'_{X,Y}} & & T(X\times Y)
\end{tikzcd}
\end{equation}
\AP The monad $T$ is said to be \intro{commutative} if, for any sets $X,Y$, the following square commutes.
\begin{equation}\label{eq:commutative-monad}
\begin{tikzcd}
TX\times TY \arrow{r}{\sigma_{TX,Y}} \arrow{d}[swap]{\sigma'_{X,TY}} & T(TX\times Y) \arrow{r}{T\sigma'_{X,Y}} & T^2(X\times Y) \arrow{dd}{\mu_{X\times Y}} \\
T(X\times TY) \arrow{d}[swap]{T\sigma_{X,Y}} & & \\
T^2(X\times Y) \arrow{rr}{\mu_{X\times Y}} & & T(X\times Y)
\end{tikzcd}
\end{equation}
Note that the commutativity of this diagram formalises the aforementioned idea that the natural transformations $\otimes$ and $\otimes'$ coincide.
Given a monoid $(M,\cdot,1)$, one has two possibly different
``canonical'' ways of defining a binary operation on $TM$, obtained as
either of the two composites
\begin{equation}\label{eq:monoid-operation-lifting}
\begin{tikzcd}
  TM\times TM\arrow[shift right]{r}[swap]{\otimes'}\ar[r,shift
  left,"\otimes"] & T(M\times M)\ar[r,"T(\cdot)"] & TM.
\end{tikzcd}
\end{equation}
If $e\colon 1\to M$ denotes the map selecting the unit of the monoid,
we can also define a map $1\to TM$ obtained as the composite
$Te\circ\eta_1$. That these data (with either of the two binary
operations) give rise to monoid structures on $TM$ is a consequence
of a more general result by Kock:
\begin{theorem}
If $T$ is a commutative $\Set$-monad then $\otimes=\otimes'$, and thus for every monoid $(M,\cdot,1)$ the composition in \eqref{eq:monoid-operation-lifting} gives a monoid structure on $TM$. This yields a lifting of $T$ to a monad on the category of monoids and monoid homomorphisms.
\end{theorem}
\begin{proof}
This is a special case of~\cite[Theorem~2.1]{Kock70}.
\end{proof}

\subsection{Combining the topological and monoid liftings}\label{s:combining-liftings}
In Sections \ref{sec:profinite-monads} and \ref{s:lifting-to-Mon}, respectively, we have seen that every $\Set$-monad $T$ can be lifted to a monad ${\wT}$ on the category of Boolean spaces, and it can be lifted to a monad on the category of monoids provided it is commutative.
In this section we show that, if $T$ is commutative and finitary, then the topological and monoid liftings can be combined to obtain a lifting of $T$ to the category of {\swims} (see Theorem \ref{thm:lift} below).

\AP Let $T$ be a commutative $\Set$-monad, $(X,M)$ a {\swim} and $h\colon M\to X$ the associated function with dense image. We would like to define a structure of {\swim} on the pair $({\wT}X,TM)$. In particular, we should give a function $TM\to {\wT}X$ with dense image. To this aim, we define $\intro*{\wh}\colon TM\to {\wT}X$ as the composition
\begin{equation}\label{eq:map-h-hat}
\begin{tikzcd}
    TM \arrow{r}{Th} & TX \arrow{r}{\t_X} & {\wT}X.
    \end{tikzcd}
  \end{equation}
  By Lemma~\ref{l:t-dense-injective}, this function has
  dense image. Further, since both $Th$ and $\t_X$ are $T$-algebra morphisms,
  $\wh$ is also a $T$-algebra morphism. 
  
  Recall that, if
  $\alpha\colon TB\to B$ is a \kl{$T$-algebra} and $A$ is any set,
  the set of functions ${B^A}$ carries an \kl{Eilenberg-Moore algebra}
  structure for $T$, where the operations are defined pointwise. Further, whenever
  $\alpha_i\colon TB_i\to B_i$ for $i\in\{1,2\}$ are
  \kl{Eilenberg-Moore algebras} for $T$ and $f\colon B_1\to B_2$ is an
  algebra morphism, then $\Set(A,f)=f\circ -\colon{B_1^A}\to{B_2^A}$ is a
  \kl{$T$-algebra} morphism.
  We obtain at once the following fact.
\begin{lemma}
  \label{lem:pointwise-T-algebras}
  For any $\Set$-monad $T$, the sets $TM^{TM}$ and ${\wT}X^{TM}$ carry \kl{$T$-algebra} structures and the function 
  \[
  \wh\circ-\colon TM^{TM}\to{\wT}X^{TM}
  \]
  is a \kl{$T$-algebra} morphism.
\end{lemma}
\begin{proof}
With the notation of the previous paragraph, consider $A=TM$, $\alpha_1=\mu_M\colon T^2M\to
  TM$, $\alpha_2$ the $T$-algebra structure on ${\wT}X$ given as
  in Lemma~\ref{lem:codensity-T-alg}, and $f=\wh$.
\end{proof}
Thus, also the power algebra ${\wT X}^{\wT X}$ admits a $T$-algebra structure. Crucially, if the monad $T$ is finitary, the set $\cf{\wT X}{\wT X}$ of continuous endofunctions on $\wT X$ is a subalgebra of ${\wT X}^{\wT X}$. This is proved in the following proposition which will allow us to define, in the proof of Theorem \ref{thm:lift}, a biaction of $TM$ on $\wT X$.
\begin{proposition}\label{p:finitary-subalgebra-continuous-functions}
If $T$ is a finitary $\Set$-monad, then $\cf{\wT X}{\wT X}$ is a subalgebra of the $T$-algebra ${\wT X}^{\wT X}$. With respect to this structure, the function 
  \[
  -\circ\wh\colon \cf{\wT X}{\wT X}\to{\wT}X^{TM}
  \]
  is a $T$-algebra morphism.
\end{proposition}
\begin{proof}
  It suffices to prove the first part of the statement, for then the function $-\circ\wh$ is a $T$-algebra morphism because it coincides with the following composition of $T$-algebra morphisms:
  \[\begin{tikzcd}
  \cf{\wT X}{\wT X} \arrow[hookrightarrow]{r} & {\wT X}^{\wT X} \arrow{r}{{\wT X}^{\wh}} & {\wT X}^{TM}.
  \end{tikzcd}\] 
  
  Recall from
  Section~\ref{sec:profinite-monads} that ${\wT}X$ is the cofiltered
  limit of finite sets $Y_i$ which carry \kl{$T$-algebra} structures
  $\alpha_i\colon TY_i\to Y_i$. We have the following isomorphisms in
  the category of sets:
  \begin{align*}
    \cf{\wT X}{\wT X}& \cong \cf{\wT X}{\mathrm{lim}_iY_i}\\
    & \cong \mathrm{lim}_i\cf{\wT X}{Y_i}\\
    & \cong \mathrm{lim}_i\cf{\mathrm{lim}_jY_j}{Y_i}\\
    & \cong \mathrm{lim}_i\colim_j\cf{Y_j}{Y_i}
  \end{align*}
  where, for the last isomorphism, we have used the fact that the
  $Y_i$ are finite spaces, and consequently finitely
  copresentable. Moreover, notice that the colimit above is filtered.

  The sets $Y_i$ carry
  $T$-algebra structures and so do the sets $\cf{Y_j}{Y_i}\cong
  Y_i^{Y_j}$ with respect to pointwise operations.
  Since $T$ is finitary, the forgetful functor $\AlgT\to\Set$
  preserves and reflects both filtered colimits and limits (see
  e.g.~\cite[Propositions~3.4.1--3.4.2]{Borceux2}). Whence,
  $\cf{\wT X}{\wT X}$ carries a \kl{$T$-algebra} structure.
  We claim that, with respect to this $T$-algebra structure, $\cf{\wT X}{\wT X}$ is a subalgebra
  of the power algebra ${\wT X}^{\wT X}$. 

   For each $x\in \wT X$, write $ev_x\colon  \cf{\wT X}{\wT X}\to \wT X$ for the function sending $f$ to $f(x)$. Then the natural inclusion 
   \[
   \cf{\wT X}{\wT X} \hookrightarrow {\wT X}^{\wT X}
   \] 
   is a $T$-algebra morphism if, and only if, each $ev_x$ is a $T$-algebra morphism. Write $\{\pi_i\colon \wT X\to Y_i\mid i\in I\}$ for the cone of continuous functions defining $\wT X$ as the cofiltered limit of finite sets $Y_i$ which carry $T$-algebra structures. It is not difficult to see that each $\pi_i$ is a $T$-algebra morphism; for a proof, see \cite[Proposition~2.10]{R2018}. Therefore it suffices to show that each composition  $\pi_i\circ ev_x\colon \cf{\wT X}{\wT X}\to Y_i$ is a $T$-algebra morphism. For any $j\in I$, denote by $\gamma_j\colon \wT X^{Y_j}\to Y_i$ the composite
   \[\begin{tikzcd}[column sep=3em]
   \wT X^{Y_j} \arrow{r}{\pi_i\circ -} & Y_i^{Y_j} \arrow{r}{ev_{\pi_j(x)}} & Y_i.
   \end{tikzcd}\]
 It follows by Lemma \ref{lem:pointwise-T-algebras} that each $\gamma_j$ is a $T$-algebra morphism. Upon recalling that $\cf{\wT X}{\wT X}\cong \colim_j\cf{Y_j}{\wT X}$ in the category $\AlgT$, it is not difficult to see that $\pi_i\circ ev_x\colon \cf{\wT X}{\wT X}\to Y_i$ is the (unique) $T$-algebra morphism induced by the cocone $\{\gamma_j\colon \cf{Y_j}{\wT X}\to Y_i\mid j\in I\}$, thus concluding the proof.
\end{proof}
Exploiting the previous observations we can prove the main result of this section, stating that every finitary commutative monad on the category of sets can be lifted to the category of {\swims}.
\begin{theorem}\label{thm:lift}
  Any \kl{finitary} \kl{commutative} $\Set$-monad $T$ can be extended
  to a monad on {\swimcat} mapping $(X,M)$ to $({\wT}X,TM)$.
\end{theorem}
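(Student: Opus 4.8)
The plan is to assemble the extension from two ingredients that are already at hand: the profinite monad $\wT$ in the space coordinate, and the lifting of $T$ to monoids in the monoid coordinate, glued together by the natural transformation $\t$ and by the $T$-algebra structure on $\wT X$ provided by Lemma~\ref{lem:codensity-T-alg}. On objects I send $(X,M)$ to $(\wT X, TM)$. Here $TM$ carries the monoid structure induced by $\otimes$ and $T(\cdot)$, which is well defined and single-valued precisely because $T$ is \kl{commutative} (cf.\ \cite[Theorem~2.1]{Kock70}); moreover $\eta$ and $\mu$ are then monoid morphisms, so $T$ lifts to an endofunctor of the category of monoids. For the connecting map I take $\hat h:=\t_X\circ Th\colon TM\to\wT X$, whose image is dense by Lemma~\ref{l:t-dense-injective}. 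The right action $\hat\rho\colon TM\to\cf{\wT X}{\wT X}$ is defined, for $t\in TM$ and $K\in\wT X$, by first forming $g_K\colon M\to\wT X,\ m\mapsto \wT(\rho(m))(K)$ and then applying the $T$-algebra structure $|\w\mu_X|\circ\t_{\wT X}$ of $\wT X$ to $(Tg_K)(t)$; equivalently $\hat\rho=(|\w\mu_X|\circ\t_{\wT X})\circ T\bar\rho\circ\st$, built from the strength $\st$ of $T$, where $\bar\rho(m):=\wT(\rho(m))$. The left action $\hat\lambda$ is defined symmetrically through the costrength. For $T=\Pfin$ and $\wT=\V$ this unwinds to $\hat\rho(B)(K)=\bigcup_{m\in B}\rho(m)[K]$, the expected action of $\Pfin M$ on $\V X$.

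The main obstacle is to verify that each $\hat\rho(t)$ is genuinely continuous, so that $(\wT X, TM,\hat h,\hat\rho,\hat\lambda)$ satisfies Definition~\ref{def:swim}. This is exactly where \kl{finitariness} of $T$ is needed: writing $t=Ti(t_0)$ for a finite subset $i\colon M_0\hookrightarrow M$ and some $t_0\in TM_0$, the map $\hat\rho(t)$ involves only the finitely many continuous maps $\wT(\rho(m))$ with $m\in M_0$, and factors as $\wT X\xrightarrow{\Gamma}(\wT X)^{M_0}\xrightarrow{E_{t_0}}\wT X$, where $\Gamma=\langle\wT(\rho(m))\rangle_{m\in M_0}$ is continuous and $E_{t_0}(f)=(|\w\mu_X|\circ\t_{\wT X})\big((Tf)(t_0)\big)$. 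Using the naturality of $\t$ in the form $\t_{\wT X}\circ Tf=\wT f\circ\t_{M_0}$ for $f\colon M_0\to\wT X$, I rewrite $E_{t_0}(f)=|\w\mu_X|\big(\wT f(z_0)\big)$ with $z_0:=\t_{M_0}(t_0)\in\wT M_0$ a fixed point independent of $f$. Continuity of $f\mapsto\wT f(z_0)$ on the finite power $(\wT X)^{M_0}$ then follows from the cofiltered-limit description of $\wT(\wT X)$: the $e$-component of $\wT f(z_0)$, for $e\colon\wT X\to Y_e$ a continuous map into a finite $T$-algebra, equals the value at $z_0$ of the limit projection of $\wT M_0$ indexed by $e\circ f\colon M_0\to Y_e$, and hence depends on $f$ only through the locally constant datum $(e\circ f)\in Y_e^{M_0}$. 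Composing with $|\w\mu_X|$ (continuous) finishes the argument.

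It remains to check the equivariance squares~\eqref{eq:1} for $(\wT X, TM)$, namely $\hat\rho(t)\circ\hat h=\hat h\circ\hat r(t)$ and its left-hand analogue, where $\hat r$ is the currying of the multiplication of $TM$. Phrasing these through Remark~\ref{rem:reformulation-bim-def} and diagrams~\eqref{eq:2}, I expect them to follow from a diagram chase combining the equivariance of $h$, the naturality of $\t$ and of $\st$, and the compatibility (guaranteed by \kl{commutativity} of $T$) between the strength-defined action and the $\otimes$-defined monoid multiplication on $TM$. Once these squares and the density of $\hat h$ are in place, the remark following Definition~\ref{def:swim} supplies for free that $\hat\rho,\hat\lambda$ are genuine commuting actions: being the unique continuous extensions along the dense map $\hat h$ of the regular self-actions of the monoid $TM$, they inherit the commutation that always holds for a monoid acting on itself.

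Finally, functoriality and the monad structure are comparatively routine. On a \kl{morphism} $\psi\colon(X,M)\to(X',M')$ I set $(\wT\tilde\psi, T\psi)$, and $\wT\tilde\psi\circ\hat h=\hat h'\circ T\psi$ is immediate from naturality of $\t$, while $T\psi$ is a monoid morphism by naturality of $\otimes$. The unit and multiplication of the extended monad are $(\w\eta_X,\eta_M)$ and $(\w\mu_X,\mu_M)$; that these are \swim\ \kl{morphisms} reduces, via the defining equation $\tilde\psi h=h'\psi$, to the triangle $\t_X\circ\eta_X=|\w\eta_X|$ and the square $|\w\mu_X|\circ\t_{\wT X}\circ T\t_X=\t_X\circ\mu_{X}$ expressing that $(\fgt,\t)$ is a monad functor~\eqref{eq:tau-monad-functor}, together with naturality of $\eta$ and $\mu$. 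The monad laws then hold coordinatewise, being the laws of $\wT$ in the first coordinate and of $T$ in the second.
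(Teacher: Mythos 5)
Your construction is, piece by piece, the same as the paper's: $\wh=\t_X\circ Th$, and your pointwise/strength-based $\hat\rho$ is literally the paper's $\w{\rho}$, namely the unique $T$-algebra morphism out of the free algebra $TM$ extending $m\mapsto\wT(\rho(m))$ — unwinding the pointwise algebra structure on a function space gives exactly your formula $\hat\rho(t)(K)=\bigl(|\w{\mu}_X|\circ\t_{\wT X}\bigr)\bigl((Tg_K)(t)\bigr)$. Where you genuinely diverge is the continuity of each $\hat\rho(t)$. The paper proves (Lemma~\ref{lem:pointwise-T-algebras}) that the whole set $\cf{\wT X}{\wT X}$ of continuous self-maps is a $T$-algebra, via $\cf{\wT X}{\wT X}\cong\lim_i\colim_j\cf{Y_j}{Y_i}$ and the preservation of limits and filtered colimits by forgetful functors of finitary monads, so that the extension automatically lands in continuous maps; you instead prove continuity of each individual $\hat\rho(t)$ by writing $t=Ti(t_0)$ for a finite $M_0\subseteq M$ (finitariness), factoring $\hat\rho(t)=E_{t_0}\circ\Gamma$ through the finite power $(\wT X)^{M_0}$, and using naturality of $\t$ together with the cofiltered-limit description of $\wT$. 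I checked this argument and it is correct: the $e$-indexed component of $f\mapsto\wT f(z_0)$ is $f\mapsto q_{e\circ f}(z_0)$, which factors through the finite discrete space $Y_e^{M_0}$ and is therefore continuous. This is a legitimate, more hands-on alternative; what it buys is that you never need the space of continuous maps to carry an algebra structure, only continuity of individual maps.

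The one real gap is the equivariance step. You write that the squares~\eqref{eq:1} for $(\wT X,TM)$ should ``follow from a diagram chase,'' but this verification is the crux of the theorem — it is precisely here that \kl{commutativity} of $T$ is indispensable (Remark~\ref{rem:commutativity-of-monad}) — and you do not carry it out. The paper closes it not by an element chase but by freeness: rephrasing the squares as~\eqref{eq:5}, both composites $TM\to\wT X^{TM}$ are $T$-algebra morphisms ($\w{\rho}$ by construction, $\w{r}$ thanks to commutativity, and the pre-/post-composition maps $-\circ\wh$ and $\wh\circ-$ by Lemma~\ref{lem:pointwise-T-algebras}), and they agree on the generators $M$ by a diagram assembled from the definition of $\wh$ and the equivariance of $h$ in the form~\eqref{eq:2}; uniqueness of algebra-morphism extensions from $TM$ then forces equality. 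This mechanism transplants directly into your setting — in fact slightly more smoothly, since with your $\hat\rho$ viewed as valued in all functions $\wT X\to\wT X$, pre- and post-composition into pointwise function algebras are $T$-algebra morphisms with no appeal to finitariness. You should make this freeness argument explicit, including the point that $\hat r$ (the currying of the $\otimes$-multiplication) is an algebra morphism only because $T$ is commutative; as written, your proposal asserts rather than proves the property that makes $(\wT X,TM)$ a {\swim}. The remaining parts (density of $\wh$, functoriality, unit and multiplication via~\eqref{eq:tau-monad-functor}) are at the same level of detail as the paper's own ``straightforward computations'' and raise no concern.
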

\begin{proof} We first give the definition of the monad on an object
  $(X,M,h,\rho,\lambda)$. We will show that this is mapped to a
  {\swim} $({\wT}X,TM,\intro*{\wh},\w{\rho},\w{\lambda})$, where
  $\wh$ is as in equation \eqref{eq:map-h-hat}, and $\w{\rho}$ and $\w{\lambda}$ are defined as follows. 
  \AP Recall that $\cf{X}{X}$ and $\cf{\wT X}{\wT X}$ denote the sets
  of continuous endofunctions on $X$ and $\wT X$, respectively. To define
  $\w{\rho}$, consider the composite of the following two maps, where
  $\intro{\Ac{\wT}{X}}$ is given by the application of the functor
  ${\wT}$ to a continuous function in $\cf{X}{X}$:
  \begin{equation}
    \label{eq:3}
    \begin{tikzcd}
      M\arrow{r}{\rho} & {\cf{X}{X}} \ar{r}{\Ac{\wT}{X}} & {\cf{\wT X}{\wT X}}.
    \end{tikzcd}
  \end{equation}
  By Proposition \ref{p:finitary-subalgebra-continuous-functions} we know that $\cf{\wT X}{\wT X}$ is a $T$-algebra, hence the map in~\eqref{eq:3}  admits a unique extension to an algebra morphism
  $\w{\rho}\colon TM\to \cf{\wT X}{\wT X}$.
  The function $\w{\lambda}$ is defined similarly, as the unique
  \kl{$T$-algebra} morphism extending $\Ac{\w T}{X}\circ\lambda$.

  In order to prove that
  $({\wT}X,TM,\intro*{\wh},\w{\rho},\w{\lambda})$ is a {\swim}, it
  remains to prove that the functions $\wh$, $\w{\rho}$ and
  $\w{\lambda}$ make the diagrams in Definition~\ref{def:swim}
  commute. Equivalently, by virtue of
  Remark~\ref{rem:reformulation-bim-def}, that the next square and the
  analogous one (with $\w{\rho}$ replaced by $\w{\lambda}$, and
  $\w{r}$ by $\w{l}$) commute,
  \begin{equation}
    \label{eq:5}
    \begin{tikzcd}
      {\cf{\wT X}{\wT X}}\ar[r,"-\circ \wh"] & {{\wT}X^{TM}}
      \\
      TM\ar[u,"\w{\rho}"]\ar[r,"\w{r}"] & {TM^{TM}}\ar[u,"\wh\circ
      -",swap]
    \end{tikzcd}
  \end{equation}
  where $\w{r}$ and $\w{l}$ denote the right and left action,
  respectively, of $TM$ on itself.
  To this end, notice that the following diagram commutes. 
  \begin{equation}
  \label{eq:two-trapezoids}
  \begin{tikzcd}
    {\cf{\wT X}{\wT X}}\ar[rr,"-\circ \wh"]& & {\wT}X^{TM} \\
    {\cf{X}{X}}\ar[u,"\Ac{\wT}{X}"]\ar[r,"-\circ h"] & X^M\ar[ru,"\t_X\circ T-"] & \\
    M\ar[u,"\rho"]\ar[r,"r"] & M^M\ar[u,"h\circ -",
    swap]\ar[r,"\Ac{T}{M}"] & TM^{TM}\ar[uu,"\wh\circ -",swap]
  \end{tikzcd}
  \end{equation}
  For the upper leftmost trapezoid, by definition of $\wh$, we must prove that for every $f\in\cf{X}{X}$ we have 
  \[
  \t_X\circ Tf\circ Th=  \t_X\circ T(f\circ h)=\wT f\circ \wh =\wT f\circ \t_X\circ Th.
  \] 
  In turn, this follows from the fact that $\t_X\circ Tf=\wT f\circ \t_X$ by naturality of $\t$.
  The lower rightmost trapezoid commutes by the very definition of
  $\wh$, whereas the inner square is a reformulation of the left
  commuting square in~\eqref{eq:1}, cf.\ Remark \ref{rem:reformulation-bim-def}.
  
  We derive the commutativity of~\eqref{eq:5} using the universal
  property of the free \kl{$T$-algebra} on $M$ and by observing that
  a) in the outer square in~\eqref{eq:two-trapezoids}, the right
  vertical and the top horizontal arrows are morphisms of \kl{$T$-algebras} by
  Lemma~\ref{lem:pointwise-T-algebras} and Proposition \ref{p:finitary-subalgebra-continuous-functions}, respectively; b) the map $\w{\rho}$
  was defined as the unique extension of $\Ac{\wT}{X}\circ \rho$ to
  the free algebra $TM$; c) the map $\w{r}$ is the unique algebra
  morphism extending $\Ac{T}{M}\circ r$ to $TM$.
  To settle item c), notice that it is equivalent to the commutativity of the following diagram,
  \begin{equation}\label{eq-item-c}
  \begin{tikzcd}[column sep=4em]
  TM \times M \arrow{r}{\id_{TM}\times \eta_M} \arrow{dd}[swap]{\id_{TM}\times r} & TM\times TM \arrow{d}{\otimes}\\
  {} & T(M\times M) \arrow{d}{T(\cdot)} \\
  TM\times M^M \arrow{r}{\epsilon} & TM
  \end{tikzcd}
  \end{equation}
  where $\otimes$ denotes either of the two compositions in diagram \eqref{eq:commutative-monad}, $\cdot\, \colon M\times M\to M$ is the monoid operation of $M$ and $\epsilon(s,f)=\Ac{T}{M}(f)(s)$ for every $(s,f)\in TM\times M^M$. Now, observe that the identity
  \begin{equation}\label{eq:reduction}
  \otimes\circ (\id_{TM}\times \eta_M)= \sigma'_{M,M},
  \end{equation}
where $\sigma'$ is the co-strength of $T$, holds provided the following two diagrams commute.
\[\begin{tikzcd}
TM\times M \arrow{rr}{\id_{TM}\times\eta_M} \arrow{drr}[swap]{\eta_{TM\times M}} & & TM\times TM \arrow{d}{\sigma_{TM,M}} & & 
TM\times M \arrow{rr}{\eta_{TM\times M}} \arrow{ddrr}[swap]{\sigma'_{M,M}} & & T(TM\times M) \arrow{d}{T\sigma'_{M,M}} \\
{} & & T(TM\times M) & & & & T^2(M\times M) \arrow{d}{\mu_{M\times M}} \\
{} & & & & & & T(M\times M)
\end{tikzcd}\]
The triangle on the left commutes by the leftmost diagram in \eqref{eq:strength}. To show that the other triangle commutes, since $(\mu_M\times\id_{M})\circ (\eta_{TM}\times \id_M)=\id_{TM\times M}$, it suffices to show that the following diagram is commutative.
\[\begin{tikzcd}
{} & TM\times M \arrow{dl}[swap]{\eta_{TM}\times\id_{M}} \arrow{dr}{\eta_{TM\times M}} & {} \\
T^2M \arrow{dd}[swap]{\mu_M\times \id_M} \arrow{rr}{\sigma'_{TM,M}} \times M & & T(TM\times M) \arrow{d}{T\sigma'_{M,M}} \\
{} & & T^2(M\times M) \arrow{d}{\mu_{M\times M}} \\
TM\times M \arrow{rr}{\sigma'_{M,M}} & & T(M\times M)
\end{tikzcd}\]
In turn, the top triangle commutes by the leftmost diagram in \eqref{eq:costrength}, while the lower square commutes by the rightmost diagram in \eqref{eq:costrength}. Therefore, by equation \eqref{eq:reduction}, the commutativity of diagram \eqref{eq-item-c} is equivalent to the commutativity of the outer square below,
\[\begin{tikzcd}[column sep=2.0em, row sep=2.0em]
  TM \times M \arrow{rr}{\sigma'_{M,M}} \arrow{dd}[swap]{\id_{TM}\times r} & & T(M\times M) \arrow{dd}{T(\cdot)} \arrow{dl}[swap]{T(\id_M\times r)}\\
  {} & T(M\times M^M) \arrow{dr}{T(ev)} & \\
  TM\times M^M \arrow{ur}{\sigma'_{M,M^M}} \arrow{rr}{\epsilon} & & TM
\end{tikzcd}\]
  where $ev\colon M\times M^M\to M$ sends $(m,f)\in M\times M^M$ to $f(m)$. The upper leftmost triangle commutes by naturality of $\sigma'$, while the rightmost triangle and the lower one are easily seen to be commutative. Hence, item c) above is satisfied and diagram \eqref{eq:5} commutes, as was to be proved. Reasoning in a similar manner for the left action, one can see that $({\wT}X,TM,\intro*{\wh},\w{\rho},\w{\lambda})$ is indeed a {\swim}.
  
  It is now straightforward computations, using the \kl{commutativity} of the monad $T$, to check that the assignment
  $(X,M)\mapsto ({\wT}X,TM)$ yields the functor part of a monad on the
  category of \swims.  
\end{proof}
\begin{remark}
  \label{rem:commutativity-of-monad}
  Assume that the monad $T$ is not \kl{commutative} and we attempt to
  use in the proof of Theorem \ref{thm:lift} the monoid multiplication
  on $TM$ given by $\otimes$. All is fine for the right action and
  indeed the right action $\w{r}$ of $TM$ on itself is the unique
  extension of $\Ac{T}{M}\circ r$. However, this is not the case for
  the left action. Symmetrically, if we chose the multiplication of
  $TM$ stemming from $\otimes'$, then the left action $\w{l}$ would be
  the extension of the map $\Ac{T}{M}\circ l$, but this property would
  fail for the right action.
\end{remark}

\section{Extending the \kl{free semimodule monad} to
  \swims}\label{sec:measures-s}
In Theorem \ref{thm:lift} we showed how to lift any finitary
\kl{commutative monad} on $\Set$ to a monad on {\swimcat}.  The
purpose of the present section is then twofold. On the one hand we
provide an example of a family of $\Set$-monads to which this result
applies, and on the other hand we give explicit descriptions of the
various objects, maps and actions of the associated monads on
{\swimcat}. This will be essential for our further work on
recognisers.

Given a \kl{semiring} $S$, recall from Example~\ref{ex:semiring-monad}
the \kl{free $S$-semimodule monad} $\Sm$ on $\Set$.  Notice that $\Sm$
is a \kl{commutative monad} if and only if $S$ is a commutative
semiring, i.e., the multiplication $\cdot$ is a commutative operation.
Indeed, for a monoid $M$, the two monoid operations one can define on
$\Sm M$ are given as follows. If $f,f'\in\Sm M$, then one can define
$ff'(x)$ either by
\[
\sum_{mm'=x}{f(m)\cdot f'(m')} \textrm{\quad or\quad }
\sum_{m'm=x}{f'(m')\cdot f(m)},
\]
and the two coincide precisely when the \kl{semiring} is commutative.
\AP For this reason, for the rest of the paper we will only consider
commutative \kl{semirings} $S$.
We also consider the associated $\Set$-monad $\Sm$, along with the
profinite monad $\intro*\wS$ on $\Stone$ (cf.\
Section~\ref{sec:profinite-monads}).

Throughout the section we fix an arbitrary finite and commutative \kl{semiring} $S$. Let $X$ be a \kl{Boolean space}, and denote by $B$ its dual
algebra. Next, we provide a concrete description of the \kl{Boolean
  space} $\wS X$ in terms of \emph{\kl{measures}}
on $X$. For more details and for the proofs of several facts mentioned in this section, the interested reader is referred to \cite{R2018}.

\begin{definition}
  \AP Let $X$ be a \kl{Boolean space} and $B$ the dual algebra. An
  $S$-\emph{\kl{valued measure}} (or just a \intro{measure} when the
  \kl{semiring} is clear from the context) on $X$ is a function
  $\mu\colon B\to S$ which is finitely additive, that is
  \begin{enumerate}
  \item $\mu(0)=0$, and
  \item $\mu(K \vee L) = \mu(K)+\mu(L)$ whenever $K,L\in B$ are
    disjoint.
  \end{enumerate}
  We remark that in item $1$ the first $0$ is the bottom of the Boolean algebra,
  while the second $0$ is in $S$. Also, one can express item $2$ without
  reference to disjointness:
  \begin{enumerate}
  \item[2'.] $\mu(K\vee L)+\mu(K\wedge L)=\mu(K)+\mu(L)$ for all
    $K,L\in B$.
  \end{enumerate}
\end{definition}
Note that our notion of \kl{measure} is not standard, as we only require
finite additivity. Also, the \kl{measure} is only defined on the
clopens of the space $X$. Finally, it takes values in a (finite and 
commutative) \kl{semiring}.

\begin{notation}
  \AP Let $X$ be a set and $f\colon X\to S$ a function. If $Y\subseteq
  X$ is a subset such that the sum $\sum_{x\in Y}{f(x)}$ exists in
  $S$, then we write
  \[
  \intro{\inte_Y} f:=\sum_{x\in Y}{f(x)}.
  \]
  If $B\subseteq \P X$, and $\inte_Y f$ exists for each $Y\in
  B$, then $\inte f\colon B\to S$ denotes the function taking $Y$
  to $\inte_Y f$.
\end{notation}
  \AP Suppose $X$ is a \kl{Boolean space} and $B$ is its dual
  algebra. The Boolean algebra $\intro{\wB}$ dual to $\wS X$ is the
  subalgebra of $\P(\Sm X)$ generated by the elements of the form
  \[
  \intro{\SB{L}{k}}:=\{f\in \Sm X\mid \inte_L f=k\},
  \]
  for $L\in B$ and $k\in S$. For a proof of this fact, see \cite[Lemma~4.2]{R2018}.
Regarding the elements of $\wS X$ as Boolean algebra homomorphisms $\varphi\colon \wB\to
\2$, we can define a function
\begin{align}\label{eq:bijection-elements-measures}
  \wS X \longrightarrow \{\mu\colon B\to S\mid \mu\text{ is a \kl{measure} on } X\}, \ \ \varphi\mapsto \mu_{\varphi}
\end{align}
where $\mu_{\varphi}$ is the measure sending $L\in B$ to the unique $k\in S$ such that
$\varphi\SB{L}{k}=1$. 
In turn, the set of all \kl{measures} on $X$ is equipped with a natural topology,
generated by the sets of the form
\begin{equation}\label{eq:subbasic-measures}
\SBp{L}{k}=\{\mu\colon B\to S\mid \mu\text{ is a \kl{measure} on } X, \ \mu(L)=k\}
\end{equation}
for $L\in B$ and $k\in S$ (the notation is justified by Proposition~\ref{prop:SX-in-wSX} below). With respect to this topology, the space $\wS X$ admits the following measure-theoretic characterisation.
\begin{theorem}\label{th:profinite-measures}
  Let $S$ be a finite and commutative \kl{semiring}. For any \kl{Boolean
    space} $X$, the map in \eqref{eq:bijection-elements-measures} yields a homeomorphism between $\wS X$ and the space of $S$-valued
  \kl{measures} on $X$.
\end{theorem}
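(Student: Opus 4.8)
The plan is to show that the map $\Phi$ in \eqref{eq:bijection-elements-measures}, sending $\varphi\in\wS X$ to the assignment $\mu_\varphi$, is a well-defined continuous bijection onto the space of $S$-valued \kl{measures} on $X$, and then to upgrade it to a homeomorphism via a compactness argument. Recall that a point of $\wS X$ is a Boolean homomorphism $\varphi\colon\wB\to\2$ and that $\wB$ is generated inside $\P(\Sm X)$ by the sets $\SB{L}{k}$, for $L\in B$ and $k\in S$.

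First I would verify that $\mu_\varphi$ is a well-defined \kl{measure}. For fixed $L\in B$ the sets $\SB{L}{k}$, with $k$ ranging over the finite \kl{semiring} $S$, partition $\Sm X$, so the homomorphism $\varphi$ sends exactly one of them to $1$; this makes $\mu_\varphi\colon B\to S$ a well-defined function. Since $\inte_0 f=0$ for every $f$, one has $\SB{0}{0}=\Sm X$, whence $\mu_\varphi(0)=0$. For additivity on disjoint $K,L\in B$, observe that $\inte_K f=k_1$ and $\inte_L f=k_2$ force $\inte_{K\vee L}f=k_1+k_2$, so $\SB{K}{k_1}\cap\SB{L}{k_2}\subseteq\SB{K\vee L}{k_1+k_2}$; taking $k_1=\mu_\varphi(K)$ and $k_2=\mu_\varphi(L)$ and applying $\varphi$, which preserves meets and order, yields $\mu_\varphi(K\vee L)=k_1+k_2$.

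Continuity, injectivity, and separation of the target are then quick. The preimage under $\Phi$ of the subbasic set $\SBp{L}{k}$ from \eqref{eq:subbasic-measures} is exactly $\{\varphi\mid\varphi\SB{L}{k}=1\}$, the clopen of $\wS X$ dual to $\SB{L}{k}\in\wB$, so $\Phi$ is continuous. For injectivity, $\mu_\varphi$ determines $\varphi$ on every generator (namely $\varphi\SB{L}{k}=1$ iff $\mu_\varphi(L)=k$), and these generate $\wB$. The space of measures is Hausdorff, since distinct measures differ at some $L$ and the corresponding sets $\SBp{L}{k}$ then separate them.

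The crux is surjectivity, which I would obtain by a density argument. For $f\in\Sm X$ the point $\t_X(f)\in\wS X$ is the restriction to $\wB$ of the principal ultrafilter at $f$, so $\t_X(f)\SB{L}{k}=1$ iff $f\in\SB{L}{k}$; hence $\Phi(\t_X(f))$ is the point measure $L\mapsto\inte_L f$, and every such point measure lies in the image of $\Phi$. These are dense: given a measure $\mu$ and a basic neighbourhood $\bigcap_j\SBp{L_j}{\mu(L_j)}$, let $A_1,\dots,A_m$ be the nonempty atoms of the finite subalgebra of $B$ generated by the $L_j$, choose $x_i\in A_i$, and set $f(x_i)=\mu(A_i)$ with $f=0$ elsewhere. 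Finite additivity of $\mu$ over the disjoint atoms gives $\inte_{L_j}f=\sum_{A_i\subseteq L_j}\mu(A_i)=\mu(L_j)$ for each $j$, so $\mu_f$ lies in the chosen neighbourhood. Since $\wS X$ is compact and the target Hausdorff, the image of $\Phi$ is closed; containing the dense set of point measures, it is everything, so $\Phi$ is surjective. Finally, a continuous bijection from the compact space $\wS X$ to a Hausdorff space is automatically a homeomorphism, which finishes the proof. The main obstacle is this surjectivity step, and in particular producing a finite-support $f$ realising prescribed measure values; the atom construction settles it cleanly.
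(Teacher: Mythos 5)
Your proof is correct, and the first thing to note is that there is nothing in the paper to compare it against: the statement carries a proof-omission mark and is imported wholesale from \cite[Theorem 4.3]{R2018}, so you have supplied a proof where the paper only cites one. Your argument is self-contained given the identification of the dual algebra of $\wS X$ with the subalgebra $\wB\subseteq\P(\Sm X)$ generated by the sets $\SB{L}{k}$ (that is \cite[Lemma 4.2]{R2018}, which the paper's surrounding text also presupposes in order to even define the map \eqref{eq:bijection-elements-measures}, so assuming it is fair). All the steps check out: finiteness of $S$ is exactly what makes $\{\SB{L}{k}\}_{k\in S}$ a finite partition of $\Sm X$, so a homomorphism $\wB\to\2$ selects precisely one cell and $\mu_\varphi$ is well defined; additivity, continuity, injectivity and Hausdorffness are as you say; and the atom construction producing a finitely supported $f$ with $\inte_{L_j}f=\mu(L_j)$, combined with closedness of the image of a compact space in a Hausdorff space, cleanly settles surjectivity and hence the homeomorphism. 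Two small precisions. First, you should note explicitly that $\inte f$ is itself finitely additive, so that these ``point measures'' actually lie in the target space. Second, your invocation of $\t_X$ is both inessential and mildly circular: that $\t_X(f)$, as defined through the limit construction \eqref{eq:4}, corresponds to the principal ultrafilter at $f$ is essentially the content of Proposition~\ref{prop:SX-in-wSX}, which the paper presents as a \emph{consequence} of the present theorem; fortunately your surjectivity argument never needs this, since the restriction to $\wB$ of the principal ultrafilter at $f$ is a point of the dual space regardless of whether it is named $\t_X(f)$, and its image under the map is $\inte f$. Finally, your route differs in flavour from the one in the cited reference, which exploits the profinite presentation of $\wS X$ as a cofiltered limit of finite semimodules $\Sm X_i$ and matches measures with compatible families over finite quotients; that approach yields functoriality information along the way, whereas yours stays entirely inside Stone duality for the single algebra $\wB$ and gets surjectivity by the softer density-plus-compactness argument.
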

\begin{proof}
See \cite[Theorem~4.3]{R2018}.
\end{proof}

The previous result allows for a concrete representation 
of the map $\t_X$ in \eqref{eq:4} which, in turn, yields the following concrete
instantiation of Lemma \ref{l:t-dense-injective} (cf.\ also Remark
\ref{rm:injective-components-of-t}).

\begin{proposition}\label{prop:SX-in-wSX}
  If $X$ is a \kl{Boolean space}, then the function 
  \[
  \t_X\colon \Sm X\to \wS X, \ f\mapsto \inte f
  \]
  embeds $\Sm X$ in $\wS X$ as a dense subspace.  Moreover 
  $\SBp{L}{k}$, as defined in~\eqref{eq:subbasic-measures}, is the topological closure
  of $\SB{L}{k}$ whenever $L$ is a clopen of $X$, and $k\in S$.\qed
\end{proposition}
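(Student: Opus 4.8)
The plan is to first identify the abstract natural transformation $\t_X$ of \eqref{eq:4} with the concrete ``integration'' map through the homeomorphism of Theorem~\ref{th:profinite-measures}, and then to read off the embedding statement from Lemma~\ref{l:t-dense-injective} and the closure statement from a short topological argument.

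First I would check that $\int f$ really is an $S$-valued measure for each $f\in\Sm X$: the sum $\int_Y f=\sum_{x\in Y}f(x)$ is well defined because $f$ has finite support, it sends $\emptyset$ to $0_S$, and it is finitely additive since a sum over a disjoint union of two sets splits as the sum of the two partial sums. Hence $\int f$ names a point of $\wS X$ via Theorem~\ref{th:profinite-measures}. To see that this point is exactly $\t_X(f)$, I would unwind how $\wB$ is presented as a subalgebra of $\P(\Sm X)$ in \cite[Lemma 4.2]{R2018}: the clopen of $\wS X$ corresponding under Stone duality to the generator $\SB{L}{k}$ of $\wB$ has preimage $\SB{L}{k}=\{f\mid\int_L f=k\}$ under $\t_X$. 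Reading a point of $\wS X$ as a homomorphism $\wB\to\2$, this says that $\t_X(f)$ sends $\SB{L}{k}\mapsto 1$ precisely when $\int_L f=k$; by the definition of $\mu_{(-)}$ in \eqref{eq:bijection-elements-measures} this is exactly the statement $\mu_{\t_X(f)}(L)=\int_L f$ for every clopen $L$, i.e.\ $\t_X(f)=\int f$.

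With this in hand the embedding claim is quick. Injectivity of $\t_X$ follows from Remark~\ref{rm:injective-components-of-t}, since $\Sm$ is finitary and restricts to finite sets; alternatively one argues directly that for $f\neq g$ there is a point $x$ with $f(x)\neq g(x)$, and, $X$ being a Boolean space, a clopen $L$ that contains $x$ but avoids the remaining finitely many points of $\mathrm{supp}(f)\cup\mathrm{supp}(g)$, whence $\int_L f=f(x)\neq g(x)=\int_L g$. Density of the image is precisely the ``in particular'' clause of Lemma~\ref{l:t-dense-injective}. Injective with dense image is what it means for $\t_X$ to embed $\Sm X$ as a dense subspace, once $\Sm X$ is identified with its image carrying the subspace topology.

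The closure claim is the part requiring the most care, and the key observation is that $\SBp{L}{k}$ is not merely subbasic-open but in fact clopen. Indeed, for a fixed clopen $L$ the finitely many sets $\{\SBp{L}{k'}\}_{k'\in S}$ partition $\wS X$, since every measure takes a unique value on $L$ and $S$ is finite; each of them is therefore the complement of a finite union of opens, hence closed. Writing $\iota(\SB{L}{k})$ for the image of $\SB{L}{k}$ under $\t_X$, the inclusion $\iota(\SB{L}{k})\subseteq\SBp{L}{k}$ is immediate from the identification above, and since the right-hand side is closed we obtain one inclusion of closures. For the reverse inclusion I would take a measure $\mu$ with $\mu(L)=k$ and an arbitrary basic neighbourhood $N$ of it, intersect $N$ with $\SBp{L}{k}$ (which contains $\mu$) to get a nonempty open set $U$, and apply density of the image to produce some $f$ with $\int f\in U\subseteq\SBp{L}{k}$; this forces $\int_L f=k$, so $f\in\SB{L}{k}$ and $\int f\in\iota(\SB{L}{k})\cap N$. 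Thus every neighbourhood of $\mu$ meets $\iota(\SB{L}{k})$, giving $\mu\in\overline{\iota(\SB{L}{k})}$ and hence the desired equality $\overline{\iota(\SB{L}{k})}=\SBp{L}{k}$. The one delicate point, and the main obstacle, is to fold the constraint $\SBp{L}{k}$ into the neighbourhood before invoking density: this is exactly what makes density on all of $\Sm X$ deliver witnesses lying inside the prescribed subset $\SB{L}{k}$.
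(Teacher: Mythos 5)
Your proposal is correct and follows essentially the same route as the paper, which states this proposition with proof deferred to \cite{R2018} and rests on exactly the ingredients you assemble: identifying $\t_X$ with integration via Theorem~\ref{th:profinite-measures} and \cite[Lemma 4.2]{R2018}, injectivity via Remark~\ref{rm:injective-components-of-t} (or your direct clopen-separation argument), and density via Lemma~\ref{l:t-dense-injective}. Your closure argument --- observing that $\SBp{L}{k}$ is clopen because the finitely many sets $\SBp{L}{k'}$, $k'\in S$, partition $\wS X$, and then relativizing density to this clopen to produce witnesses in $\SB{L}{k}$ --- is the standard completion of that outline and matches the intended proof.
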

\begin{remark}
Theorem \ref{th:profinite-measures} yields, in particular, a characterisation of the
\kl{Vietoris space} $\V X$, for $X$ a \kl{Boolean space}, in terms of two-valued measures on $X$.
In \cite[Section~5]{R2018} it is also shown that, provided the semiring is idempotent (hence, a semilattice), measures can be replaced by their densities, i.e., functions $X\to S$ which are continuous with respect to the down-set topology on $S$. Thus, we recover the classical representation of the \kl{Vietoris space} of $X$
in terms of continuous functions from $X$ into the Sierpi\'{n}ski space.
\end{remark}
As follows by the general results in
Sections~\ref{sec:profinite-monads} and \ref{s:extending-set-bims},
respectively, $\wS X$ is a module over the \kl{semiring} $S$ and it is
a \kl{Boolean space with an internal monoid} if $X$ is. Here we identify the
concrete nature of this structure relative to the incarnation of $\wS
X$ as the space of \kl{measures} on $X$. We state these as lemmas and,
indeed, one can prove them directly. However, the results in this
section are just special cases of the more general results in
Sections~\ref{sec:profinite-monads} and \ref{s:extending-set-bims}.

\begin{lemma}\label{lem:addition}
  Let $X$ be a \kl{Boolean space} and let $\mu,\nu\in\wS X$. Then
  \[
  \mu+\nu\colon K\mapsto \mu(K)+\nu(K)
  \]
  is again a \kl{measure} on $X$ and the ensuing binary operation on
  $\wS X$ is continuous. Further, for any $k\in S$,
  \[
  k\mu\colon K\mapsto k\cdot\mu(K)
  \]
  is again a \kl{measure} on $X$ and the ensuing unary operation on
  $\wS X$ is continuous.\qed
\end{lemma}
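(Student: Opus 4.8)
The plan is to verify directly that the pointwise-defined operations send measures to measures and that the resulting maps on $\wS X$ are continuous. Throughout I identify $\wS X$ with the space of $S$-valued \kl{measures} on $X$ via the homeomorphism of Theorem~\ref{th:profinite-measures}, so an element of $\wS X$ literally \emph{is} a finitely additive function $\mu\colon B\to S$, where $B=\Cl(X)$ is the dual algebra.

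For the first claim, suppose $\mu,\nu\in\wS X$ and set $(\mu+\nu)(K)=\mu(K)+\nu(K)$. To see this is again a \kl{measure} I check the two defining axioms. Axiom 1 is immediate: $(\mu+\nu)(0)=\mu(0)+\nu(0)=0+0=0$. For finite additivity I use the symmetric reformulation 2$'$, which avoids case analysis: for all $K,L\in B$,
\[
(\mu+\nu)(K\vee L)+(\mu+\nu)(K\wedge L)=\mu(K\vee L)+\mu(K\wedge L)+\nu(K\vee L)+\nu(K\wedge L),
\]
and applying 2$'$ to each of $\mu$ and $\nu$ separately, this equals $\mu(K)+\mu(L)+\nu(K)+\nu(L)=(\mu+\nu)(K)+(\mu+\nu)(L)$; here I use commutativity and associativity of $+$ in $S$ to regroup. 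The scalar case is analogous: $(k\mu)(0)=k\cdot 0=0$ by the \kl{semiring} axiom $k\cdot 0=0$, and finite additivity follows since $k\cdot(\mu(K)+\mu(L))=k\cdot\mu(K)+k\cdot\mu(L)$ by left distributivity, so $(k\mu)(K\vee L)=k\cdot\mu(K\vee L)=k\cdot(\mu(K)+\mu(L))=(k\mu)(K)+(k\mu)(L)$ for disjoint $K,L$.

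For continuity it suffices to check that the preimage of each subbasic clopen $\SBp{L}{k}$ is open, since these sets generate the topology on $\wS X$. For addition, the preimage under $(\mu,\nu)\mapsto\mu+\nu$ of $\SBp{L}{k}$ consists of pairs with $\mu(L)+\nu(L)=k$; decomposing over the finitely many solutions, this is $\bigcup_{s+t=k}\SBp{L}{s}\times\SBp{L}{t}$, a finite union of products of subbasic clopens, hence open in the product topology (finiteness of $S$ is what makes this a \emph{finite} union and keeps it open). For the scalar map $\mu\mapsto k\mu$, the preimage of $\SBp{L}{k'}$ is $\{\mu\mid k\cdot\mu(L)=k'\}=\bigcup_{k\cdot s=k'}\SBp{L}{s}$, again a finite union of subbasic clopens. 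I expect no genuine obstacle: the main thing to keep straight is that the proof rests entirely on the three \kl{semiring} facts (associativity/commutativity of $+$, distributivity, and $k\cdot 0=0$) for the measure-theoretic part, and on the finiteness of $S$ together with the subbasis $\{\SBp{L}{k}\}$ for continuity. As the statement notes, these operations coincide with the restriction to $\wS X$ of the general $\Sm$-algebra structure from Lemma~\ref{lem:codensity-T-alg} and the monad-extension of Theorem~\ref{thm:lift}, so the direct verification here is really just an explicit unfolding of that abstract structure.
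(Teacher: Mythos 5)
Your proof is correct and complete for the statement as given, but it takes a genuinely different route from the paper, which offers no computation at all: there the lemma is stated with its proof omitted, justified by the remark immediately preceding it that all results of this section are special cases of the general results of Sections~\ref{sec:profinite-monads} and~\ref{s:extending-set-bims}. On that route, $\wS X$ carries an $S$-semimodule structure because $|\wT X|$ is a $T$-algebra for any $\Set$-monad $T$ (Lemma~\ref{lem:codensity-T-alg}, applied to $T=\Sm$), and continuity of the operations comes from the construction of $\wS X$ as a cofiltered limit of finite discrete $S$-semimodules, the operations being computed componentwise in the limit. You instead carry out the direct verification that the paper explicitly mentions but does not perform (``one can prove them directly''): pointwise checking of the measure axioms, correctly isolating the semiring facts used, and checking continuity on the subbasis of sets $\SBp{L}{k}$. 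Your approach buys self-containedness and makes the hypotheses explicit; the paper's buys generality and avoids computation, but leaves implicit the identification of the abstract semimodule structure with the pointwise operations on measures---an identification which your computation is, in effect, supplying. One small correction: in the continuity step, finiteness of $S$ is not what ``keeps it open''---arbitrary unions of open sets are open, so $\bigcup_{s+t=k}\SBp{L}{s}\times\SBp{L}{t}$ would be open regardless of the number of terms; finiteness of $S$ is needed rather if one wants the preimage to be clopen, and, more fundamentally, for the measure-theoretic description of $\wS X$ (Theorem~\ref{th:profinite-measures}) on which the whole discussion rests to be available at all.
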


This accounts for the $S$-semimodule structure on $\wS X$. Now assume
that $X$ is not just a \kl{Boolean space}, but a \swim.
To improve readability, we assume $h\colon M\to X$ is injective and
identify $M$ with its image.
Firstly, we observe that $\Sm M$ sits as a dense subspace of $\wS X$
by composing the map $\Sm h\colon\Sm M\to\Sm X$ with the integration
map of Proposition~\ref{prop:SX-in-wSX}. This is the concrete
incarnation, in the case of the monad $\Sm$, of
Lemma~\ref{l:t-dense-injective}.
\begin{lemma}\label{lem:SM-in-wSX}
  Let $(X,M)$ be a \kl{Boolean space with an internal monoid}.  Then
  \[
  \Sm M \to \wS X, \ f\mapsto \inte f
  \]
  is the map $\wh$ from~\eqref{eq:map-h-hat} transporting $\Sm M$ into
  a dense subspace of $\wS X$. \qed
\end{lemma}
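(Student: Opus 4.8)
The plan is to unwind the definition of $\wh$ from \eqref{eq:map-h-hat} and read it off through the concrete description of $\t_X$ supplied by Proposition~\ref{prop:SX-in-wSX}; the density claim will then be a direct instance of Lemma~\ref{l:t-dense-injective}, so the lemma is essentially the promised concrete incarnation of that general statement for $T=\Sm$.

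First I would recall that, for the monad $T=\Sm$, the map $\wh$ of \eqref{eq:map-h-hat} is by definition the composite $\Sm M\xrightarrow{\Sm h}\Sm X\xrightarrow{\t_X}\wS X$. Fix $f\in\Sm M$. By functoriality of $\Sm$, the element $\Sm h(f)$ is the finitely supported function $g\colon X\to S$ given by $g(y)=\sum_{h(x)=y}f(x)$; since we have assumed $h$ injective and identified $M$ with its image, this is simply $f$ regarded as a function on $X$ whose support lies inside $M$. Proposition~\ref{prop:SX-in-wSX} then identifies $\t_X(g)$ with the measure $\inte g$, which sends a clopen $L$ of $X$ to $\sum_{y\in L}g(y)$.

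The only computation to carry out is the elementary change of variables
\[
\sum_{y\in L}g(y)=\sum_{y\in L}\ \sum_{h(x)=y}f(x)=\sum_{x\in h^{-1}(L)}f(x)=\sum_{x\in L\cap M}f(x)=\inte_L f,
\]
valid because $f$ has finite support and $h^{-1}(L)=L\cap M$ under our identification. This shows $\wh(f)=\inte f$ as measures on $X$, so the map in the statement coincides with $\wh$.

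Finally, density of the image is immediate: by the definition of a {\swim}, the function $h\colon M\to X$ has dense image, so Lemma~\ref{l:t-dense-injective} applied to $T=\Sm$ says precisely that the composite $\Sm M\xrightarrow{\Sm h}\Sm X\xrightarrow{\t_X}\wS X$, namely $\wh$, has dense image in $\wS X$. I do not expect a genuine obstacle here; the entire content is the reconciliation of the abstract composite $\t_X\circ\Sm h$ with the pointwise integration formula, and the only step requiring any care is the reindexing of the finite sum displayed above.
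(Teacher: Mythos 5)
Your proposal is correct and follows exactly the paper's route: the paper treats this lemma as the ``concrete incarnation'' of Lemma~\ref{l:t-dense-injective} for $T=\Sm$, obtained by unwinding $\wh=\t_X\circ\Sm h$ via the integration map of Proposition~\ref{prop:SX-in-wSX}, with density coming from Lemma~\ref{l:t-dense-injective}; your reindexing computation just makes explicit the identification the paper leaves implicit under the assumption that $h$ is injective.
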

We remark that, since we assumed $h$ is injective, so is the map
$\wh$ in the previous lemma (cf.\ Remark
\ref{rm:injective-components-of-t}).  Now, to exhibit the {\swim}
structure of $\wS X$, we start by identifying the actions of $M$ on
$\wS X$.

\begin{lemma}\label{lem:action of M}
  Let $(X,M)$ be a \kl{Boolean space with an internal monoid}.
  Further, let $\mu\in\wS X$ and $m\in M$. Then
  \[
  m\mu\colon K\mapsto \mu(m^{-1}K),
  \]
  where $m^{-1}K=\{x\in X\mid mx\in K\}$ whenever $K\subseteq X$ is
  clopen, is again a \kl{measure} on $X$.  This defines a left action
  of $M$ on $\wS X$ with continuous components. Similarly,
  \[
  \mu m\colon K\mapsto \mu(Km^{-1})
  \]
  defines a right action of $M$ on $\wS X$ with continuous components,
  and these actions are compatible in the sense that $(m\mu)n=m(\mu
  n)$.\qed
\end{lemma}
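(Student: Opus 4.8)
The plan is to verify all four assertions by hand, translating the measure-theoretic conditions through the continuous endofunctions underlying the {\swim} structure. Writing $\lambda(m),\rho(m)\in\cf{X}{X}$ for the left and right actions of $m$ on $X$ coming from Definition~\ref{def:swim} (under the standing identification of $M$ with its image), I first observe that the two defining formulas are preimages: $m^{-1}K=\lambda(m)^{-1}(K)$ and $Km^{-1}=\rho(m)^{-1}(K)$. Since $\lambda(m)$ and $\rho(m)$ are continuous and $K$ is clopen, both preimages are again clopen, hence lie in the dual algebra $B=\Cl(X)$, so that $\mu(m^{-1}K)$ and $\mu(Km^{-1})$ are well defined. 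This is the key reduction of the whole argument: every claim below follows from the elementary facts that taking preimages commutes with the Boolean operations and reverses composition.

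For the first assertion, that $m\mu$ is a \kl{measure}, I would check the two defining conditions. Normalisation is immediate since $\lambda(m)^{-1}(\emptyset)=\emptyset$, whence $(m\mu)(0)=\mu(0)=0$. For finite additivity, if $K,L\in B$ are disjoint then so are $\lambda(m)^{-1}(K)$ and $\lambda(m)^{-1}(L)$, and $\lambda(m)^{-1}(K\vee L)=\lambda(m)^{-1}(K)\vee\lambda(m)^{-1}(L)$; additivity of $m\mu$ then follows from that of $\mu$. The same computation applies verbatim to $\mu m$ with $\rho(m)$ in place of $\lambda(m)$.

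Next I would verify the action axioms and continuity. That $\lambda$ is a left action on $X$ means $\lambda(1)=\id_X$ and $\lambda(mn)=\lambda(m)\circ\lambda(n)$; applying preimages (which reverse composition) gives $1\mu=\mu$ and $((mn)\mu)(K)=\mu\big(\lambda(n)^{-1}\lambda(m)^{-1}(K)\big)=(m(n\mu))(K)$, so $\mu\mapsto m\mu$ is indeed a left action of $M$ on $\wS X$. Continuity is read off directly from the topology of $\wS X$ given by the subbasic sets $\SBp{L}{k}$ of~\eqref{eq:subbasic-measures}: the preimage of $\SBp{L}{k}$ under $\mu\mapsto m\mu$ is exactly $\SBp{m^{-1}L}{k}$, which is again subbasic open since $m^{-1}L$ is clopen. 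The right action and its continuity are handled symmetrically, $\rho$ being an anti-homomorphism for composition.

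Finally, the compatibility $(m\mu)n=m(\mu n)$ is the one point where the full {\swim} structure, rather than the two actions taken separately, is needed, and is thus the conceptual heart of the argument. Unwinding both sides reduces it to the identity of clopens $m^{-1}(Kn^{-1})=(m^{-1}K)n^{-1}$, which in turn amounts to $(mx)n=m(xn)$ for every $x\in X$, i.e.\ to the commutativity of the left and right $M$-actions on $X$. This commutativity holds in any {\swim} (see the remark following Definition~\ref{def:swim}), so the identity follows. As a sanity check, one could instead obtain the lemma by specialising the definitions of $\w{\lambda}$ and $\w{\rho}$ from the proof of Theorem~\ref{thm:lift} under the homeomorphism of Theorem~\ref{th:profinite-measures}, but the direct verification above is more transparent.
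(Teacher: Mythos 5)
Your proof is correct, and it takes a genuinely different route from the paper's. The paper never verifies this lemma by hand: as stated just before the lemma, the results of Section~\ref{sec:measures-s} are obtained as special cases of the general results of Sections~\ref{sec:profinite-monads} and~\ref{s:extending-set-bims}. Concretely, in the paper's scheme the component at $m$ of the left action on $\wS X$ is $\wS(\lambda(m))$, i.e.\ the functorial image under $\wS$ of the continuous map $\lambda(m)\in\cf{X}{X}$; this is the restriction along the unit $M\to\Sm M$ of the action $\w{\rho}$, $\w{\lambda}$ constructed in Theorem~\ref{thm:lift}, so the action laws, continuity, and compatibility come for free from that theorem, and the displayed formula $m\mu\colon K\mapsto\mu(m^{-1}K)$ is just the observation that, under the homeomorphism of Theorem~\ref{th:profinite-measures}, $\wS$ applied to a continuous map acts on measures by pushforward (preimage on clopens). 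Your argument instead reproves everything from scratch: the measure axioms via preimage commuting with Boolean operations, the action laws via $\lambda$ being a homomorphism and $\rho$ an anti-homomorphism into $(\cf{X}{X},\circ)$, continuity via the subbasis computation sending $\SBp{L}{k}$ back to $\SBp{m^{-1}L}{k}$, and compatibility via $\rho(n)\circ\lambda(m)=\lambda(m)\circ\rho(n)$, which holds in any such structure by the remark following Definition~\ref{def:swim}. All of these steps are sound, and this is exactly the direct verification the paper alludes to when it says ``one can prove them directly.'' What your approach buys is self-containedness and transparency --- in particular it isolates precisely where the full structure is needed, namely that compatibility of the two actions on $\wS X$ is dual to the commutation of the two actions on $X$, which itself rests on density of the image of $h$. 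What the paper's approach buys is economy and canonicity: nothing needs rechecking, and one sees that these formulas are not ad hoc but the unique structure produced by the monad extension, which is then what makes the duality-theoretic analysis of Section~\ref{sec:duality-expl} (Lemmas~\ref{lem:11}--\ref{lem:1}) line up with it automatically. Your closing sentence shows you saw both routes; you simply executed the one the paper chose to omit.
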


Using the $S$-semimodule structure of $\wS X$ (see Lemma
\ref{lem:codensity-T-alg}), along with the \kl{biaction} of $M$ on
$\wS X$ provided by the previous lemma, it is easy to obtain the
\kl{biaction} of $\Sm M$ on $\wS X$. The following can be regarded as
the specific incarnation of Theorem \ref{thm:lift}.

\begin{proposition}\label{p:SM-acts-wSX}
  Let $(X,M)$ be a \kl{Boolean space with an internal monoid}.  The
  map
  \[
  \Sm M\times \wS X \to \wS X, \ (f,\mu)\mapsto f\mu:=\sum_{m\in M}
  f(m)\cdot m\mu
  \]
  is a left action of $\Sm M$ on $\wS X$ with continuous components. A
  right action with continuous components may be defined
  similarly. These two actions are compatible and provide the
  {\swim} structure on $(\wS X, \Sm M)$.\qed
\end{proposition}

Finally, we consider a restriction of the above action of $\Sm M$ on
$\wS X$ which we will need for the construction of the space $\DssX$ in
Section \ref{s:recog-transducers}. This is given by precomposing with
the unit of the monad $\wS$:
\[
\w{\eta}_X\colon X\to \wS X, \ x\mapsto \mu_x
\]
where $\mu_x(K)=1$ if $x\in K$, and $\mu_x(K)=0$ otherwise. That is, $\mu_x= \inte \chi_x$ where $\chi_x$ is the characteristic
function of $\{x\}$ into $S$. It is immediate that this map embeds $X$ as a
(closed) subspace of $\wS X$. Thus we obtain an ``action''
\[
\Sm M\times X\to \wS X, \ (f,x)\mapsto f\mu_x.
\]
Next we observe that this ``action'' factors through the map $ \Sm
X\to \wS X$ defined in Proposition~\ref{prop:SX-in-wSX}.
\begin{lemma}\label{lem:SM-acts-X}
  Let $(X,M)$ be a \kl{Boolean space with an internal monoid}.
  Consider the map
  \[
  \Sm M\times X \to \Sm X, \ (f,x)\mapsto fx,
  \]
  where $fx(y):=\sum_{mx=y}f(m)$. Then we have
  \[
  f\mu_x= \inte fx.
  \]
  Furthermore, for each $f\in \Sm M$, the assignment $x\mapsto \inte
  fx$ is continuous.\qed
\end{lemma}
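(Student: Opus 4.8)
The plan is to establish the identity in part 1 by a direct computation on clopens, and then to derive the continuity in part 2 from the fact that the left-action components $\lambda(m)$ are continuous, so that the preimages of subbasic opens of $\wS X$ turn out to be clopen. Before anything else I would check that $fx$ genuinely lies in $\Sm X$: since $f\in\Sm M$ has finite support, say $\{m_1,\dots,m_n\}$ with $f(m_i)=s_i$, the support of $fx$ is contained in the finite set $\{m_1x,\dots,m_nx\}$, so $fx$ has finite support and $\inte fx$ is a well-defined measure.

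For part 1, both $f\mu_x$ and $\inte fx$ are measures on $X$, so it suffices to show they agree on an arbitrary clopen $K$. Unfolding the definition of the action from Lemma~\ref{lem:SM-acts-wSX}, together with the formula $m\mu\colon K\mapsto\mu(m^{-1}K)$ from Lemma~\ref{lem:action of M} and the fact that $\mu_x(m^{-1}K)=1$ exactly when $mx\in K$ (and $0$ otherwise), one computes $(f\mu_x)(K)=\sum_{m\,:\,mx\in K}f(m)$. On the other side, $(\inte fx)(K)=\sum_{y\in K}\,\sum_{mx=y}f(m)$, and reindexing this finite double sum along $y=mx$ collapses it to $\sum_{m\,:\,mx\in K}f(m)$ as well. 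Hence the two measures coincide.

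For part 2, I would use that the sets $\SBp{L}{k}$ (for $L$ clopen and $k\in S$) form a subbasis for the topology on $\wS X$, so continuity of $x\mapsto\inte fx$ can be checked on such subbasic opens. By part 1, the preimage of $\SBp{L}{k}$ is $\{x\in X\mid \sum_{i\,:\,m_ix\in L}s_i=k\}$. For each $i$ the set $\{x\mid m_ix\in L\}=\lambda(m_i)^{-1}(L)$ is clopen, since $\lambda(m_i)$ is continuous (Definition~\ref{def:swim}) and $L$ is clopen. As $f$ is finitely supported, the value $\sum_{i\,:\,m_ix\in L}s_i$ depends only on which of the finitely many clopens $\lambda(m_i)^{-1}(L)$ contain $x$; partitioning $X$ into the clopen cells $U_I=\bigcap_{i\in I}\lambda(m_i)^{-1}(L)\cap\bigcap_{i\notin I}(\lambda(m_i)^{-1}(L))^c$ indexed by $I\subseteq\{1,\dots,n\}$, the preimage is the finite union $\bigcup\{U_I\mid \sum_{i\in I}s_i=k\}$, hence clopen and in particular open.

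The computations in part 1 are routine finite-sum manipulations, so the only genuinely substantive point is the continuity argument in part 2. The key idea there — and the step I expect to require the most care to state cleanly — is that finite support of $f$ makes $x\mapsto\inte_L fx$ a \emph{locally constant} function of $x$: it factors through the finite Boolean combination of the clopens $\lambda(m_i)^{-1}(L)$, which is what forces the subbasic preimages to be clopen. Everything else reduces to the fact that checking continuity on a subbasis suffices and that continuous endofunctions pull clopens back to clopens.
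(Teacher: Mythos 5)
Your proof is correct, but it takes a genuinely different route from the paper, which in fact prints no argument for this lemma at all: the statement is given with its proof omitted, because the paper's stance (declared just before Lemma~\ref{lem:addition}) is that the lemmas of Section~\ref{sec:measures-s}, while provable directly, are special cases of the general results of Sections~\ref{sec:profinite-monads} and~\ref{s:extending-set-bims}. Read that way, the continuity claim is automatic and needs no subbasis argument: $x\mapsto f\mu_x$ is the composite of the monad unit $\w{\eta}_X\colon X\to\wS X$ with the action component $\w{\lambda}(f)$ of the {\swim} $(\wS X,\Sm M)$ produced by Theorem~\ref{thm:lift}, and both maps are continuous by construction --- $\w{\lambda}(f)$ lies in $\cf{\wS X}{\wS X}$ because $\w{\lambda}$ is defined as an $\Sm$-algebra morphism into that algebra of continuous endofunctions. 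On this reading the only real content of the lemma is the factorisation identity $f\mu_x=\inte fx$, which is exactly your part 1; your computation of both sides on an arbitrary clopen $K$ --- each equals $\sum_{m\,:\,mx\in K}f(m)$ --- together with the fact that measures are determined by their values on clopens, is precisely the ``direct verification'' the paper alludes to but does not carry out. Your part 2 then re-proves from scratch what the general theory gives for free: the argument is sound (checking continuity on the subbasic sets $\SBp{L}{k}$ suffices, and finite support of $f$ does make $x\mapsto\inte_L fx$ locally constant, with the clopen cells $U_I$ witnessing this), and it has the virtue of being self-contained and of exposing exactly where finiteness of the support and continuity of the components $\lambda(m_i)$ enter; its cost is redundancy, since it silently re-derives, in this special case, the continuity that Theorem~\ref{thm:lift} and Lemma~\ref{lem:SM-acts-wSX} already supply.
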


\section{Recognisers for operations given by $S$-valued
  transductions}\label{s:recog-transducers}
In this section we will see how we can use the extension of a
$\Set$-monad $T$ to {\swims} obtained in
Section~\ref{s:extending-set-bims} to generate recognisers for
languages obtained by applying an operation modelled by the monad $T$.

It is by now a standard result in the theory of formal languages that
many operations on languages can be modelled using transductions,
i.e., maps of the form $M\to\P N$ for two monoids $M$ and $N$,
see~\cite{PiSa82}. The starting point of this work is the observation
that the existential quantifier can also be modelled as a transduction,
as we will see in Section~\ref{sec:S-transduction}. Furthermore,
\kl{modular quantifiers} $\modexists$ of modulus $q$ fit into the same
pattern. The only difference is that instead of using transductions of
the form $M\to\P N$ one needs to replace the powerset $\P N$ with
the free $\Zq$-semimodule over $N$. More generally, we are interested
in operations that can be modelled as maps $M\to\Sm N$ with $\Sm$
denoting as before the free $\Sm$-semimodule monad. In category theory
these maps are known as \kl{Kleisli maps} for $\Sm$, the morphisms in the so-called Kleisli
category of $\Sm$.

We start Section~\ref{sec:T-transduction} by briefly recalling the
definition of the Kleisli maps for a monad. Then we present the
blueprint of our approach, using an additional assumption on the
$T$-Kleisli map under consideration (namely that it is a monoid
morphism), and in Section~\ref{sec:S-transduction} we instantiate $T$
to the \kl{free $S$-semimodule monads} for commutative \kl{semirings}
$S$ and we adapt the general theory developed previously.

\subsection{Recognising operations modelled by a monad $T$}
\label{sec:T-transduction}

\AP Consider a monad $(T,\eta,\mu)$ on a category $\C$. The
\intro{Kleisli category} $\intro\Kl(T)$ for $T$ is equivalent to the
category of \emph{free} \kl{$T$-algebras} and has played a crucial
r\^ole in program semantics for modelling functions with
side-effects. Formally, the objects of $\Kl(T)$ are the objects
in the underlying category $\C$ and a morphism $X\to Y$ in $\Kl(T)$
(called a $T$-\intro{Kleisli map}) is a morphism $X\to TY$ in
$\C$. One can think of an object $X$ in $\Kl(T)$ as the generator of
the free algebra $TX$. Notice that morphisms $X\to Y$ in $\Kl(T)$ are
in one-to-one correspondence with \kl{$T$-algebra} morphisms
$TX\to TY$ between the corresponding free algebras.

Hereafter we assume $T$ is an arbitrary \kl{commutative} and
\kl{finitary} monad on $\Set$, and let $A,B$ be finite sets. We start
by observing that a \kl{Kleisli map} $R\colon \As\to T(\Bs)$ could be
used to transform languages in the alphabet $B$ into languages in the
alphabet $A$. Assume that $L=\phi^{-1}(P)$ for some monoid morphism
$\phi\colon \Bs\to M$ and some $P\subseteq M$.  We consider the
function
\[
\begin{tikzcd}
  \As\ar[r,"R"] & T(\Bs)\ar[r,"T\phi"] & TM.
\end{tikzcd}
\]
Since $T$ is a \kl{commutative monad}, we know that it lifts to the
category of monoids and thus we can see $T\phi$ as a monoid
morphism. If $R$ is also a monoid morphism, and we will assume this
only in this subsection, then so is $T\phi\circ R$, and it could be
used for language recognition in the standard way. Assuming that we
have a way of turning the recognising sets in $M$ into recognising
sets in $TM$, i.e., that we have a predicate transformer $\P M\to \P
TM$ mapping $P$ to $\widetilde{P}$, we obtain a language
$\widetilde{L}$ in $\As$ as the preimage of $\widetilde{P}$ under the
morphism $T\phi\circ R$.

\begin{remark}
  \label{rem:1}
  In the running example of the next subsection we will need maps $R$
  that are not monoid morphisms, and in that setting we will have to
  use a matrix representation of the transduction
  instead. Nevertheless, the techniques used in the next subsection can
  be seen as an adaptation of the theory developed here for the case
  when $R$ is indeed a monoid morphism.
\end{remark}

In this work we go beyond regular languages, so we are interested in
languages \kl{recognised} by a {\swim} \kl{morphism} as follows:
\begin{equation}
  \label{eq:9}
  \begin{tikzcd}
    \kl{\beta}(\Bs) \ar[r,"\wphi"] & X \\
    \Bs \ar[r,"\phi"]\ar[u] & M\ar[u,swap,"h"]
  \end{tikzcd}
\end{equation}

We recall that to improve readability, and since ${\wphi}$ is uniquely
determined by its restriction to $\Bs$, we sometimes denote such a
\kl{morphism} of \swims\ simply by $\phi$, instead of $(\wphi,\phi)$.

By Theorem~\ref{thm:lift}, we know that $({\wT}X,TM)$ is a {\swim},
and in what follows we use it for \kl{recognising} $A$-languages by
constructing another {\swim} \kl{morphism}
$(\kl{\beta}(\As),\As)\to({\wT}X,TM)$ as in
Lemma~\ref{lem:Kleisli-BiM} below. To this end, we need a way of
lifting the \kl{Kleisli map} $R\colon \As\to T(\Bs)$ to a \kl{Kleisli
  map} for the monad $\wT$. This can be done in a natural way using a
natural transformation
\begin{align*}
  \thash\colon\kl{\beta} T\to\wT\kl{\beta}
\end{align*}
obtained from the natural transformation $\t_X\colon T|X|\to |\wT X|$
defined in~\eqref{eq:4} using the unit $\iota$ and the counit $\epsilon$
of the adjunction $\beta\dashv\fgt$. Explicitly, $\intro*\thash$ is obtained
as the composite
\begin{equation}
  \label{eq:7}
\begin{tikzcd}
  \beta T\ar[r,"\beta T \iota"] & \beta T\fgt \beta\ar[r,"\kl{\beta}\t\kl{\beta}"] &\beta\fgt\wT\beta\ar[r,"\epsilon\wT\beta"]& \wT\beta\,.
\end{tikzcd}
\end{equation}
(This is a rather standard construction in category theory, see for
example~\cite[Theorem~9]{Street}). In down-to-earth terms, the component of $\thash$ at the set $Y$ is the free extension of the composite
\[\begin{tikzcd}
TY \arrow{r}{T\iota_Y} & {T|\beta Y|} \arrow{r}{\t_{\beta Y}} & {|\wT\beta Y|}.
\end{tikzcd}\] 
It follows that the natural
transformation $\thash\colon \kl{\beta} T\to {\wT}\kl{\beta}$ also
behaves well with respect to the units and multiplications of the
monads. That is, in the terminology of~\cite{Street}, the pair
$(\kl{\beta},\thash)$ is a \emph{monad opfunctor}.  This in turn
implies that $\kl{\beta}$ can be lifted to a functor $\intro{\wbeta}$
between the Kleisli categories, making the next square commute, where
the vertical functors are the free functors from the base to the
Kleisli categories.
\begin{equation*}
  \begin{tikzcd}
    {\Kl(T)}\ar[r,"\wbeta"] & \Kl({\wT})  \\
    \Set\ar[u]\ar[r,"\kl{\beta}"] & \Stone\ar[u]
  \end{tikzcd}
\end{equation*}
The functor $\wbeta$ maps the \kl{Kleisli map} $R\colon \As\to T(\Bs)$
to the \kl{Kleisli map}
$\w{R}\colon \kl{\beta}(\As)\to \wT\kl{\beta}(\Bs)$ given by the
composite
  \begin{equation}\label{eq:R-hat}
  \begin{tikzcd}
    \w{R}\colon\kl{\beta}(\As)\ar[r,"\kl{\beta} R"] & \kl{\beta}
    T(\Bs)\ar[r,"\thash"] &{\wT} \kl{\beta} (\Bs)\,.
  \end{tikzcd}
  \end{equation}

\begin{lemma}
  \label{lem:Kleisli-BiM}
  Assume $R\colon \As\to T(\Bs)$ is a monoid morphism. If the pair $({\wphi},\phi)$ from~\eqref{eq:9} is a \kl{morphism} of
  \swims, then so is the pair $({\wT}{\wphi}\circ\w{R}, T\phi\circ R)$
  described in the next diagram.
  \[
  \begin{tikzcd}
    {\kl{\beta}(\As)}\ar[r,"{\w{R}}"] &{{\wT}\kl{\beta} (\Bs)} \ar[r,"{{\wT}{\wphi}}"] & {{\wT}X} \\
    \As\ar[u,"\iota"]\ar[r,"R"]& T(\Bs) \ar[r,"T\phi"] &
    TM\ar[u,swap,"\t_X\circ Th"]
  \end{tikzcd}
  \]

\end{lemma}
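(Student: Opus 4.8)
The plan is to check directly the three conditions making $(\wT\wphi\circ\w R,\,T\phi\circ R)$ a \kl{morphism} of \swims\ from $(\beta(\As),\As)$, whose internal structure map is the unit $\iota_{\As}$, to the target $(\wT X, TM)$, whose structure map is $\wh=\t_X\circ Th$ by Theorem~\ref{thm:lift}. The continuity of $\wT\wphi\circ\w R\colon\beta(\As)\to\wT X$ is immediate, being a composite of morphisms in $\Stone$. For the lower map, recall that throughout this subsection $R$ is assumed to be a monoid morphism; since $T$ is \kl{commutative} it lifts to monoids, so $T\phi$ is a monoid morphism and hence so is the composite $T\phi\circ R$. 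What remains --- and is the heart of the matter --- is the compatibility condition
\[
(\wT\wphi\circ\w R)\circ\iota_{\As}=\wh\circ(T\phi\circ R).
\]

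To prove this I would factor the outer rectangle through the intermediate vertical map $\t_{\beta(\Bs)}\circ T\iota_{\Bs}\colon T(\Bs)\to\wT\beta(\Bs)$, which is exactly the structure map produced by applying Theorem~\ref{thm:lift} to the \swim\ $(\beta(\Bs),\Bs)$, thereby splitting the rectangle into a left square over $R$ and a right square over $T\phi$. For the left square, unfolding $\w R=\thash\circ\beta R$ and applying naturality of $\iota$ at $R$ reduces the claim to the pointwise identity $\thash_{\Bs}\circ\iota_{T(\Bs)}=\t_{\beta(\Bs)}\circ T\iota_{\Bs}$. I would then verify the latter by unwinding the definition~\eqref{eq:7} of $\thash$: two applications of the naturality of $\iota$ push the unit past $T\iota_{\Bs}$ and $\t_{\beta(\Bs)}$, and a triangle identity of the adjunction $\beta\dashv\fgt$ cancels the resulting $\epsilon\circ\iota$. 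This step --- which says that the monad opfunctor $\thash$ restricts to the monad functor $\t$ on generators --- is where all the categorical bookkeeping concentrates, and I expect it to be the main obstacle.

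The right square, by contrast, should be routine. I would apply naturality of $\t\colon T\fgt\to\fgt\wT$ at the continuous map $\wphi$ to get $\wT\wphi\circ\t_{\beta(\Bs)}=\t_X\circ T\wphi$, and then feed in the hypothesis that $(\wphi,\phi)$ is a \swim\ \kl{morphism}, namely $\wphi\circ\iota_{\Bs}=h\circ\phi$. Substituting gives
\[
\wT\wphi\circ\t_{\beta(\Bs)}\circ T\iota_{\Bs}
=\t_X\circ T(\wphi\circ\iota_{\Bs})
=\t_X\circ T(h\circ\phi)
=(\t_X\circ Th)\circ T\phi,
\]
which is precisely the right square. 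Pasting the two squares along the intermediate map yields the desired compatibility, and together with the first two observations this shows that $(\wT\wphi\circ\w R,\,T\phi\circ R)$ is a \kl{morphism} of \swims.
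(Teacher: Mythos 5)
Your proposal is correct and follows essentially the same route as the paper: both proofs reduce the compatibility rectangle, via naturality of $\iota$ at $R$ and of $\t$ at $\wphi$ together with the hypothesis on $(\wphi,\phi)$, to the key identity $\thash\circ\iota T=\t\beta\circ T\iota$, which is then verified exactly as you describe --- unwinding definition~\eqref{eq:7} of $\thash$, two naturality squares for $\iota$, and the triangle identity $\fgt\epsilon\circ\iota\fgt=\id$.
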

\begin{proof}
  In the statement of the lemma we have omitted writing the forgetful
  functor $\fgt$ on the top line of the diagram. We will need it
  nevertheless in the proof. Using the definition of $\w{R}$, we need
  to show that the next diagram commutes:
  \[
  \begin{tikzcd}
    {|\kl{\beta}(\As)|}\ar[r,"{|\kl{\beta} R|}"] & {|\kl{\beta} T(\Bs)|}\ar[r,"|\thash|"] &{|{\wT}\kl{\beta} (\Bs)|} \ar[r,"{|{\wT}{\wphi}|}"] & {|{\wT}X|} \\
    & & T|\kl{\beta} (\Bs)|\ar[r,"{T|{\wphi}|}"]\ar[u,"\t\beta"] & T |X|\ar[u,"\t"]\\
    \As\ar[uu,"\iota"]\ar[r,"R"]&
    T(\Bs)\ar[uu,"\iota T"]\ar[ur,"T\iota"] \ar[r,"T\phi"] &
    TM\ar[ur,swap,"Th"] &
  \end{tikzcd}
  \]
  The two rectangles in the diagram above commute by naturality of
  $\iota$, respectively $\t$, and the bottom right rhombus commutes
  because $\phi$ is a \kl{morphism} of \swims. To prove that the
  middle trapezoid is commutative, we just have to recall how the
  transformation $\thash$ is defined, see~\eqref{eq:7}. In a
  2-categorical terminology, this is a simple exercise involving the
  mates $\t$ and $\thash$:
\[
\begin{tikzcd}[column sep=35pt]
  \fgt\beta T \ar[r,"\fgt\beta T\iota "]& \fgt\beta T\fgt\beta \ar[r,"\fgt\beta\t\beta"]& \fgt\beta\fgt\wT\beta \ar[r,"\fgt\epsilon\wT\beta"]&\fgt\wT\beta \\
  T \ar[r,swap,"T\iota "]\ar[u,"\iota T"]& T\fgt\beta \ar[r,swap,"\t\beta "]\ar[u,"\iota T\fgt\beta"]& \fgt\wT\beta \ar[u,"\iota\fgt\wT\beta"]\ar[ru,swap,"\id"] & 
\end{tikzcd}
\]
The squares commute by the naturality of $\iota$, whilst the triangle
commutes because $\fgt\epsilon\circ\iota\fgt=\id$.
\end{proof}

\subsection{Recognising quantified languages via $S$-transductions}
\label{sec:S-transduction}
Here we show how to construct {\swims} \kl{recognising} quantified
languages.  We point out that the content of this subsection could be
easily adapted to arbitrary \kl{Kleisli maps} for the monads of the
form $\wS$, for commutative \kl{semirings} $S$.
We start with a language $L$ in the extended alphabet $\Ats$
\kl{recognised} by a {\swim} \kl{morphism} as in the following diagram.
\begin{equation*}
  \begin{tikzcd}
    \kl{\beta} (\Ats) \ar[r,"{\wphi}"] & X \\
    \Ats \ar[r,"\phi"]\ar[u] & M\ar[u,swap,"h"]
  \end{tikzcd}
\end{equation*}
In other words, there exists a clopen $C$ in $X$ such that
$L=\phi^{-1}(C\cap M)$. The aim of this subsection is to construct
\kl{recognisers} for the quantified languages $\Lex$ and $\Lmodex$, as
defined in Section~\ref{sec:logic-words}.  To this end, using the
formal sum notation in the definition of the monad $\Sm$, we consider
the map $R\colon \As\to \Sm(\Ats)$ defined by
\[
w\mapsto \sum_{i=1}^{\len{w}}{1_S \cdot \mw{i}}.
\]
If $S$ is the Boolean semiring $2$, then $R$ simply associates with each
word $w$ the set of all words in $\Ats$ with the same shape as $w$ and
with exactly one \kl{marked letter}. The framework developed in the
previous subsection does not immediately apply, since $R$ is not a
monoid morphism. So the first step we have to take is to obtain a
monoid morphism from $R$, which will then be used to construct {\swim}
\kl{recognisers} for quantified languages.

Upon viewing $R$ as an $S$-transduction (see~\cite{Sakarovitch09}), we
observe that it is realised by the rational $S$-transducer $\TR$
pictured in Figure~\ref{fig:transducer}, in which we have drawn
transition maps only for a generic letter $a\in A$.

\begin{figure}[h]
  \centering
  \begin{tikzpicture}[->,>=stealth',shorten >=1pt,auto,node
    distance=2.4cm, semithick, scale=0.4]
    \tikzstyle{every state}=[draw]q

    \node[initial,state] (A) {$1$}; \node[state] (B) [right of=A]
    {$2$}; \node[draw=none] (C) [right of=B,xshift=-15mm,] {};

    \path (A) edge [loop above] node {$a|a$} (A) edge node {$a|a'$}
    (B) (B) edge [loop above] node {$a|a$} (B) (B) edge (C);
  \end{tikzpicture}
  \caption{The $S$-transducer $\TR$ realising $R$. All the transitions
    have weights $1_S$, and thus the transducer outputs value $1_S$
    for all pairs of the form $(w,\mw{i})$, with $w\in \As$ and $1\le
    i\le\len{w}$. }
  \label{fig:transducer}
\end{figure}
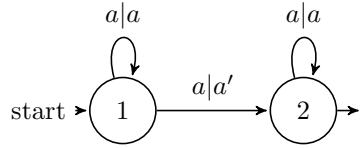

\AP This transducer provides the following representation of $R$ in
terms of a monoid morphism
\begin{equation}
  \label{eq:15}
  \Rmon\colon \As\to\mat{2}{\Sm(\Ats)},  
\end{equation}
where $\intro{\mat{n}{\Sm(\Ats)}}$ denotes the set of $n\times
n$-matrices over the \kl{semiring} $\Sm(\Ats)$. For a word $w\in \As$, the
matrix $\Rmon(w)$ has at position $(i,j)$ the formal sum of output
words obtained from the transducer $\TR$ by going from state $i$ to
state $j$ while reading input word $w$. That is, $\Rmon$ is given by
\[
w\mapsto
\begin{pmatrix}
  1_S\cdot \wz & \sum_{i}{1_S\cdot \mw{i}} \\
  0_S & 1_S\cdot \wz\\
\end{pmatrix}.
\]
The next two examples provide the motivation for considering the
particular transduction $R$ in the first place.

\begin{example}
  \label{exmp:lexists}
  Assume $S$ is the Boolean semiring $\2$, thus $\Sm=\Pfin$, and
  $R(w)=\{\mw{i}\mid 1\le i\le\len{w}\}$. The language $\Lex\subseteq
  \As$ is \kl{recognised} by the following composite monoid morphism,
  that will be denoted by $\phi_\exists$.
  \[
  \begin{tikzcd}
    \As\ar[r,"\Rmon"] & \mat{2}{\Pfin((A\times
      2)^*)}\ar[rr,"\mat{2}{\Pfin\phi}"] & & \mat{2}{\Pfin M}
  \end{tikzcd}
  \]
  Indeed, if $L=\phi^{-1}(P)$ for some $P\subseteq M$ then
  $\Lex=\phi_{\exists}^{-1}(\widetilde{P})$, where $\widetilde{P}$ is
  the set of matrices in $\mat{2}{\Pfin M}$ such that the finite set
  in position $(1,2)$ intersects $P$.
\end{example}
\begin{example}
  \label{exmp:lmodexists}
  Assume $S$ is the \kl{semiring} $\Zq$. The language
  $\Lmodex\subseteq \As$ is \kl{recognised} by the following composite
  monoid morphism, that will be denoted by $\phi_\modexists$.
  \[
  \begin{tikzcd}
    \As\ar[r,"\Rmon"] & \mat{2}{\Sm((A\times
      2)^*)}\ar[rr,"\mat{2}{\Sm\phi}"] & & \mat{2}{\Sm M}
  \end{tikzcd}
  \]
  Indeed, if $L = \phi^{-1}(P)$ with $P\subseteq M$ then
  $\Lmodex=\phi_{\modexists}^{-1}(\widetilde{P})$, where
  $\widetilde{P}$ is the set of matrices in $\mat{2}{\Sm M}$ such that
  the finitely supported function $f\colon\Zq\to M$ in position
  $(1,2)$ has the property that $\inte_P f= p$ in $\Zq$.
\end{example}

In view of Theorem~\ref{thm:lift}, we know that whenever $(X,M)$ is a
\swim, then so is $(\wS X,SM)$ with the actions of the internal monoid
as in Proposition~\ref{p:SM-acts-wSX}. Using this fact, we can prove the following lemma.

\begin{lemma}
  If $(X,M)$ is a \swim, then so is
  \begin{align*}
    (\mat{n}{\wS X},\mat{n}{\Sm M})
  \end{align*}
  for any integer $n\geq 1$.
\end{lemma}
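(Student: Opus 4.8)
The plan is to reduce the statement to the fact, recorded just above, that $(\wS X,\Sm M)$ is a \swim, by first making explicit the algebraic skeleton underlying it. I would observe that $\Sm M$ is not merely a monoid but a \emph{semiring}: its addition is the pointwise one coming from $S$, and its multiplication is the convolution product induced by the monoid structure of $M$, with unit $\eta_M(1_M)$. (The commutativity of $S$, fixed throughout this section, is precisely what collapses the two a priori distinct convolution products into one, as in the discussion preceding Remark~\ref{rem:commutativity-of-monad}.) Dually, the $S$-semimodule structure of $\wS X$ from Lemma~\ref{lem:addition}, together with the biaction of $\Sm M$ on $\wS X$ from Lemma~\ref{lem:SM-acts-wSX}, makes $\wS X$ into a two-sided $\Sm M$-module with continuous structure maps, for which the dense map $\wh\colon\Sm M\to\wS X$ of Lemma~\ref{lem:SM-in-wSX} is a bimodule morphism. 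The whole proof then amounts to transporting this bimodule picture through the $n\times n$-matrix construction.

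Concretely, I would equip the carriers as follows. The space $\mat{n}{\wS X}$ is the $n^2$-fold power $(\wS X)^{n^2}$, hence a \kl{Boolean space}, while $\mat{n}{\Sm M}$ is a monoid under matrix multiplication over the semiring $\Sm M$, with identity the diagonal matrix carrying $\eta_M(1_M)$. The required map is $\mat{n}{\wh}$, obtained by applying $\wh$ entrywise; since a finite product of dense subspaces is dense, it has dense image. Finally I would define the left and right actions by the evident matrix formulas, for $F\in\mat{n}{\Sm M}$ and $\Phi\in\mat{n}{\wS X}$,
\[
(F\Phi)_{ij}=\sum_{k=1}^{n}F_{ik}\Phi_{kj},\qquad (\Phi F)_{ij}=\sum_{k=1}^{n}\Phi_{ik}F_{kj},
\]
where each product $F_{ik}\Phi_{kj}$ is the action of $\Sm M$ on $\wS X$ from Lemma~\ref{lem:SM-acts-wSX} and the sums are taken in the $S$-semimodule $\wS X$ (Lemma~\ref{lem:addition}).

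It then remains to check the {\swim} axioms of Definition~\ref{def:swim}. Continuity of the two actions is immediate: for fixed $F$, each entry of $F\Phi$ or $\Phi F$ is a finite sum in $\wS X$ of terms depending continuously on $\Phi$ by Lemma~\ref{lem:SM-acts-wSX}, and addition on $\wS X$ is continuous by Lemma~\ref{lem:addition}, so continuity into the product $(\wS X)^{n^2}$ follows. That these are commuting left and right actions, and that $\mat{n}{\wh}$ intertwines the regular actions of $\mat{n}{\Sm M}$ on itself with its actions on $\mat{n}{\wS X}$ (that is, the diagrams of Definition~\ref{def:swim} commute), reduces entrywise to the bimodule laws for $\wS X$ over $\Sm M$, exactly as in the classical verification that matrices over a bimodule form a bimodule over the matrix semiring.

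I expect the only genuine content to be the associativity $(FG)\Phi=F(G\Phi)$ of the matrix action, which unwinds to the single module identity $(fg)\mu=f(g\mu)$ for $f,g\in\Sm M$ and $\mu\in\wS X$, together with the bimodule-morphism property of $\wh$; both follow by a direct computation from Lemmas~\ref{lem:action of M} and~\ref{lem:SM-acts-wSX}, using that the $M$-action on $\wS X$ is additive and commutes with scalar multiplication and that $m(m'\mu)=(mm')\mu$. The commutativity of $S$ enters precisely here, guaranteeing a single matrix monoid and two genuinely commuting actions, in the same way it was needed for Theorem~\ref{thm:lift} (cf.\ Remark~\ref{rem:commutativity-of-monad}).
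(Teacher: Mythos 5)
Your proposal is correct and follows essentially the same route as the paper: the paper likewise observes that $\mat{n}{\wS X}$ is a \kl{Boolean space} as a finite power of $\wS X$ and defines the \kl{biaction} of $\mat{n}{\Sm M}$ by matrix multiplication, using the actions of $\Sm M$ on $\wS X$ from Lemma~\ref{lem:SM-acts-wSX} together with the $S$-semimodule structure of $\wS X$ from Lemma~\ref{lem:addition}. The only difference is one of detail: the paper leaves the verification as ``follows easily,'' whereas you spell out the semiring/bimodule bookkeeping, the density of the image of $\mat{n}{\wh}$, and the continuity of the action components, all of which are accurate.
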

\begin{proof}
  The set $\mat{n}{\wS X}$ is a \kl{Boolean space} with
  respect to the product topology of $n\times n$ copies of $\wS X$.
  The statement then follows easily upon defining the actions of the
  monoid $\mat{n}{\Sm M}$ on $\mat{n}{\wS X}$ by using the actions of
  $\Sm M$ on $\wS X$ via matrix multiplication, and the $S$-semimodule
  structure of $\w{S}X$. For example, the left action of
  $(f_{ij})_{i,j}\in\mat{n}{\Sm M}$ on $(\mu_{ij})_{i,j}\in\mat{n}{\wS
    X}$ yields a matrix of \kl{measures} in $\wS X$ having at position
  $(i,j)$ the measure $\sum_{k=1}^n{f_{ik}\mu_{kj}}$.
\end{proof}

We next prove a result which entails that the monoid morphisms
$\phi_\exists$ and $\phi_\modexists$ constructed in
Examples~\ref{exmp:lexists} and~\ref{exmp:lmodexists} can be extended
to {\swim} \kl{morphisms} \kl{recognising} $\Lex$ and $\Lmodex$,
respectively.

\begin{lemma}
  \AP If the pair $({\wphi},\phi)$ from~\eqref{eq:9} is a
  \kl{morphism} of {\swims} and $\Rmon\colon \As\to\mat{n}{\Sm(\Bs)}$
  is a monoid morphism, then the pair
  $(\mat{n}{{\wS}{\wphi}}\circ\wRmon, \mat{n}{\Sm\phi}\circ \Rmon)$
  described in the next diagram is a {\swim} morphism,
  \[
  \begin{tikzcd}[column sep=4em]
    \kl{\beta}(\As)\ar[r,"\wRmon"] &\mat{n}{\wS \kl{\beta}(\Bs)} \ar[r,"\mat{n}{\wS {\wphi}}"] & \mat{n}{\wS X}\\
    \As\ar[u]\ar[r,"\Rmon"]& \mat{n}{\Sm(\Bs)} \ar[r,"\mat{n}{\Sm
      \phi}"] & \mat{n}{\Sm M}\ar[u,swap,"\mat{n}{\t_X\circ \Sm h}"]
  \end{tikzcd}
  \]
  where $\intro{\wRmon}$ is the unique continuous extension of the following composite map:
  \[\begin{tikzcd}
    \As \arrow[rr,"\Rmon"] & & \mat{n}{\Sm(\Bs)}
    \arrow[rr,"\mat{n}{\iota}"] & & \mat{n}{\kl{\beta}\Sm(\Bs)}
    \arrow[rr,"\mat{n}{\thash}"] & & \mat{n}{\wS \kl{\beta}(\Bs)}.
  \end{tikzcd}\]
\end{lemma}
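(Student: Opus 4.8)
The plan is to verify the three defining clauses for the pair $(\tilde{\psi},\psi)$, with $\psi:=\mat{n}{\Sm\phi}\circ\Rmon$ and $\tilde{\psi}:=\mat{n}{\wS\wphi}\circ\wRmon$, to be a \kl{morphism} of \swims\ from $(\beta(\As),\As)$ to $(\mat{n}{\wS X},\mat{n}{\Sm M})$: that $\psi\colon\As\to\mat{n}{\Sm M}$ is a monoid morphism, that $\tilde{\psi}\colon\beta(\As)\to\mat{n}{\wS X}$ is continuous, and that the outer square commutes, i.e.\ $\tilde{\psi}\circ\iota=\mat{n}{\t_X\circ\Sm h}\circ\psi$ (the right-hand map being the internal-monoid map of the target \swim). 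I would begin with the monoid-morphism clause, which is the only one that sees the multiplicative structure of the matrix semirings. Since $\phi\colon\Bs\to M$ is a monoid morphism and $S$ is commutative, $\Sm\phi\colon\Sm(\Bs)\to\Sm M$ respects the convolution products and units, hence is a morphism of semirings; therefore $\mat{n}{\Sm\phi}$ preserves matrix multiplication and is a monoid morphism. As $\Rmon$ is a monoid morphism by hypothesis, so is $\psi=\mat{n}{\Sm\phi}\circ\Rmon$.

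For continuity, observe that $\wRmon$ is the extension along the Stone-\v{C}ech unit $\iota$ of a map out of the discrete set $\As$ into the \kl{Boolean space} $\mat{n}{\wS\beta(\Bs)}$, and is thus continuous by the universal property of $\beta(\As)$; moreover $\wS\wphi$ is continuous (as $\wphi$ is continuous and $\wS$ is an endofunctor on $\Stone$), so $\mat{n}{\wS\wphi}$ is continuous as a finite product of continuous maps, and hence so is $\tilde{\psi}$. The main point is the commutativity of the outer square, an equality of maps $\As\to\mat{n}{\wS X}$. Since $\As$ is a set and $\mat{n}{-}$ is formed entrywise, it suffices to check the equality in each coordinate $(i,j)$, where the matrix multiplication plays no role. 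Writing $r:=\Rmon(w)_{ij}\in\Sm(\Bs)$ for $w\in\As$, the $(i,j)$-entry of $\tilde{\psi}(\iota(w))$ is $\wS\wphi(\thash(\iota(r)))$, while that of $(\mat{n}{\t_X\circ\Sm h}\circ\psi)(w)$ is $\t_X(\Sm h(\Sm\phi(r)))$; so what is needed is the identity
\[
\wS\wphi\circ\thash\circ\iota=\t_X\circ\Sm h\circ\Sm\phi
\]
of maps $\Sm(\Bs)\to\wS X$. This is exactly the commutative square established in the proof of Lemma~\ref{lem:Kleisli-BiM}, whose diagram chase rests only on the naturality of $\iota$ and $\t$, the defining mates relation for $\thash$, and the fact that $(\wphi,\phi)$ is a \swim\ \kl{morphism}, and never on the Kleisli map being a monoid morphism. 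Applying it to $r$ in each coordinate yields commutativity of the square.

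The obstacle to keep in mind is the clash of two viewpoints. The internal monoid of the target \swim\ lives in the matrix semiring $\mat{n}{\Sm M}$, whose multiplication genuinely mixes coordinates, so the monoid-morphism clause cannot be reduced coordinatewise and must be argued through the semiring structure; by contrast, the \swim-morphism square and the continuity statement are entrywise conditions that reduce cleanly to Lemma~\ref{lem:Kleisli-BiM} applied in each coordinate. Keeping the multiplicative computation (for $\psi$) separate from the entrywise diagram chase (for the square) is what makes the verification go through, and it is precisely this separation that allows a single matrix-valued monoid morphism $\Rmon$ to be handled by a lemma originally stated for scalar Kleisli maps.
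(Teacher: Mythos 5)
Your proof is correct and takes essentially the same route as the paper: the paper's own proof is a one-line citation of Lemma~\ref{lem:Kleisli-BiM} with $T=\Sm$ together with the functoriality of $\mat{n}{\ \ }$, noting that the hypothesis that $\Rmon$ is a monoid morphism is what makes that lemma applicable. Your argument is precisely a careful unfolding of that citation --- reducing the \swim-square and continuity to the entrywise identity $\wS\wphi\circ\thash\circ\iota=\t_X\circ\Sm h\circ\Sm\phi$ established inside the proof of Lemma~\ref{lem:Kleisli-BiM} (which indeed never uses the monoid structure), and handling the monoid-morphism clause separately via the semiring structure on $\Sm M$ and matrix functoriality.
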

\begin{proof}
This follows essentially by Lemma \ref{lem:Kleisli-BiM} by setting $T=\Sm$, along with the functoriality of $\mat{n}{-}$. Note that the aforementioned lemma applies to this setting because $\Rmon$ is a monoid morphism.
\end{proof}
If we apply the previous lemma to the monoid morphism $\Rmon$ in equation
\eqref{eq:15} we obtain the {\swim} $(\mat{2}{\wS X},\mat{2}{\Sm M})$
which, when instantiated with the appropriate \kl{semiring} $S$,
\kl{recognises} the quantified languages $\Lex$ and $\Lmodex$.

For instance, suppose the \kl{semiring} $S$ is $\Zq$.  If $L$ is
\kl{recognised} by a clopen $C\subseteq X$ then, upon recalling from \eqref{eq:subbasic-measures} that
subbasic clopens of $\wS X$ are of the form $\SBp{K}{k}$ for $K$ a clopen of
$X$ and $k\in S$, one can easily prove that the quantified language
$\Lmodex$ is \kl{recognised} by the clopen subset of $\mat{2}{\wS X}$
given by the product $ \wS X \times \SBp{C}{p}\times \wS X\times \wS
X$, where the elements of the clopen $\SBp{C}{p}$ should appear in
position $(1,2)$ in the matrix view of the space.

However, notice that the image of the morphism $\mat{2}{\Sm
  {\wphi}}\circ \wRmon$  
  is contained in the subspace of $\mat{2}{\wS X}$ which can be represented
by the matrix
\[
\begin{pmatrix}
  X &\wS  X \\
  0 & X \\
\end{pmatrix}.
\]
As a consequence, we can use for the same recognition purpose a
smaller \swim, through which the morphism $\mat{2}{\Sm{\wphi}}\circ \wRmon$ 
factors.
\AP We denote this {\swim} \kl{morphism} by
\[
\intro*{\Dssphi}\colon (\kl{\beta}(\As),\As)\to(\DssX,\DssM),
\]
where 
\[
\DssX:= \wS X\times X \ \text{and} \ \intro*{\DssM}:=\Sm M\times M,
\]
with monoid structure and \kl{biactions} defined essentially by
identifying the products above with upper triangular matrices, and
then using the matrix multiplication and the concrete description of
several monoid actions from Lemmas~\ref{lem:action of M}
and~\ref{lem:SM-acts-X}. Using the notations described in these
lemmas, the left action of $\DssM$ on $\DssX$ can be described by
\[
\begin{pmatrix}
  m & f \\
  0 & m \\
\end{pmatrix}
\begin{pmatrix}
  x & \mu \\
  0 & x \\
\end{pmatrix}
=
\begin{pmatrix}
  mx & m\mu+\inte fx \\
  0 & mx \\
\end{pmatrix},
\]
where $(f,m)\in \DssM$ and $(\mu,x)\in \DssX$.
Recall from Section \ref{sec:logic-words} that the language $\Qk(L)$ in the
alphabet $A$ is obtained by quantifying the language $L\subseteq \Ats$
with respect to the quantifier associated with a \kl{semiring} $S$ and an element $k\in S$. We summarise the preceding observations in
the following theorem.
\begin{theorem}\label{th:recognition-quantified-languages}
  Let $S$ be a commutative \kl{semiring}, and $k\in S$. Suppose a language
  $L\subseteq \Ats$ is recognised by the {\swim} \kl{morphism}
  $\phi\colon (\kl{\beta}(\Ats),\Ats)\to(X,M)$. Then the quantified
  language $\Qk(L)\subseteq \As$ is recognised by the {\swim}
  \kl{morphism} $\Dssphi\colon (\kl{\beta}(\As),\As)\to(\DssX,\DssM)$.\qed
\end{theorem}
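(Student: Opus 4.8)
The plan is to exhibit an explicit recognising clopen for $\Qk(L)$ inside $\DssX$ and to verify the recognition condition of Definition~\ref{def:recognition-via-bims}, leaning on the constructions already assembled. First I would observe that $\Dssphi$ is a bona fide \swim\ \kl{morphism}: it is precisely the factorisation of $\phi_{\mathsf{R}}=\mat{2}{\Sm\wphi}\circ\wRmon$ through the upper-triangular subspace discussed just before the statement, and both its continuous component and its monoid component are therefore already in hand. Since the map on $\kl{\beta}(\As)$ is determined by continuity and density from the monoid component, the entire argument reduces to a computation on the internal monoids.

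Since $L$ is \kl{recognised} by $\phi$, I would fix a clopen $C\subseteq X$ such that $w'\in L$ iff $h\phi(w')\in C$, for every $w'\in\Ats$. The candidate recognising clopen is then
\[
D:=\SBp{C}{k}\times X\ \subseteq\ \wS X\times X=\DssX,
\]
which under the matrix identification forces the \kl{measure} in position $(1,2)$ to lie in $\SBp{C}{k}$ and leaves the diagonal $X$-component unconstrained. That $D$ is clopen follows from Proposition~\ref{prop:SX-in-wSX}, as $\SBp{C}{k}$ is the clopen of $\wS X$ determined by the clopen $C$ and the element $k\in S$.

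The core of the proof is then to trace the monoid component of $\Dssphi$ and apply the structure map of $(\DssX,\DssM)$. Unwinding $\Rmon$ from~\eqref{eq:15} and applying $\mat{2}{\Sm\phi}$, a word $w\in\As$ is sent in $\DssM=\Sm M\times M$ to the pair with diagonal entry $\phi(\wz)$ and $(1,2)$-entry $g_w:=\sum_{i=1}^{\len{w}}1_S\cdot\phi(\mw{i})\in\Sm M$. Applying in position $(1,2)$ the map $\Sm M\to\wS X$, $f\mapsto\inte f$, of Lemma~\ref{lem:SM-in-wSX}, I obtain the \kl{measure} $\mu_w=\inte(\Sm h)(g_w)$, and evaluating it on $C$ gives, by finite additivity,
\[
\mu_w(C)=\sum_{i\,:\,h\phi(\mw{i})\in C}1_S=\underbrace{1_S+\cdots+1_S}_{m\text{ times}},
\]
where, by the choice of $C$, $m$ is exactly the cardinality of the set in~\eqref{eq:set-of-witnesses}. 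Hence $h(\Dssphi(w))\in D$ iff $\mu_w(C)=k$ iff $w\in\Qk(L)$, by the definition of $\Qk$ in Section~\ref{sec:logic-words}. This yields $\Qk(L)=\Dssphi^{-1}(h^{-1}(D))$, which is the required recognition.

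The step I expect to be the main obstacle is the bookkeeping in the last paragraph: one must keep the identification of $\DssX$ and $\DssM$ with upper-triangular matrices straight, correctly locate the \kl{measure} in position $(1,2)$, and check that finite additivity of $\mu_w$ converts the counting of witnesses in~\eqref{eq:set-of-witnesses} into the sum $1_S+\cdots+1_S$ evaluated on $C$. The remaining points, namely that $\Dssphi$ is well defined and that $D$ is clopen, are delivered directly by the preceding lemmas.
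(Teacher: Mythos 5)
Your proposal is correct and follows essentially the same route as the paper: the paper's ``proof'' is precisely the discussion preceding the theorem, namely that $\Dssphi$ is the factorisation of $\phi_{\mathsf{R}}=\mat{2}{\Sm\wphi}\circ\wRmon$ through the upper-triangular subspace, with the quantified language recognised by the clopen $\SBp{C}{k}$ placed in position $(1,2)$ (i.e.\ your $D=\SBp{C}{k}\times X$). The only difference is that you spell out the witness-counting computation $\mu_w(C)=1_S+\cdots+1_S$ ($m$ times) that the paper dismisses as ``one can easily prove,'' which is a faithful filling-in rather than a different argument.
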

As an immediate consequence, taking $S=2$ the Boolean semiring and
$k=1$, we recover the result in \cite[Proposition~13]{GehrkePR16} on
existential quantification:
\begin{corollary}
  Consider a formula $\varphi(x)$ with a free first-order variable $x$. If
  the language $L_{\varphi(x)}\subseteq \Ats$ is recognised by the
  {\swim} \kl{morphism} $\phi\colon (\kl{\beta}(\Ats),\Ats)\to(X,M)$,
  then the existentially quantified language $L_{\exists
    x.\varphi(x)}\subseteq \As$ is recognised by the {\swim}
  \kl{morphism} $\Ds_2 \phi\colon (\kl{\beta}(\As),\As)\to(\V X\times
  X,\Pfin M\times M)$.\qed
\end{corollary}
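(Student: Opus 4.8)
The plan is to obtain the corollary as the special case $S=\2$ (the Boolean semiring) and $k=1$ of Theorem~\ref{th:recognition-quantified-languages}, once the monad-level identifications in this case are spelled out.

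First I would record the relevant dictionaries. By Example~\ref{ex:semiring-monad}, for the Boolean semiring $S=\2$ the \kl{free $S$-semimodule monad} $\Sm$ is the finite powerset monad $\Pfin$; and by Remark~\ref{rm:injective-components-of-t}, the associated \kl{profinite monad} $\wS$ is the \kl{Vietoris monad} $\V$. Substituting these into the output of the theorem, the internal monoid $\DssM=\Sm M\times M$ becomes $\Pfin M\times M$, the space $\DssX=\wS X\times X$ becomes $\V X\times X$, and the {\swim} \kl{morphism} $\Dssphi$ specialises to the morphism denoted $\Ds_2\phi$ in the statement. Thus the recogniser produced by the theorem is exactly the one claimed by the corollary, and the hypothesis $\phi\colon(\kl{\beta}(\Ats),\Ats)\to(X,M)$ recognising $L\subseteq\Ats$ transfers verbatim, taking $L=L_{\varphi(x)}$.

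The one point that genuinely requires a check is that the quantified language $\mathcal{Q}_1(L)$ of Section~\ref{sec:logic-words}, for $S=\2$ and $k=1$, coincides with $\Lex$. By definition, $w=a_1\ldots a_n$ lies in $\mathcal{Q}_1(L)$ iff $\underbrace{1+\cdots+1}_{m}=1$ holds in $\2$, where $m$ is the cardinality of the witness set \eqref{eq:set-of-witnesses}, i.e.\ the number of positions $i$ with $\mw{i}\in L$. Since addition in $\2$ is idempotent ($1+1=1$), this sum equals $1$ precisely when $m\geq 1$ and equals $0$ when $m=0$; hence $\mathcal{Q}_1(L)$ consists of exactly those $A$-words admitting at least one witness position, which is the definition of $\Lex$. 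This small arithmetic fact---that the Boolean semiring collapses ``some positive number of witnesses'' to the single value $1$---is the heart of why $\mathcal{Q}_1$ reduces to existential quantification, and is the only place where anything beyond substitution occurs.

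Finally I would pass from languages back to formulas using the observation in Section~\ref{sec:logic-words}: if $L=L_{\varphi(x)}$ then $\Lex=L_{\exists x.\varphi(x)}$. Combining this with the identifications above, the conclusion of Theorem~\ref{th:recognition-quantified-languages} that $\Dssphi$ recognises $\mathcal{Q}_1(L_{\varphi(x)})$ reads precisely as the assertion that $\Ds_2\phi\colon(\kl{\beta}(\As),\As)\to(\V X\times X,\Pfin M\times M)$ recognises $L_{\exists x.\varphi(x)}$, which is the corollary. No further computation is needed, confirming that the result is indeed ``immediate'' once the dictionaries $\Sm=\Pfin$, $\wS=\V$ and the equality $\mathcal{Q}_1=\Lex$ are in place.
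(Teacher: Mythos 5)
Your proposal is correct and follows exactly the paper's own route: the paper derives this corollary as an immediate instance of Theorem~\ref{th:recognition-quantified-languages} with $S=\2$ and $k=1$, using the identifications $\Sm=\Pfin$, $\wS=\V$, and the fact that $\mathcal{Q}_1(L)=\Lex$ over the Boolean semiring. Your explicit verification that idempotency of $+$ in $\2$ collapses ``$m\geq 1$ witnesses'' to the value $1$ is precisely the arithmetic fact the paper leaves implicit in calling the result immediate.
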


\section{Duality-theoretic account of the
  construction}\label{sec:duality-expl}

Let $S$ be a finite and commutative \kl{semiring}, and $(X,M)$ a
{\swim}. As earlier, we denote by $B$ the dual algebra of
$X$. Further, let $\phi\colon (\kl{\beta}(\Ats), \Ats) \to (X,M)$ be a
{\swim} \kl{morphism}.  We denote by $\intro*\B$ the preimage under
$\phi$ of $B$. That is, $\B$ is the Boolean algebra, closed under
quotients in $\P(\Ats)$, of languages \kl{recognised} by the {\swim}
\kl{morphism} $\phi$.

\AP In Section \ref{sec:S-transduction} we introduced the map $\Dssphi$
as a \kl{recogniser} for the quantified languages obtained from the
languages in $\B$.  
Here we prove, by duality, that $\Dssphi$ is in fact the minimal possible {\swim} recogniser for these quantified languages. This will allow us to get a Reutenauer-type theorem for $\DssX$, see Theorem \ref{t:reutenauer}. The idea is the following. On the language side, we are interested in the Boolean algebra generated by the languages of the form $\Qk(L)$, for $k\in S$ and $L\in\B$. This coincides with the Boolean algebra $\QB$ obtained as the preimage of $\wB$, the Boolean algebra of clopens of $\wS X$, under the composite
\[\begin{tikzcd}
\phiQ\colon {A^*} \arrow{r} & \Sm(\Ats) \arrow{r} & \Sm M \arrow{r} & {\wS X}, 
\end{tikzcd}\]
where the first map sends $w\in A^*$ to $\sum_{i=1}^{\len{w}}{1_S \cdot \mw{i}}$, the second one is $\Sm\phi\colon \Sm(\Ats) \to\Sm M$, and the third one is the integration map (cf.\ Lemma \ref{lem:SM-in-wSX}).
Indeed, suppose $\phi^{-1}(K)=L\in \B$ for $K$ a clopen of $X$. Then, for every $k\in S$,
\begin{align*}
\phiQ^{-1}(\SBp{K}{k})&=\{w\in A^*\mid \int_K{\sum_{i=1}^{\len{w}}{1_S \cdot \phi(\mw{i})}}=k\} \\
&=\{w\in \As\mid w\in \Qk(\phi^{-1}(K))\}= \Qk(L). 
\end{align*}
The Boolean algebra $\QB$ is not closed under quotients. Since we want a {\swim} recogniser, and not just a ``Boolean space
recogniser'', we want to recognise the Boolean algebra closed under quotients generated by $\QB$.
Furthermore, from the viewpoint of logic we are adding one layer of quantifiers. Thus, by inductive hypothesis, it makes sense to include also the languages of the form $L_0=\{w\in \As\mid \wz\in L\}$, for $L\in \B$. 
These are the languages in the alphabet $A$ which are recognised by $\phi$ upon composing with the embedding
\[
\embz\colon \As\to \kl{\beta}(\Ats), \ w\mapsto \wz.
\]
Let $\Bzero$ be the Boolean algebra that is the preimage of $\B$ under $\embz$. 
We thus want a {\swim} recogniser for $\Bprime$, the closure under quotients of ${<}\QB\cup\Bzero{>}_{\BA}$. We show that:
\begin{enumerate}
\item The Boolean algebra ${<}\QB\cup\Bzero{>}_{\BA}$ is already closed under quotients, whence $\Bprime={<}\QB\cup\Bzero{>}_{\BA}$.
\item This allows us to see $\Bprime$ as a quotient of the coproduct of $\QB$ and $\Bzero$, hence also of $\wB$ and $B$. By describing the quotienting operations on these subalgebras we can define a compatible quotienting operation on the coproduct, which makes the natural map  $\wB+B\to \P(\As)$ a homomorphism of Boolean algebras with biactions. 
\item Finally, dualising the quotienting operation on $\wB+B$ we get the actions of $\DssM$ on $\DssX$ given by matrix multiplication in Section~\ref{sec:S-transduction}. Further, we recover $\Dssphi$ as dual to the homomorphism $\wB+B\to \P(\As)$.
\end{enumerate}

To improve readability, throughout this section we omit reference to the semiring $S$, and write $\intro{\Dphi}, \intro{\DsX}, \intro{\DsM}$ instead of $\Dssphi, \DssX, \DssM$.

\subsection{The space $\DsX$ by duality}\label{s:diamond-X-by-duality}
\AP Recall from \eqref{eq:R-hat} the Kleisli map $\w{R}$, and consider the continuous map
\[
\intro*{\phiQ}\colon\kl{\beta}(\As)\xrightarrow{\w{R}}
\wS\kl{\beta}(\Ats) \xrightarrow{\wS\phi} \wS X
\]
which is given for $w\in \As$ by
\[
\intro*{\phiQ}(w):=\inte \fw,
\]
where
\[
  \intro*{\fw}:=\sum_{1\leq i\leq\len{w}}1_S \cdot \phi(\mw{i}).
\]
For any $k\in S$ and
$L\in\B$, the clopen in $\kl{\beta}(\As)$ corresponding to
$\Qk(L)$ is $\phiQ^{-1}(\SBp{K}{k})$, where $K\subseteq X$ is the
clopen in $X$ \kl{recognising} $L$ via $\phi$ and $\SBp{K}{k}$ is as in equation \eqref{eq:subbasic-measures}.  By
Theorem~\ref{th:profinite-measures}, the clopens of $\wS X$ are
generated by the sets of the form $\SBp{K}{k}$ with $k\in S$ and
$K\subseteq X$ clopen, thus we have:
\begin{proposition}
  \AP The Boolean algebra $\intro*\QB$ of those languages over $A$
  which are inverse images of clopens under $\phiQ$ is generated by the quantified languages
  $\Qk(L)$, for $k\in S$ and $L\in\B$.\qed
\end{proposition}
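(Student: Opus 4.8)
The plan is to recognise $\QB$ as the image of the Boolean algebra homomorphism dual to the continuous map $\phiQ$, and then to transport through it a convenient generating family of $\Cl(\wS X)$. Recall that a language $P\subseteq\As$ is \kl{recognised} by $\phiQ$ precisely when its associated clopen $\w P\subseteq\kl{\beta}(\As)$ has the form $\phiQ^{-1}(C)$ for some clopen $C\subseteq\wS X$. Under the bijection between subsets of $\As$ and clopens of $\kl{\beta}(\As)$, this says exactly that $\QB$ is the image of the map
\[
\Cl(\wS X)\longrightarrow\P(\As),\qquad C\longmapsto \phiQ^{-1}(C),
\]
which, being the composite of the preimage operation along $\phiQ$ with the Boolean isomorphism $\Cl(\kl{\beta}(\As))\cong\P(\As)$, is a homomorphism of Boolean algebras.

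First I would invoke the elementary fact that a homomorphism of Boolean algebras sends a generating set of its domain to a generating set of its image: if the domain is generated by a family $G$, then, since the map commutes with all Boolean operations, its image is generated by the images of the elements of $G$. Thus it suffices to fix a generating family of $\Cl(\wS X)$ and to identify the corresponding preimages under $\phiQ$. For the generating family I would appeal to Theorem~\ref{th:profinite-measures} (and the discussion preceding Proposition~\ref{prop:SX-in-wSX}), by which $\Cl(\wS X)$ is generated by the subbasic clopens $\SBp{K}{k}$ for $K\in\Cl(X)$ and $k\in S$.

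It then remains to compute $\phiQ^{-1}(\SBp{K}{k})$. Using the explicit description $\phiQ(w)=\inte\fw$ with $\fw=\sum_{1\le i\le\len{w}}1_S\cdot\phi(\mw{i})$, together with $\SBp{K}{k}=\{\mu\mid\mu(K)=k\}$, one finds that $w\in\phiQ^{-1}(\SBp{K}{k})$ if and only if $\inte_K\fw=k$, i.e.
\[
\underbrace{1_S+\cdots+1_S}_{m\text{ times}}=k,
\qquad
m=\bigl|\{1\le i\le\len{w}\mid \phi(\mw{i})\in K\}\bigr|.
\]
Setting $L:=\wphi^{-1}(K)\in\B$, the condition $\phi(\mw{i})\in K$ amounts to $\mw{i}\in L$, so $m$ is precisely the cardinality of the set in~\eqref{eq:set-of-witnesses}, and the displayed equation is verbatim the membership condition defining $w\in\Qk(L)$. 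Hence $\phiQ^{-1}(\SBp{K}{k})=\w{\Qk(L)}$; and as $K$ ranges over $\Cl(X)$, the language $L=\wphi^{-1}(K)$ ranges over all of $\B$, by the very definition of $\B$ as the $\phi$-preimage of the clopens of $X$.

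I expect the only genuinely delicate point to be the bookkeeping in this last computation: reading off that the $S$-valued integral $\inte_K\fw$ counts, each with weight $1_S$, exactly the witnessing positions $i$, so that it reproduces the quantifier semantics of $\Qk$ on the nose. Once this identification is secured, combining the three ingredients — the description of $\QB$ as a homomorphic image, the preservation of generating sets under Boolean homomorphisms, and the generation of $\Cl(\wS X)$ by the $\SBp{K}{k}$ — yields that $\QB$ is generated by the languages $\Qk(L)$ with $k\in S$ and $L\in\B$, as claimed.
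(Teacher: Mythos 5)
Your proof is correct and takes essentially the same approach as the paper: you identify $\phiQ^{-1}(\SBp{K}{k})$ with (the clopen of) $\Qk(L)$ for $L$ the language recognised via the clopen $K$, and then invoke Theorem~\ref{th:profinite-measures} to see that the clopens of $\wS X$ are generated by the sets $\SBp{K}{k}$, so that their preimages generate $\QB$. The paper leaves the image-of-generators argument and the integral bookkeeping implicit, whereas you spell them out, but the content is identical.
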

Note that $\QB$, as defined in the previous proposition, is \emph{not} closed under quotients. This is the
reason we had to make an adjustment between
Sections~\ref{sec:T-transduction} and~\ref{sec:S-transduction} above.

\AP We denote by $\intro*\Bzero$ the Boolean algebra of languages
closed under quotients which is \kl{recognised} by the {\swim}
\kl{morphism}
\[
\intro*{\phiz}\colon(\kl{\beta}(\As),\As)\xrightarrow{\embz}(\kl{\beta}(\Ats),\Ats)\xrightarrow{\phi} (X,M).
\]
\AP In other words, $\Bzero$ consists of the languages of the form $\intro{\Lz}:=\phiz^{-1}(K)$,
obtained as the preimage under $\embz$ of languages $L=\phi^{-1}(K)$
in $\B$. Taking the product map, it now follows that
\[
\Dphi=\phiQ\times\phiz\colon\kl{\beta} (\As)\to \wS X\times
X,
\]
viewed just as a map of \kl{Boolean spaces}, ``recognises'' the
Boolean algebra generated by $\QB\cup\Bzero$, in the sense that the elements of the latter Boolean algebra are exactly those of the form $\Dphi^{-1}(C)$ for some clopen $C\subseteq \wS X\times X$. However, since $\QB$ is
\emph{not} closed under quotients, a priori, neither is
${<}\QB\cup\Bzero{>}_{\BA}$.

The Boolean algebra $\Bprime$ that we are interested in is the closure
under quotients of ${<}\QB\cup\Bzero{>}_{\BA}$. The important
observation is that ${<}\QB\cup\Bzero{>}_{\BA}$ \emph{is already
  closed under the quotient operations}, thus explaining why $\wS
X\times X$, along with the above product map, is the right recogniser
space-wise.

\begin{proposition}\label{prop:B'}
  The Boolean algebra generated by $\QB\cup\Bzero$ is closed under
  quotients. That is,
  \[
  \Bprime={<} \Qk(L), \Lz\mid L\in\B \text{ and }k\in S {>}_{\BA}.
  \]
\end{proposition}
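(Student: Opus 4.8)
The plan is to prove that the Boolean algebra $\mathcal{A}:={<}\QB\cup\Bzero{>}_{\BA}$ is stable under every left quotient $K\mapsto u^{-1}K$ and right quotient $K\mapsto Ku^{-1}$, for $u\in\As$. Since, by the preceding proposition, $\QB$ is generated by the languages $\Qk(L)$, and by construction $\Bzero$ is generated by the languages $\Lz$ (both with $L\in\B$ and $k\in S$), we have $\mathcal{A}={<}\Qk(L),\Lz\mid L\in\B,\ k\in S{>}_{\BA}$; and as $\Bprime$ is by definition the quotient-closure of $\mathcal{A}$, establishing this stability yields the asserted identity $\Bprime=\mathcal{A}$ at once.

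First I would reduce to a single letter. For fixed $u$ the map $K\mapsto u^{-1}K$ is a Boolean-algebra endomorphism of $\P(\As)$, since it commutes with $\cap$, $\cup$ and complementation; moreover any two-sided word-quotient is a composite of one-letter quotients $K\mapsto a^{-1}K$ and $K\mapsto Ka^{-1}$ with $a\in A$. It therefore suffices to check that the one-letter quotients of the \emph{generators} of $\mathcal{A}$ lie again in $\mathcal{A}$, for then $a^{-1}\mathcal{A}\subseteq\mathcal{A}$ and $\mathcal{A}a^{-1}\subseteq\mathcal{A}$ follow from $\mathcal{A}$ being a Boolean algebra. For the generators $\Lz\in\Bzero$ there is nothing to do: $\Bzero$ is by definition the quotient-closed Boolean algebra \kl{recognised} by $\phiz$ --- this is the standard fact that the languages \kl{recognised} by a fixed {\swim} \kl{morphism} form a Boolean algebra closed under quotients, the internal-monoid \kl{biaction} on $X$ dualising precisely to the quotient operations --- so all quotients of members of $\Bzero$ remain in $\Bzero\subseteq\mathcal{A}$. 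The whole content is thus the behaviour of the quantified generators $\Qk(L)$ under one quotient.

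The key step is a combinatorial identity describing how the set of witnessing positions moves when a letter is prepended. For $L'\subseteq\Ats$ write $N_{L'}(w):=|\{1\le i\le\len{w}\mid \mw{i}\in L'\}|$ for the number of witnesses, and let $\chi\colon\N\to S$, $n\mapsto\underbrace{1_S+\cdots+1_S}_{n}$, be the unique \kl{semiring} homomorphism, so that $\Qk(L')=\{w\mid\chi(N_{L'}(w))=k\}$. For $a\in A$ and $w\in\As$ I would verify that $N_L(aw)=N_{a^{-1}L}(w)$ when $\wz\notin(a')^{-1}L$, and $N_L(aw)=1+N_{a^{-1}L}(w)$ when $\wz\in(a')^{-1}L$: indeed, marking the new first position of $aw$ yields the word $a'\wz$, a witness exactly when $\wz\in(a')^{-1}L$, while marking an old position $i$ yields $a\cdot\mw{i}$, a witness exactly when $\mw{i}\in a^{-1}L$. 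Applying $\chi$ and splitting on whether $\wz\in(a')^{-1}L$ gives, writing $(L')^0:=\{w\in\As\mid\wz\in L'\}$ for the preimage of $L'$ under $\embz$ and $\sigma\colon S\to S,\ s\mapsto 1_S+s$,
\[
a^{-1}\Qk(L)=\Big(((a')^{-1}L)^0\cap\bigcup_{s\in\sigma^{-1}(k)}\mathcal{Q}_s(a^{-1}L)\Big)\cup\Big((((a')^{-1}L)^0)^c\cap\Qk(a^{-1}L)\Big).
\]

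Finally I would check that every piece on the right belongs to $\mathcal{A}$. Since $\B$ is closed under quotients in $\P(\Ats)$ and $a,a'$ are letters of $A\times 2$, both $a^{-1}L$ and $(a')^{-1}L$ lie in $\B$; hence $\mathcal{Q}_s(a^{-1}L)$ and $\Qk(a^{-1}L)$ are generators of $\QB$, while $((a')^{-1}L)^0$ is a generator of $\Bzero$. The one point needing care --- which I expect to be the only genuinely delicate issue --- is that a general \kl{semiring} has no ``subtract $1_S$'' operation; this is exactly where the \emph{finiteness} of $S$ enters, making $\sigma^{-1}(k)$ a finite (possibly empty) subset of $S$ and hence the union above a \emph{finite} Boolean combination of members of $\mathcal{A}$. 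The right quotient $\Qk(L)a^{-1}$ is treated symmetrically, using $N_L(wa)=N_{La^{-1}}(w)$ or $1+N_{La^{-1}}(w)$ according as $\wz\in L(a')^{-1}$ (marking the new last position of $wa$ yields $\wz a'$, and marking an old position $i$ yields $\mw{i}\cdot a$). This shows $\mathcal{A}$ is closed under one-letter quotients, hence under all quotients, and therefore $\Bprime=\mathcal{A}$.
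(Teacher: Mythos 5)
Your proof is correct, but it takes a genuinely different route from the paper's. The paper works on the recogniser side for an \emph{arbitrary} word $u\in\As$: writing $\Qk(L)=\phiQ^{-1}(\SBp{K}{k})$, it uses the decomposition $f_{uw}=\phi(\uz)\fw+\fu\phi(\wz)$ (the mark lies either in $w$ or in $u$) and additivity of the integral to get $u^{-1}\Qk(L)=\bigcup_{k_1+k_2=k}\bigl(\Qkp((\uz)^{-1}L)\cap E_{k_2}\bigr)$, where $E_{k_2}$ is an explicit finite Boolean combination of languages in $\Bzero$ indexed by subsets $I$ of the support of $\fu$ with $\inte_I\fu=k_2$. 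You instead reduce to one-letter quotients (legitimate, since $K\mapsto a^{-1}K$ is a Boolean algebra endomorphism of $\P(\As)$ and word quotients are composites of letter quotients) and prove a purely combinatorial identity on witness counts, $N_L(aw)=N_{a^{-1}L}(w)+[\wz\in(a')^{-1}L]$, which makes the support-subset bookkeeping collapse: in the letter case the ``mark in $u$'' contribution is just $0$ or $1$, so the paper's inclusion-exclusion over $I\subseteq{\rm Sup}(\fu)$ degenerates to the single test $w\in((a')^{-1}L)^0$, at the mild cost of the fibre $\sigma^{-1}(k)$ appearing where the paper has the sum decomposition $k_1+k_2=k$ (both finite since $S$ is finite). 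Your argument is more elementary --- no measures, no \swim\ structure beyond quotient-closure of $\B$ and $\Bzero$ --- and it isolates cleanly where finiteness of $S$ enters. What the paper's less economical computation buys is reuse: its formulas for general $u$, stated for an arbitrary pair $(f,m)\in\Sm M\times M$ rather than the special pairs $(\chi_{m_a},\phi(a,0))$ arising from letters, are exactly what the next subsection needs to define the dual quotient operations $\Lambda_{11}(f,m)$ and $\Lambda_{12}(f,m)$ on $\wB+B$ and to prove Proposition~\ref{prop:leftquotient} and Theorem~\ref{thm:duality}; your letter-level identity would not directly supply those general formulas, though it fully suffices for Proposition~\ref{prop:B'} itself.
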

\begin{proof}
  Since $\Bzero$ is closed under quotients, it suffices to consider
  the quotienting of languages of the form
  $\Qk(L)=\phiQ^{-1}(\SBp{K}{k})$ where $K\subseteq X$ is the clopen
  \kl{recognising} $L$ via $\phi$. For $u\in \As$ we have
  \begin{align*}
    u^{-1} \Qk(L)
    & =  \{w\in \As\mid  uw\in \Qk(L)\} \\
    & = \{w\in \As\mid \inte f_{uw}\in\SBp{K}{k}\}.
  \end{align*}
 Since the free variable in the word $uw$ either
  occurs in $u$ or in $w$,
  \[
  f_{uw}=\phi(\uz)\fw+\fu\phi(\wz).
  \]
  Further, since $\inte(\phi(\uz)\fw+\fu\phi(\wz)) =
  \inte\phi(\uz)\fw+ \inte \fu\phi(\wz)$, we have
  \begin{align*}
    u^{-1} \Qk(L)&= \{w\in \As\mid   \inte  \phi(\uz)\fw+ \inte \fu\phi(\wz)\in\SBp{K}{k} \} \\
    &=\bigcup_{k_1+k_2=k} \{w\in\As \mid \inte
    \phi(\uz)\fw\in\SBp{K}{k_1} \ \text{and} \ \inte
    \fu\phi(\wz)\in\SBp{K}{k_2}\}.
  \end{align*}
  Now,
  \begin{equation}\label{eq:11}
    \inte  \phi(\uz)\fw\in\SBp{K}{k_1}\ \iff\ \inte \fw\in\SBp{\phi(\uz)^{-1}K}{k_1}
  \end{equation}
  which in turn is equivalent to $w\in \Qkp((\uz)^{-1}L)$, which is an
  element of $\QB$. We now proceed with the second condition. We have
  $ \inte \fu\phi(\wz) \in\SBp{K}{k_2}$ if, and only if, there is a
  set
  \[
  I\subseteq{\rm Sup}(\fu):=\{m\in M\mid \fu(m)\neq 0\}
  \]
  with
  \begin{itemize}
  \item $ \inte _I\fu=k_2$;
  \item $m\phi(\wz)\in K$ for each $m\in I$;
  \item $m\phi(\wz)\not\in K$ for each $m\not\in I$.
  \end{itemize}
  Observe that $m\phi(\wz)\in K$ if, and only if, $w\in
  \phiz^{-1}(m^{-1}K)$. Thus \[\{w\in \As\mid \inte
  \fu\phi(\wz) \in\SBp{K}{k_2}\}\] is equal to
  \begin{align}\label{eq:12}
    \bigcup_{\substack{I\subseteq{\rm Sup}(\fu) \\
        \inte _I\fu=k_2}} \bigg( [\bigcap_{m\in I}
      \phiz^{-1}(m^{-1}K)] \cap [\bigcap_{m\in I^c}
      \phiz^{-1}(m^{-1}K^c)] \bigg)
  \end{align}
  which is in $\Bzero$.
\end{proof}
\begin{corollary}
  The dual space of $\Bprime$ is a closed subspace of $\wS X \times
  X$. In particular, $\Bprime$ is \kl{recognised} as a Boolean algebra
  by $\wS X \times X$.
\end{corollary}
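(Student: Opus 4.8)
The plan is to read the corollary off Stone duality applied to the Boolean algebra homomorphism induced by $\Dphi$. First I would consider the continuous map of \kl{Boolean spaces} $\Dphi=\phiQ\times\phiz\colon \kl{\beta}(\As)\to \wS X\times X$ together with the Boolean algebra homomorphism it induces by taking preimages of clopens,
\[
\Dphi^{-1}\colon \Cl(\wS X\times X)\longrightarrow \Cl(\kl{\beta}(\As))=\P(\As).
\]
The crucial claim is that the image of $\Dphi^{-1}$ is exactly $\Bprime$. One inclusion is already at hand: by the discussion preceding Proposition~\ref{prop:B'}, the map $\Dphi$ recognises $\langle\QB\cup\Bzero\rangle_{\BA}$, which by Proposition~\ref{prop:B'} equals $\Bprime$; hence $\Bprime\subseteq\Im(\Dphi^{-1})$. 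For the reverse inclusion I would invoke the standard fact that the clopens of a product of \kl{Boolean spaces} are generated, as a Boolean algebra, by the rectangles $C_1\times C_2$ with $C_1\in\Cl(\wS X)$ and $C_2\in\Cl(X)$ (each clopen of the product being compact open, hence a finite union of such basic rectangles). Since $\Dphi^{-1}(C_1\times X)=\phiQ^{-1}(C_1)\in\QB$ and $\Dphi^{-1}(\wS X\times C_2)=\phiz^{-1}(C_2)\in\Bzero$, and $\Dphi^{-1}$ is a Boolean homomorphism, every preimage $\Dphi^{-1}(C)$ lies in $\langle\QB\cup\Bzero\rangle_{\BA}=\Bprime$. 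This establishes $\Im(\Dphi^{-1})=\Bprime$.

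Having this, I would factor $\Dphi^{-1}$ through its image,
\[
\Cl(\wS X\times X)\twoheadrightarrow \Bprime\hookrightarrow \P(\As),
\]
and dualise. Since Stone duality is contravariant, the surjection on the left dualises to a closed embedding of the dual space of $\Bprime$ into $\wS X\times X$, which is exactly the assertion that the dual of $\Bprime$ is a closed subspace of $\wS X\times X$. Concretely, as $\kl{\beta}(\As)$ is compact and $\wS X\times X$ is Hausdorff, the image $\Dphi[\kl{\beta}(\As)]$ is closed, and the standard identification $\Im(\Dphi^{-1})\cong\Cl(\Dphi[\kl{\beta}(\As)])$ combined with $\Im(\Dphi^{-1})=\Bprime$ exhibits $\Dphi[\kl{\beta}(\As)]$ as the dual space of $\Bprime$.

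The ``in particular'' clause is then immediate: since $\Im(\Dphi^{-1})=\Bprime$, every language $L\in\Bprime$ is of the form $\Dphi^{-1}(C)$ for some clopen $C\subseteq\wS X\times X$, so it is recognised by $\Dphi$ viewed as a map into $\wS X\times X$. The only genuinely non-formal point, and hence the step I expect to require the most care, is the identification $\Im(\Dphi^{-1})=\Bprime$; once the rectangle-generation of the clopens of the product is invoked, this reduces to Proposition~\ref{prop:B'}, and everything else is a routine application of Stone duality and the compactness of $\kl{\beta}(\As)$.
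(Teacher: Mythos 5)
Your proposal is correct and follows essentially the same route as the paper: both identify $\Bprime$ with the image of $\Dphi^{-1}\colon\Cl(\wS X\times X)\to\P(\As)$ via Proposition~\ref{prop:B'}, and then dualise the image factorisation of $\Dphi$ (equivalently of $\Dphi^{-1}$), using that the continuous image of the compact space $\kl{\beta}(\As)$ is a closed subspace whose clopen algebra is $\Bprime$. The only difference is presentational: you make explicit the fact that clopens of the product are generated by rectangles $C_1\times C_2$, a point the paper leaves implicit in its ``thus $\Bprime$ is exactly the preimage of the dual of $\wS X\times X$ under $\Dphi$'' step.
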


\begin{proof}
  By the previous proposition,
  $\Bprime={<}\QB\cup\Bzero{>}_{\BA}$. But $\Bzero$ is exactly the
  preimage of the dual of $X$ under $\phiz$, and $\QB$ is exactly the
  preimage of the dual of $\wS X$ under $\phiQ$. Thus $\Bprime$ is
  exactly the preimage of the dual of $\wS X\times X$ under $\Dphi$,
  and therefore $\Bprime$ is \kl{recognised} as a Boolean algebra by
  $\wS X\times X$.

  Now, factoring the map $\Dphi$, we obtain a closed subspace $Y$ of
  $\wS X\times X$ so that
  \[
  \Dphi\colon\kl{\beta} (\As)\twoheadrightarrow Y\hookrightarrow\wS
  X\times X.
  \]
  Since the dual of the quotient map $\kl{\beta}
  (\As)\twoheadrightarrow Y$ is an embedding whose image is $\Bprime$,
  the dual of $Y$ is isomorphic to $\Bprime$.
\end{proof}

\subsection{The internal monoid structure of $\DsX$ by duality}

In Section \ref{sec:S-transduction} the monoid operation of $\DsM=\Sm M\times M$ and the actions of $\DsM$ on $\DsX$ were given in terms of matrix multiplication. This multiplication was introduced in an ad hoc manner. Here we show that these actions (and, in particular, the monoid operation) need not be guessed, as they can be derived by duality. In fact, they are the appropriate actions on $\DsX$ for making $\Dphi$ a {\swim} morphism.
For this purpose, we consider the homomorphism dual to $\Dphi$:
\[
\varphi\colon\wB+ B\to\P(\As), \
\SBp{K}{k}\mapsto\phiQ^{-1}(\SBp{K}{k}), \ K\mapsto\phiz^{-1}(K).
\]
We already know, by Proposition~\ref{prop:B'}, that the image of
$\varphi$ is closed under quotients. The point is, in fact, that
Proposition~\ref{prop:B'} tells us that we can define a biaction of $\DsM$ on
$\wB+ B$ so that $\varphi$ becomes a homomorphism of Boolean algebras
with biactions. Thus, for each $(f,m)\in\DsM$, we want to define a
``left quotient'' by $(f,m)$ (that is, the component at $(f,m)$ of a
right action) on $\wB+ B$ (and a ``right quotient'', which is a left
action) so that $\varphi$ becomes a homomorphism of Boolean algebras
with biactions.

The monoid morphism from $\As$ to $\DsM$ is given by sending the
internal monoid element $u\in \As$ to the internal monoid element
$(\fu,\phi(\uz))\in\Sm M\times M$, where $\fu$ is defined as at
the beginning of Section \ref{s:diamond-X-by-duality}. 
Now, the component at $(f,m)$ of a ``left
quotient'' operation on $\wB+ B$ is a homomorphism
\[
\Lambda(f,m)\colon\wB+ B\to \wB+ B.
\]
Given the nature of coproducts, such a homomorphism is determined by
its components $\Lambda_1(f,m)\colon\wB\to \wB + B$ and
$\Lambda_2(f,m)\colon B\to \wB + B$. Our goal then, is to show that:
\begin{itemize}
\item the computation of quotient operations in the image of $\varphi$
  combined with wanting $\varphi$ to be a morphism of Boolean algebras
  with biactions, dictates what $\Lambda_1(f,m)$ and $\Lambda_2(f,m)$
  must be;
  \item the left action of $\DsM$ on $\DsX$ dual to $\Lambda$ coincides with the one defined in Section \ref{sec:S-transduction}.
\end{itemize}
The symmetric facts for the right action are similar and thus we only
consider the left action. Also, note that we will not prove directly
that the $\Lambda(f,m)$'s that we define are components of a right
action on a Boolean algebra, as this will follow from the second bullet point above. 

So, we want to define the action such that $\varphi$ becomes a
homomorphism sending the action of $(\fu,\phi(\uz))$ to the action of
the quotient operation $u^{-1}(\ )$ on $\P(\As)$. The computations in
the proof of Proposition~\ref{prop:B'} tell us the components of
$u^{-1}\phiQ^{-1}(\SBp{K}{k})$ in $\QB$ and in $\Bzero$,
respectively. Since $\QB$ and $\Bzero$ are precisely the images under
$\varphi$ of $\wB$ and $B$, respectively, the computation tells us how
to define $\Lambda_1(\fu,\phi(\uz))$ using components
$\Lambda_{11}(f,m) \colon\wB\to\wB$ and $\Lambda_{12}(f,m)\colon\wB\to
B$.

By the computation in~\eqref{eq:11}, we have that the component
$\Lambda_{11}(f,m)\colon\wB\to\wB$ depends only on the second
coordinate of the pair $(\fu,\phi(\uz))$ and it sends $\SBp{K}{k}$ to
$\SBp{(\phi(\uz))^{-1}K}{k}$. Stating it for an arbitrary element
$(f,m)\in\Sm M\times M$, we have
\[
\Lambda_{11}(f,m)\colon\wB\to\wB, \ \SBp{K}{k}\mapsto
\SBp{m^{-1}K}{k}.
\]
Similarly, the computation in~\eqref{eq:12}, stated for an arbitrary
element $(f,m)\in\Sm M\times M$, yields $\Lambda_{12}(f,m)\colon\wB\to
B$ given by
\begin{align}\label{eq:12gen}
  \SBp{K}{k}&\mapsto \bigcup_{\substack{I\subseteq{\rm Sup}(f)\\ \inte
      _If=k}}\bigg([\bigcap_{n\in I} n^{-1}K] \cap [\bigcap_{n\in I^c}
    n^{-1}K^c]\bigg).
\end{align}

The above observations imply that
\begin{proposition}\label{prop:leftquotient}
  The map $\varphi\colon\wB+ B\to\P(\As)$ is a homomorphism of Boolean
  algebras with biactions when the left quotient operation
  $\Lambda(f,m)$ of $\wB+ B$ is defined on $\wB$ by
  \[
  \Lambda_1(f,m)\colon \SBp{K}{k} \mapsto
  \bigvee_{k_1+k_2=k}(\Lambda_{11}(\SBp{K}{k_1})\wedge\Lambda_{12}(\SBp{K}{k_2}))
  \]
  and on $B$ by $\Lambda_2(f,m)\colon K \mapsto m^{-1}K$.\qed
\end{proposition}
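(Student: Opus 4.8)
The plan is to assemble the statement from the computations already in hand and to promote them to a genuine action by duality. Recall that $\wB+B$ is a coproduct of Boolean algebras, dual to the product $\DsX=\wS X\times X$, the two coproduct injections $\wB\hookrightarrow\wB+B$ and $B\hookrightarrow\wB+B$ being dual to the two projections of $\DsX$. Hence a Boolean homomorphism out of $\wB+B$ is determined by its restrictions to the summands $\wB$ and $B$, and on $\wB$ by its values on the generators $\SBp{K}{k}$. Beyond $\varphi$ being a Boolean homomorphism (which holds, $\varphi$ being dual to $\Dphi$), to say that $\varphi$ is a homomorphism of Boolean algebras with actions is to say that it intertwines the actions along the monoid morphism $\As\to\DsM$, $u\mapsto(\fu,\phi(\uz))$; that is, for every $u\in\As$,
\[
\varphi\circ\Lambda(\fu,\phi(\uz))=u^{-1}(\ )\circ\varphi.
\]

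The step I expect to be the only non-formal point is the well-definedness of $\Lambda_1(f,m)$: we have prescribed it merely on the generators $\SBp{K}{k}$, and it is not evident that this prescription respects the relations of $\wB$, i.e.\ extends to a homomorphism. I would settle this exactly as announced in the remark preceding the statement, by identifying $\Lambda(f,m)$ with the Stone dual of the continuous map $\lambda(f,m)\colon\DsX\to\DsX$ built in Section~\ref{s:recog-transducers}; a dual of a continuous map is automatically a Boolean homomorphism. Its restriction to the summand $B$ is dual to $\lambda_2(f,m)\colon(\mu,x)\mapsto mx$, hence sends $K$ to $m^{-1}K$, which is $\Lambda_2(f,m)$. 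Its restriction to $\wB$ is dual to $\lambda_1(f,m)\colon(\mu,x)\mapsto m\mu+\inte fx$; using $(m\mu)(K)=\mu(m^{-1}K)$ from Lemma~\ref{lem:action of M} and $(\inte fx)(K)=\sum_{n\colon nx\in K}f(n)$ from Lemma~\ref{lem:SM-acts-X}, the clopen $\lambda_1(f,m)^{-1}(\SBp{K}{k})$ splits as the union over the decompositions $k_1+k_2=k$ of the products of $\SBp{m^{-1}K}{k_1}$ with the element $\Lambda_{12}(f,m)(\SBp{K}{k_2})$ of $B$ from~\eqref{eq:12gen}; reading this in $\wB+B$ gives $\bigvee_{k_1+k_2=k}\bigl(\Lambda_{11}(f,m)(\SBp{K}{k_1})\wedge\Lambda_{12}(f,m)(\SBp{K}{k_2})\bigr)=\Lambda_1(f,m)(\SBp{K}{k})$. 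This proves the second bullet point before the statement and, crucially, makes $\Lambda(f,m)$ a well-defined homomorphism.

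With $\Lambda(\fu,\phi(\uz))$ now known to be a homomorphism, the two sides of the displayed equivariance identity are both homomorphisms $\wB+B\to\P(\As)$, so it suffices to check them on the summand $B$ and on the generators $\SBp{K}{k}$ of $\wB$. On $B$ the identity reads $\phiz^{-1}\bigl((\phi(\uz))^{-1}K\bigr)=u^{-1}\phiz^{-1}(K)$; both sides equal $\{w\mid\phi(\uz)\phi(\wz)\in K\}$, since $\phiz(w)=\phi(\wz)$, $\phi$ is a monoid morphism, and $(uw)^0=\uz\wz$. On the generator $\SBp{K}{k}$ the identity is precisely the computation in the proof of Proposition~\ref{prop:B'}: by~\eqref{eq:11} and~\eqref{eq:12}, $u^{-1}\phiQ^{-1}(\SBp{K}{k})$ is the union over $k_1+k_2=k$ of $\phiQ^{-1}(\SBp{(\phi(\uz))^{-1}K}{k_1})=\varphi(\Lambda_{11}(\fu,\phi(\uz))(\SBp{K}{k_1}))$ intersected with the set~\eqref{eq:12}, which equals $\varphi(\Lambda_{12}(\fu,\phi(\uz))(\SBp{K}{k_2}))$; this is exactly $\varphi(\Lambda_1(\fu,\phi(\uz))(\SBp{K}{k}))$. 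Alternatively, having identified $\Lambda(f,m)=\lambda(f,m)^{*}$, this whole identity is just the Stone dual of the {\swim} \kl{morphism} property of $\Dphi$ from Theorem~\ref{th:recognition-quantified-languages}, i.e.\ of $\lambda(\fu,\phi(\uz))\circ\Dphi=\Dphi\circ\ell_u$ with $\ell_u$ left translation by $u$ on $\kl{\beta}(\As)$, which is the conceptually cleanest way to finish.
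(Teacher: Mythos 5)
Your proof is correct, and its computational core coincides with the paper's: the paper's entire (implicit) proof of this proposition consists of reading off the intertwining of $\Lambda(\fu,\phi(\uz))$ with $u^{-1}(\ )$ from the computations \eqref{eq:11} and \eqref{eq:12} in the proof of Proposition~\ref{prop:B'}, which is exactly your verification of equivariance on the summand $B$ and on the generators $\SBp{K}{k}$ of $\wB$. The genuine difference is one of architecture. The paper explicitly declines to prove at this stage that the maps $\Lambda(f,m)$ are well-defined homomorphisms assembling into an action: it announces, in the bullet points before the statement, that this will follow from the later duality Lemmas~\ref{lem:11}, \ref{lem:12} and \ref{lem:1} and Corollary~\ref{cor:action-duality}, so the proposition is stated with the well-definedness issue deliberately deferred. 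You instead front-load precisely that material: you first identify $\Lambda(f,m)$ with the Stone dual of the continuous map $\lambda(f,m)$ from Section~\ref{s:recog-transducers} (your computation of $\lambda_1(f,m)^{-1}(\SBp{K}{k})$ is exactly the content of Lemmas~\ref{lem:11}--\ref{lem:1}), which makes well-definedness automatic, and only then check equivariance. Your alternative ending --- equivariance as the Stone dual of $\lambda(\fu,\phi(\uz))\circ\Dphi=\Dphi\circ\ell_u$, which indeed follows from the {\swim}-morphism property of $\Dphi$ by density of the image of $\As$ --- is sound and is essentially Theorem~\ref{thm:duality} in miniature. What your ordering buys is rigour at the point where the paper is most elliptical (the generator-level prescription of $\Lambda_1(f,m)$ really does extend to a homomorphism); what the paper's ordering buys is a presentation in which the definitions of $\Lambda_{11}$ and $\Lambda_{12}$ are \emph{motivated} by the language-theoretic computations, with the duality lemmas then appearing as the payoff rather than as a prerequisite.
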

It is now an easy verification that the maps $\Lambda_{11}(f,m)$ and
$\Lambda_{12}(f,m)$ are dual to the summands of the first component of
the action of $(f,m)$ on $\DsX$, and that $\Lambda_1(f,m)$ and
$\Lambda_2(f,m)$ are dual, respectively, to
\[
\lambda_1(f,m)\colon\wS X\times X\to\wS X,\quad (\mu,x)\ \mapsto\
m\mu+{\int}fx,
\]
and
\[
\lambda_2(f,m)\colon\wS X\times X\to X,\quad (\mu,x)\ \mapsto\ mx.
\]
\begin{lemma}\label{lem:11}
  The homomorphism $\Lambda_{11}(f,m)\colon\wB\to\wB$ given by
  $\SBp{K}{k}\mapsto \SBp{m^{-1}K}{k}$ is dual to the continuous
  function $\lambda_{11}(f,m)\colon\wS X\to \wS X$ given by
  $\mu\mapsto m\mu$, where
  \[
  m\mu\colon B\to S, \ K\mapsto \mu(m^{-1}K).
  \]
\end{lemma}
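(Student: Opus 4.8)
The plan is to read off the statement directly from Stone duality, using the fact that the Boolean algebra homomorphism dual to a continuous map $g$ of \kl{Boolean spaces} is the inverse-image map $C\mapsto g^{-1}(C)$ on clopens. So the whole lemma reduces to computing the preimage of a generating clopen of $\wB$ under $\lambda_{11}(f,m)$ and checking that it coincides with the prescribed value of $\Lambda_{11}(f,m)$.

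First I would set up the two ingredients. By Theorem~\ref{th:profinite-measures} I identify $\wS X$ with the space of $S$-valued \kl{measures} on $X$, whose dual algebra $\wB$ is generated by the subbasic clopens $\SBp{K}{k}$ for $K$ a clopen of $X$ and $k\in S$. By Lemma~\ref{lem:action of M} the assignment $\mu\mapsto m\mu$, with $(m\mu)(K)=\mu(m^{-1}K)$, is a well-defined \emph{continuous} self-map of $\wS X$; hence $\lambda_{11}(f,m)$ is genuinely a morphism in $\Stone$ and admits a dual homomorphism $\wB\to\wB$, namely its inverse-image map on clopens. Note, as a sanity check against the surrounding discussion, that this map depends only on $m$ and not on $f$, matching the claim in~\eqref{eq:11} that $\Lambda_{11}$ depends only on the second coordinate.

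The core step is then the preimage computation
\[
\lambda_{11}(f,m)^{-1}(\SBp{K}{k})=\{\mu\in\wS X\mid (m\mu)(K)=k\}=\{\mu\mid \mu(m^{-1}K)=k\}=\SBp{m^{-1}K}{k},
\]
where $m^{-1}K=\{x\in X\mid mx\in K\}$. The only point requiring care is that $\SBp{m^{-1}K}{k}$ is a legitimate generator of $\wB$, i.e.\ that $m^{-1}K$ is again clopen: this holds because the left-action component $x\mapsto mx$ is continuous (it is part of the {\swim} structure on $X$) and $K$ is clopen, so $m^{-1}K$ is the preimage of a clopen under a continuous map. This is the main, and essentially only, obstacle, and it is mild.

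Finally I would conclude by the usual "agreement on generators'' argument: the inverse-image map dual to $\lambda_{11}(f,m)$ and the map $\Lambda_{11}(f,m)$ are both Boolean algebra homomorphisms $\wB\to\wB$, and the displayed computation shows they agree on the generating family $\{\SBp{K}{k}\mid K\in B\text{ clopen},\ k\in S\}$; hence they coincide on all of $\wB$. Since Stone duality is a contravariant equivalence, this exhibits $\Lambda_{11}(f,m)$ as the dual homomorphism of $\lambda_{11}(f,m)$, which is precisely the assertion of the lemma.
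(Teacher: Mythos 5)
Your proposal is correct and follows essentially the same route as the paper: both proofs reduce the duality claim to checking that the preimage of a generating clopen $\SBp{K}{k}$ under $\lambda_{11}(f,m)$ is exactly $\SBp{m^{-1}K}{k}$, via the computation $(m\mu)(K)=k \iff \mu(m^{-1}K)=k$. Your additional remarks (continuity of $\mu\mapsto m\mu$ from Lemma~\ref{lem:action of M}, clopen-ness of $m^{-1}K$, and the agreement-on-generators step) only make explicit points the paper leaves implicit.
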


\begin{proof}
  The function $\lambda_{11}(f,m)$ is dual to $\Lambda_{11}(f,m)$ if,
  and only if, for all $\mu\in\wS X$ and all $\SBp{K}{k}\in\wB$ we
  have
  \[
  \lambda_{11}(f,m)\mu\in\SBp{K}{k}\quad\iff\quad\mu\in\Lambda_{11}(f,m)
  \SBp{K}{k}.
  \]
  But $\lambda_{11}(f,m)\mu=m\mu$, so
  \begin{align*}
    \lambda_{11}(f,m)\mu\in\SBp{K}{k}\quad&\iff\quad m\mu\in\SBp{K}{k}\\
    &\iff\quad m\mu(K)=k\\
    &\iff\quad \mu(m^{-1}K)=k\\
    &\iff\quad \mu\in\SBp{m^{-1}K}{k}= \Lambda_{11}(f,m),
  \end{align*}
  as was to be proved.
\end{proof}

\begin{lemma}\label{lem:12}
  The homomorphism $\Lambda_{12}(f,m)\colon\wB\to B$ given as
  in~\eqref{eq:12gen} is dual to the continuous function
  $\lambda_{12}(f,m)\colon X\to \wS X$ given by $x\mapsto \inte fx$,
  where
  \[ \inte fx \colon B\to S, \ K\mapsto
  \inte_{\hskip-5ptKx^{-1}}\hskip-7pt f.
  \]
\end{lemma}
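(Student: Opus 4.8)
The plan is to establish the asserted duality by checking the defining biconditional directly on the generators $\SBp{K}{k}$ of $\wB$. Recall that $\lambda_{12}(f,m)\colon X\to\wS X$ is dual to the homomorphism $\Lambda_{12}(f,m)\colon\wB\to B$ exactly when $\lambda_{12}(f,m)^{-1}(V)=\Lambda_{12}(f,m)(V)$ for every clopen $V$ of $\wS X$; since the preimage of a continuous map and the map $\Lambda_{12}(f,m)$ are both determined by their values on the generating family $\{\SBp{K}{k}\}$, it suffices to prove that for all $x\in X$, all clopen $K$ of $X$ and all $k\in S$,
\[
\lambda_{12}(f,m)(x)\in\SBp{K}{k}\ \iff\ x\in\Lambda_{12}(f,m)(\SBp{K}{k}).
\]

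First I would unwind the left-hand side. By definition $\lambda_{12}(f,m)(x)=\inte fx$, so membership in $\SBp{K}{k}$ amounts to $(\inte fx)(K)=k$. Using the formula $(\inte fx)(K)=\inte_{Kx^{-1}}f$ from the statement, together with $Kx^{-1}=\{z\mid zx\in K\}$, this says precisely that $\sum_{n\in{\rm Sup}(f),\, nx\in K}f(n)=k$, a single numerical condition. Next I would unwind the right-hand side via~\eqref{eq:12gen}: recalling that $n^{-1}K=\{x\mid nx\in K\}$, membership $x\in\Lambda_{12}(f,m)(\SBp{K}{k})$ holds iff there is some $I\subseteq{\rm Sup}(f)$ with $\inte_I f=k$ such that $nx\in K$ for every $n\in I$ and $nx\notin K$ for every $n\in{\rm Sup}(f)\setminus I$.

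The only real content of the argument — and the point I expect to need the most care — is the observation that for a fixed $x$ the cells $[\bigcap_{n\in I}n^{-1}K]\cap[\bigcap_{n\in I^c}n^{-1}K^c]$ appearing in~\eqref{eq:12gen} are pairwise disjoint and partition $X$ according to which $n\in{\rm Sup}(f)$ satisfy $nx\in K$. Consequently $x$ lies in at most one such cell, namely the one indexed by $I_x:=\{n\in{\rm Sup}(f)\mid nx\in K\}$, and it lies in that cell precisely when $\inte_{I_x}f=k$. This collapses the disjunction over all $I$ to a single condition on the distinguished index set $I_x$.

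Finally I would note that $\inte_{I_x}f=\sum_{n\in I_x}f(n)=\sum_{n\in{\rm Sup}(f),\, nx\in K}f(n)=(\inte fx)(K)$, so the right-hand condition reads $(\inte fx)(K)=k$, matching the left-hand side. This proves the biconditional on generators and hence the duality; continuity of $\lambda_{12}(f,m)$ need not be re-derived, as it was already recorded in Lemma~\ref{lem:SM-acts-X}. I anticipate no genuine obstacle beyond verifying rigorously that the index set $I$ in~\eqref{eq:12gen} is uniquely pinned down by $x$, which is exactly what lets the set-indexed union reduce to the scalar equation defining $\SBp{K}{k}$.
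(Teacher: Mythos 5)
Your proposal is correct and follows essentially the same route as the paper's proof: both verify the duality biconditional pointwise on the generating clopens $\SBp{K}{k}$, unwind $(\inte fx)(K)=k$ into the condition $\sum_{nx\in K}f(n)=k$, and match it against the union in~\eqref{eq:12gen} via the (unique) witnessing set $I_x=\{n\in{\rm Sup}(f)\mid nx\in K\}$. The only difference is presentational — you make explicit the disjointness of the cells and the uniqueness of $I_x$, which the paper leaves implicit in its ``if, and only if, there exists $I$'' step.
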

\begin{proof}
  Let $x\in X$ and $[K,k]\in\wB$. Then
  \begin{align*}
    \inte fx\in[K,k] \quad &\iff \quad \inte_K fx=k\\
    &\iff\quad \inte_{Kx} f=k\\
    &\iff\quad \sum_{x\in n^{-1}K} f(n)=k,
  \end{align*}
  and the latter is true if, and only if, there exists $I\subseteq
  {\rm Sup}(f)$ with $\inte _If=k$ and $x\in n^{-1}K$ for each $n\in
  I$ and $x\not\in n^{-1}K$ for each $n\not\in I$. That is,
  \[
  \inte fx\in[K,k] \quad \iff \quad x\in \Lambda_{12}(f,m)[K,k].
  \]
Therefore, the homomorphism $\Lambda_{12}(f,m)$ is dual to the continuous map $\lambda_{12}(f,m)$.
\end{proof}
\begin{lemma}\label{lem:1}
  The homomorphism $\Lambda_{1}(f,m)\colon\wB\to \wB+B$ given as in
  Proposition~\ref{prop:leftquotient} is dual to the continuous
  function $\lambda_{1}(f,m)\colon \wS X\times X\to \wS X$ given by
  $(\mu,x)\mapsto m\mu+\inte fx$.
\end{lemma}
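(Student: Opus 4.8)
The plan is to verify the duality relation directly on generators. Recall that a continuous map between \kl{Boolean spaces} is dual to the Boolean homomorphism given by taking preimages of clopens, and that the continuity of $\lambda_1(f,m)$ (as the first component of the left action of $\DsM$ on $\DsX$) is already known from Section~\ref{s:recog-transducers}; so only the duality relation remains to be checked. Since by Theorem~\ref{th:profinite-measures} the sets $\SBp{K}{k}$, with $K\subseteq X$ clopen and $k\in S$, generate $\wB$, it suffices to show that for every such generator the clopen $\lambda_1(f,m)^{-1}(\SBp{K}{k})$ of $\wS X\times X$ coincides with the clopen determined by $\Lambda_1(f,m)(\SBp{K}{k})\in\wB+B$. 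I would therefore fix $K$, $k$ and an arbitrary point $(\mu,x)\in\wS X\times X$, and compare the two membership conditions.

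First I would unpack the left-hand side. By definition $\lambda_1(f,m)(\mu,x)=m\mu+\inte fx$, so $\lambda_1(f,m)(\mu,x)\in\SBp{K}{k}$ means $(m\mu+\inte fx)(K)=k$. Since the semimodule addition on $\wS X$ is computed pointwise (Lemma~\ref{lem:addition}), this is equivalent to $(m\mu)(K)+(\inte fx)(K)=k$. Because $S$ is finite, this single equation in $S$ is in turn equivalent to the finite disjunction, over all pairs $k_1,k_2\in S$ with $k_1+k_2=k$, of the conjunction $(m\mu)(K)=k_1$ and $(\inte fx)(K)=k_2$.

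Next I would translate each conjunct using the two preceding lemmas. By Lemma~\ref{lem:11} the condition $(m\mu)(K)=k_1$ is exactly $\mu\in\SBp{m^{-1}K}{k_1}=\Lambda_{11}(f,m)\SBp{K}{k_1}$, and by Lemma~\ref{lem:12} the condition $(\inte fx)(K)=k_2$ is exactly $x\in\Lambda_{12}(f,m)\SBp{K}{k_2}$. Under the identification of $\wB+B$ with the clopen algebra of $\wS X\times X$, an element of $\wB$ corresponds to a cylinder of the form $(\text{--})\times X$ and an element of $B$ to a cylinder $\wS X\times(\text{--})$, so each meet $\Lambda_{11}(f,m)\SBp{K}{k_1}\wedge\Lambda_{12}(f,m)\SBp{K}{k_2}$ is the rectangle containing exactly those $(\mu,x)$ satisfying both conjuncts. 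Taking the join over $k_1+k_2=k$ and comparing with the definition of $\Lambda_1(f,m)$ in Proposition~\ref{prop:leftquotient} then yields that $(\mu,x)\in\Lambda_1(f,m)\SBp{K}{k}$ if and only if $\lambda_1(f,m)(\mu,x)\in\SBp{K}{k}$, which is the required duality.

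The computation is routine once the bookkeeping is in place; the only point requiring care is the passage through the coproduct $\wB+B$. One must keep straight that joins and meets in $\wB+B$ correspond to unions and intersections of cylinder clopens in $\wS X\times X$, and that the finiteness of $S$ is precisely what turns the single equation $(m\mu)(K)+(\inte fx)(K)=k$ into a genuine finite Boolean join indexed by the decompositions $k_1+k_2=k$. With these identifications settled, the statement follows immediately from Lemmas~\ref{lem:11} and~\ref{lem:12}.
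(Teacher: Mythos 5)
Your proof is correct and follows essentially the same route as the paper's: fix a point $(\mu,x)$ and a generator $\SBp{K}{k}$, split the equation $(m\mu)(K)+(\inte fx)(K)=k$ into the finite disjunction over decompositions $k_1+k_2=k$, and apply Lemmas~\ref{lem:11} and~\ref{lem:12} to each conjunct before reassembling via the definition of $\Lambda_1(f,m)$. The extra bookkeeping you supply (pointwise addition via Lemma~\ref{lem:addition}, the identification of $\wB+B$ with the clopens of $\wS X\times X$, and the reduction to generators) is left implicit in the paper but is exactly right.
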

\begin{proof}
  Let $(\mu,x)\in\wS X\times X$ and $[K,k]\in\wB$. Then
  \begin{align*}
    \lambda_1(f,m)(\mu,x)\in[K,k] \quad &\iff \
    \lambda_{11}(f,m)\mu\,+\,
    \lambda_{12}(f,m)x\in[K,k]\\
    &\iff\ \exists k_1,k_2\left(k_1+k_2=k,
      \lambda_{11}(f,m)\mu\in[K,k_1],
      \text{ and }\lambda_{12}(f,m)x\in[K,k_2]\right)\\
    &\iff\ \exists k_1,k_2\left(k_1+k_2=k,
      \mu\in\Lambda_{11}(f,m)[K,k_1],
      \text{ and }x\in\Lambda_{12}(f,m)[K,k_2]\right)\\
    &\iff\ (\mu,x)\in\Lambda_1[K,k],
  \end{align*}
  as was to be shown.
\end{proof}
\noindent It is straightforward that $\Lambda_2(f,m)\colon K \mapsto
m^{-1}K$ is dual to $\lambda_2(f,m)\colon x\mapsto mx$. 
Since the continuous map $\lambda_1(f,m)\times \lambda_2(f,m)\colon\DsX\to\DsX$ coincides with
the component at $(f,m)$ of the left action of $\DsM$ on $\DsX$ defined in Section \ref{sec:S-transduction} through matrix multiplication,
we conclude that
\begin{corollary}\label{cor:action-duality}
  The dual of the left quotienting operation $\Lambda$ on $\wB+ B$ defined in
  Proposition~\ref{prop:leftquotient} is the left action of
  $\DsM$ on $\DsX$ defined in Section \ref{sec:S-transduction}.\qed
\end{corollary}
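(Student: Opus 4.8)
The plan is to deduce the statement by assembling the componentwise dualities already established, using only the fact that the contravariant Stone equivalence interchanges products and coproducts. Fix $(f,m)\in\DsM$. With respect to the decomposition $\DsX=\wS X\times X$, the corresponding component of the left action is the map $\lambda(f,m)\colon\DsX\to\DsX$ whose two coordinates are precisely the maps $\lambda_1(f,m)\colon\wS X\times X\to\wS X$ and $\lambda_2(f,m)\colon\wS X\times X\to X$ introduced in Section~\ref{s:recog-transducers}; that is, $\lambda(f,m)$ is the pairing $\langle\lambda_1(f,m),\lambda_2(f,m)\rangle$ into the product $\wS X\times X$. So the whole left action is determined, coordinate by coordinate, by the $\lambda_1(f,m)$ and $\lambda_2(f,m)$.

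Now I would pass to duals. Since the dual of a product of \kl{Boolean spaces} is the coproduct of the dual Boolean algebras, the dual of $\DsX=\wS X\times X$ is indeed $\wB+B$, in accordance with the notation already in force. A continuous endomap of $\DsX$ therefore corresponds to a homomorphism $\wB+B\to\wB+B$, and by the universal property of the coproduct this homomorphism is uniquely determined by its two restrictions $\wB\to\wB+B$ and $B\to\wB+B$. These restrictions are exactly the duals of the two coordinates $\lambda_1(f,m)$ and $\lambda_2(f,m)$. By Lemma~\ref{lem:1} the dual of $\lambda_1(f,m)$ is $\Lambda_1(f,m)$, and by the observation immediately preceding this corollary the dual of $\lambda_2(f,m)$ is $\Lambda_2(f,m)\colon K\mapsto m^{-1}K$. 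Hence the homomorphism dual to $\lambda(f,m)$ is the unique one restricting to $\Lambda_1(f,m)$ on $\wB$ and to $\Lambda_2(f,m)$ on $B$, which is precisely $\Lambda(f,m)$ as defined in Proposition~\ref{prop:leftquotient}. Since $(f,m)$ was arbitrary, the operation $\Lambda$ is dual to the left action of $\DsM$ on $\DsX$.

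There is no real obstacle here: the content of the corollary lies entirely in the componentwise Lemmas~\ref{lem:11}, \ref{lem:12} and \ref{lem:1}, and this final step is just the bookkeeping that glues them together. The only point deserving a moment's care is that $\Lambda_1(f,m)$ and $\Lambda_2(f,m)$ really do arise as the restrictions of one well-defined homomorphism on the coproduct $\wB+B$, but this is automatic, since any pair of Boolean homomorphisms out of $\wB$ and out of $B$ into a common target copairs to a unique homomorphism out of $\wB+B$ — the exact dual of the fact that $\lambda(f,m)$ is recovered from its two coordinates into $\wS X\times X$.
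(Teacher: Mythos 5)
Your proof is correct and matches the paper's own (implicit) argument: the paper likewise treats the corollary as immediate bookkeeping, assembling Lemma~\ref{lem:1} and the observation about $\Lambda_2(f,m)$ via the duality between the product $\wS X\times X$ and the coproduct $\wB+B$. Your explicit appeal to the universal property of the coproduct is exactly the glue the paper leaves tacit.
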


A similar result holds of course for the right action, and the monoid operation of $\DsM$ can be recovered by restricting the actions on $\DsX$ to $\DsM$. As a
consequence, we have

\begin{theorem}\label{thm:duality}
  Let $\phi\colon(\kl{\beta}(\Ats),\Ats)\to (X,M)$ be a {\swim}
  \kl{morphism}.  Then the
  homomorphism of Boolean algebras with biactions
  \[
  \varphi\colon\wB+ B\to\P(\As), \ \ 
  \SBp{K}{k}\mapsto\phiQ^{-1}(\SBp{K}{k}), \ \ K\mapsto\phiz^{-1}(K),
  \]
  obtained by equipping $\wB+ B$ with the \kl{biaction} of $\DsM$ as
  indicated in Proposition~\ref{prop:leftquotient}, is dual to the {\swim} \kl{morphism}
  \[
  \Dphi\colon (\kl{\beta}(\As),\As)\to(\DsX,\DsM)
  \]
  derived in Section~\ref{sec:S-transduction}.\qed
\end{theorem}

\subsection{A Reutenauer theorem for $\DsX$}
In this last subsection we prove a Reutenauer-type result (see Theorem \ref{t:reutenauer} below)
characterising the Boolean algebra closed under quotients generated by
all languages \kl{recognised} by the space $\DsX$ through
\emph{\kl{length preserving}} \kl{morphisms}.
\begin{definition}
  \AP We call a {\swim} \kl{morphism}
  $\psi\colon(\kl{\beta}(\As),\As)\to(\DsX,\DsM)$ \intro{length
    preserving} provided, for each $a\in A$, we have that
  \[
  \pi_1\circ\psi(a)\colon M\to S
  \]
  is the characteristic function $\chi_{m_a}$ for some single $m_a\in
  M$. That is, $\pi_1\circ\psi(a)(m_a)=1$ and $\pi_1\circ\psi(a)(m)=0$
  for all $m\in M$ with $m\neq m_a$.
\end{definition}

Recall that, given any {\swim} \kl{morphism} $\phi\colon
(\kl{\beta}(\Ats),\Ats)\to(X,M)$, we obtain a {\swim} \kl{morphism}
\[
\Dphi\colon(\kl{\beta}(\As),\As)\to(\DsX,\DsM),\quad w\mapsto \bigg( \inte
\fw, \phi(\wz)\bigg).
\]
Upon defining $f_a:=\pi_1\circ\Dphi(a)$, we have $f_a=\chi_{m_a}$ where $m_a=\phi(a,1)$.
Hence, $\Dphi$ is \kl{length preserving}. It is now a matter of a
straightforward computation to prove the following proposition.
\begin{proposition}\label{prop:lengthpres}
  Let $X$ be a {\swim}. Every \kl{length preserving} {\swim}
  \kl{morphism} $(\kl{\beta}(\As),\As)\to(\DsX,\DsM)$ is of the form
  $\Dphi$ for some {\swim}
  \kl{morphism} $\phi \colon (\kl{\beta}(\Ats),\Ats) \to (X,M)$.
\end{proposition}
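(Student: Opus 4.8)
The plan is to recover the map $\phi$ on $(X,M)$ directly from the letter-by-letter data of the given morphism, and then to reduce the desired equality $\Dphi=\psi$ to an equality of monoid morphisms on the free generators $A$.

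First I would fix a \kl{length preserving} \swim \kl{morphism} $\psi\colon(\kl{\beta}(\As),\As)\to(\DsX,\DsM)$ and unpack its value on each generator $a\in A$. Writing $\psi(a)=(f_a,m'_a)\in\DsM=\Sm M\times M$, the \kl{length preserving} hypothesis says precisely that the first coordinate is $f_a=\pi_1\circ\psi(a)=\chi_{m_a}$ for a unique element $m_a\in M$; I also set $m'_a:=\pi_2\circ\psi(a)\in M$. Thus every letter $a$ determines two elements $m_a,m'_a$ of $M$, and these are the only data I shall need.

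Next I would use them to define a monoid morphism $\phi\colon\Ats\to M$ on the generators of the free monoid $\Ats=(A\times 2)^*$ by $\phi(a,1):=m_a$ and $\phi(a,0):=m'_a$, extended multiplicatively. Composing with $h\colon M\to X$ gives a function $\Ats\to X$ into a \kl{Boolean space}, hence into a compact Hausdorff space, which extends uniquely to a continuous map $\kl{\beta}(\Ats)\to X$ by the universal property of the \kl{Stone-\v{C}ech compactification}; the required compatibility with $h$ holds by construction. Hence this data constitutes a genuine \swim \kl{morphism} $\phi\colon(\kl{\beta}(\Ats),\Ats)\to(X,M)$, to which the construction of Section~\ref{sec:S-transduction} associates a morphism $\Dphi\colon(\kl{\beta}(\As),\As)\to(\DsX,\DsM)$.

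It then remains to verify that $\Dphi=\psi$. Both are \swim \kl{morphisms} with source $(\kl{\beta}(\As),\As)$, and by the remark following Definition~\ref{def:swim} the continuous component of such a morphism is uniquely determined by its monoid component, since $\As$ has dense image in $\kl{\beta}(\As)$. Hence it suffices to show that the two induced monoid morphisms $\As\to\DsM$ coincide, and as $\As$ is free on $A$ this need only be checked on each generator $a\in A$. Here one computes $\Dphi(a)$ from its definition: specialising $\fw$ to the one-letter word $w=a$, where $\len{a}=1$ and the unique marked word is $(a,1)$, gives first coordinate $f_a=1_S\cdot\phi(a,1)=\chi_{m_a}$, while the second coordinate $\phi(\wz)$ specialises to $\phi(a,0)=m'_a$; that is, $\Dphi(a)=(\chi_{m_a},m'_a)=\psi(a)$.

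The only point requiring care is this last matching of definitions: one must track that the single marked position of the one-letter word $a$ contributes exactly $\phi(a,1)=m_a$ to the $\Sm M$-coordinate, and that the unmarked word contributes $\phi(a,0)=m'_a$ to the $M$-coordinate. Beyond this bookkeeping there is no genuine obstacle, the argument resting entirely on the freeness of $\As$ and on the universal property of the \kl{Stone-\v{C}ech compactification}.
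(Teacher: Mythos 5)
Your proof is correct and follows essentially the same route as the paper's: define $\phi$ on generators by $(a,1)\mapsto m_a$ and $(a,0)\mapsto\pi_2\circ\psi(a)$, invoke the universal property of the Stone--\v{C}ech compactification for the topological component, and verify $\Dphi(a)=\psi(a)$ on the generators $a\in A$. Your additional remarks on why checking generators suffices (freeness of $\As$ plus density determining the continuous component) simply make explicit what the paper leaves implicit.
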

\begin{proof}
  Consider an arbitrary \kl{length preserving} {\swim} \kl{morphism}
  $\psi\colon(\kl{\beta}(\As),\As)\to(\DsX,\DsM)$. We define $\phi
  \colon (\kl{\beta}(\Ats),\Ats) \to (X,M)$ by
  \begin{align*}
    \phi\colon \Ats &\to M, \\
    (a,0) &\mapsto \pi_2\circ\psi(a)\\
    (a,1) &\mapsto m_a
  \end{align*}
  where $m_a\in M$ is such that $\pi_1\circ\psi(a)=\chi_{m_a}$. The
  universal property of the Stone-{\v C}ech compactification guarantees
  that $\phi$ is a {\swim} \kl{morphism} with the topological
  component $\wphi=\kl{\beta}\phi$. It now suffices to show that
  $\psi(a)=\Dphi(a)$ for each $a\in A$:
  \begin{align*}
    \Dphi(a)&=(f_a,\phiz(a))=(\chi_{\phi(a,1)},\phi(a,0))=(\chi_{m_a},\pi_2\circ\psi(a))=(\pi_1\circ\psi(a),\pi_2\circ\psi(a))=\psi(a).\qedhere
  \end{align*}
\end{proof}

\begin{theorem}\label{t:reutenauer}
  Let $X$ be a {\swim}, and $A$ a finite alphabet. The Boolean
  subalgebra closed under quotients of $\P(\As)$ generated by all
  languages over $A$ which are \kl{recognised} by a \kl{length
    preserving} {\swim} \kl{morphism} into $\DsX$ is generated as a
  Boolean algebra by the languages over $A$ \kl{recognised} by $X$,
  and the languages $\Qk(L)$ for $L$ a language over $A\times 2$
  \kl{recognised} by $X$.
\end{theorem}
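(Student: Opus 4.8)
The plan is to prove the two inclusions separately, after normalising the recognisers. Write $\mathcal C$ for the left-hand side, i.e.\ the closure under quotients of the Boolean algebra generated by all languages over $A$ \kl{recognised} by a \kl{length preserving} \kl{morphism} into $\DsX$, and write $\mathcal D={<}\mathcal{G}_0\cup\mathcal{G}_Q{>}_{\BA}$ for the right-hand side, where $\mathcal{G}_0$ is the family of languages over $A$ \kl{recognised} by $X$ and $\mathcal{G}_Q$ the family of languages $\Qk(L)$ with $k\in S$ and $L\subseteq\Ats$ \kl{recognised} by $X$. By Proposition~\ref{prop:lengthpres} every \kl{length preserving} \kl{morphism} into $\DsX$ is of the form $\Dphi$ for some \kl{morphism} $\phi\colon(\kl{\beta}(\Ats),\Ats)\to(X,M)$, so throughout I restrict attention to these.

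For the inclusion $\mathcal D\subseteq\mathcal C$ it suffices to place each generator of $\mathcal D$ in $\mathcal C$. If $L\subseteq\Ats$ is \kl{recognised} by $X$ via $\phi$, then by Theorem~\ref{th:recognition-quantified-languages} the language $\Qk(L)$ is \kl{recognised} by $\Dphi$, which is \kl{length preserving}; hence $\mathcal{G}_Q\subseteq\mathcal C$. If $L_0\subseteq\As$ is \kl{recognised} by a \kl{morphism} $\psi_0\colon(\kl{\beta}(\As),\As)\to(X,M)$, I extend $\psi_0$ to $\phi\colon(\kl{\beta}(\Ats),\Ats)\to(X,M)$ by setting $\phi(a,0):=\psi_0(a)$ and choosing $\phi(a,1)$ arbitrarily; the universal property of $\kl{\beta}$ then yields a \swim \kl{morphism}, and $\phiz=\phi\circ\embz=\psi_0$, so $L_0=\phiz^{-1}(K)=\Dphi^{-1}(\wS X\times K)$ is \kl{recognised} by the \kl{length preserving} \kl{morphism} $\Dphi$. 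Thus $\mathcal{G}_0\subseteq\mathcal C$, and since $\mathcal C$ is a Boolean algebra, $\mathcal D\subseteq\mathcal C$.

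The converse $\mathcal C\subseteq\mathcal D$ breaks into two parts. First I show that every language \kl{recognised} by some $\Dphi$ lies in $\mathcal D$. By Theorem~\ref{th:profinite-measures} the clopens of $\DsX=\wS X\times X$ are generated by products $\SBp{K}{k}\times K'$, and $\Dphi^{-1}(\SBp{K}{k}\times K')=\phiQ^{-1}(\SBp{K}{k})\cap\phiz^{-1}(K')$ is the conjunction of a language $\Qk(L)\in\QB$ (with $L=\phi^{-1}(K)\in\B$) and a language $\Lz\in\Bzero$. Since every $L\in\B$ is a language over $A\times 2$ \kl{recognised} by $X$, and every element of $\Bzero$ is a language over $A$ \kl{recognised} by $X$ (through $\phiz$), we have $\QB\subseteq{<}\mathcal{G}_Q{>}_{\BA}$ and $\Bzero\subseteq\mathcal{G}_0$; hence the Boolean algebra generated by all $\Dphi$-\kl{recognised} languages is contained in $\mathcal D$.

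The second, and crucial, part is to show that $\mathcal D$ --- defined only as a Boolean algebra --- is \emph{already} closed under quotients, so that passing to the quotient-closure on the left does not enlarge it beyond $\mathcal D$. Since left and right quotienting by a fixed word is a Boolean endomorphism of $\P(\As)$, it suffices to check the generators. For $L_0\in\mathcal{G}_0$, the languages \kl{recognised} by a fixed \kl{morphism} into $X$ form the preimage of the dual algebra $B$, which is closed under the quotient operations dual to the continuous \kl{biaction} of $X$; hence every quotient of $L_0$ again lies in $\mathcal{G}_0$. For $\Qk(L)\in\mathcal{G}_Q$, running the computation from the proof of Proposition~\ref{prop:B'} with a \kl{morphism} $\phi$ witnessing that $L$ is \kl{recognised} by $X$ expresses $u^{-1}\Qk(L)$ as a finite Boolean combination of languages $\Qkp((\uz)^{-1}L)$ --- again of the form $\Qkp(L')$ with $L'=(\uz)^{-1}L\subseteq\Ats$ still \kl{recognised} by $X$, because the $\phi$-\kl{recognised} languages are quotient-closed --- and of languages in $\Bzero\subseteq\mathcal{G}_0$; thus $u^{-1}\Qk(L)\in\mathcal D$, and symmetrically for right quotients. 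Therefore $\mathcal D$ is closed under quotients, whence $\mathcal C\subseteq\mathcal D$. The main obstacle is precisely this closure step: it is the global analogue, taken over all \kl{recognisers} at once, of the non-trivial Proposition~\ref{prop:B'}, and it is what forces the Reutenauer-type equality to hold without any further quotient-closure on the right.
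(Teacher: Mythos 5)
Your proof is correct and follows essentially the same route as the paper's: both inclusions rest on Proposition~\ref{prop:lengthpres}, Theorem~\ref{th:recognition-quantified-languages}, and the quotient-closure computation of Proposition~\ref{prop:B'}. The only difference is presentational: the paper funnels the converse inclusion through the quotient-closed algebra $\Bprime$ attached to each fixed $\phi$, whereas you verify quotient-closure of the right-hand side globally on its generators, thereby spelling out a bookkeeping step the paper leaves implicit.
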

\begin{proof}
  Let us denote by $\mathcal{B}''$ the Boolean algebra generated by
  the languages over $A$ \kl{recognised} by $X$, and the languages
  $\Qk(L)$ for $L$ a language over $A\times 2$ \kl{recognised} by $X$.

  If $L'\in\P(\As)$ is \kl{recognised} by a \kl{length preserving}
  {\swim} \kl{morphism}
  $\psi\colon(\kl{\beta}(\As),\As)\to(\DsX,\DsM)$, then by
  Proposition~\ref{prop:lengthpres} there is a {\swim} \kl{morphism}
  $\phi \colon (\kl{\beta}(\Ats),\Ats) \to (X,M)$ such that
  $\Dphi=\psi$. That is, $L'$ lies in the Boolean algebra called
  $\Bprime$ in the beginning of this section. Since $\Bprime\subseteq
  \mathcal{B}''$ by Proposition~\ref{prop:B'}, we have $L'\in
  \mathcal{B}''$.

  For the other inclusion, if $L$ is a language over $A\times 2$
  \kl{recognised} by $X$, then $\Qk(L)$ is \kl{recognised} by $\DsX$
  through a length preserving morphism in view of
  Theorem~\ref{th:recognition-quantified-languages}. Finally, suppose
  $L$ is a language over $A$ \kl{recognised} by
  $\eta\colon\kl{\beta}(\As)\to X$ through the clopen $K$.  Consider
  any function $\phi\colon\kl{\beta}(\Ats)\to M$ satisfying
  $\phi(a,0)=\eta(a)$ for each $a\in A$. Then $L=\Dphi^{-1}(\wS
  X\times K)$, showing that $L$ is \kl{recognised} by $\DsX$ through a
  length preserving morphism.
\end{proof}

\section{Conclusion}

In this paper we provide a general construction for recognisers which
captures the action of quantifier-like operations on languages of
words, drawing heavily on a combination of general categorical
tools and more concrete duality theoretic ones. 

This paper is a stepping stone in a long-term research programme
aimed at finding meaningful ultrafilter equations that characterise
logically defined classes of non-regular languages. The next step is
to understand the effect on equations of the constructions introduced
here.

The generic development of Section~\ref{s:recog-transducers} allows
this work to be extended to encompass a wider range of operations on
languages modelled by rational transducers which, by the
Kleene-Sch{\"u}tzenberger theorem (see, e.g., \cite{Sakarovitch09}),
admit a matrix representation. Also, the duality-theoretic account in
Section~\ref{sec:duality-expl} leads to a Reutenauer-type
characterisation theorem, akin to the one in~\cite{GehrkePR16}.  It
would also be interesting to understand a common framework for our
contributions and the recent work~\cite{ChenU16}.

 \section*{Acknowledgement} 
 We are grateful to Jean-{\'E}ric Pin for suggesting the transducer approach, to Thomas Colcombet for sharing his \texttt{knowledge} package, and to the DuaLL team for helpful discussions on duality and logic on words.

\bibliographystyle{plain}

\end{document}